\renewcommand{\underbar}{\underaccent{\bar}}
\newtheorem{theorem}{Theorem}[section]
\newtheorem{proposition}[theorem]{Proposition}
\newtheorem{lemma}[theorem]{Lemma}
\newtheorem{corollary}[theorem]{Corollary}
\newcommand{\mc}[1]{{\mathcal #1}}
\newcommand{\bb}[1]{{\mathbb #1}}
\renewcommand{\epsilon}{\varepsilon}
\newcommand{\wwidetilde}{\tilde}
\renewcommand{\widehat}{\hat}
\newcommand{\llangle}{\langle\!\langle}
\newcommand{\rrangle}{\rangle\!\rangle}
\newcommand{\spec}{\operatorname{spec}}
\newcommand{\Av}{\mathop{\operatorname{Av}}}
\newcommand{\bbI}{{\mathbb I}}
\newcommand{\bbP}{{\mathbb P}}
\newcommand{\bbR}{{\mathbb R}}
\newcommand{\bbT}{{\mathbb T}}
\newcommand{\bbZ}{{\mathbb Z}}
\renewcommand{\a}{\alpha}
\newcommand{\e}{\varepsilon}
\newcommand{\g}{\gamma}
\renewcommand{\k}{\kappa}
\newcommand{\p}{\pi}
\newcommand{\s}{\sigma}
\renewcommand{\t}{\tau}
\newcommand{\z}{\zeta}
\newcommand{\D}{\Delta}
\renewcommand{\L}{\Lambda}
\renewcommand{\O}{\Omega}
\newcommand{\upbar}[1]{\bar{#1}}
\newcommand{\downbar}[1]{\underbar{#1}}
\begin{document}
\begin{frontmatter}

\title{Large deviation principles for nongradient weakly asymmetric stochastic lattice gases}
\runtitle{LDP for weakly asymmetric stochastic lattice gases}

\begin{aug}
\author[A]{\fnms{Lorenzo} \snm{Bertini}\ead[label=e1]{bertini@mat.uniroma1.it}},
\author[A]{\fnms{Alessandra} \snm{Faggionato}\thanksref{t1}\ead[label=e2]{faggionato@mat.uniroma1.it}}\\
\and
\author[B]{\fnms{Davide} \snm{Gabrielli}\corref{}\thanksref{t2}\ead[label=e3]{gabriell@univaq.it}}
\runauthor{L. Bertini, A. Faggionato and D. Gabrielli}
\affiliation{Universit\`a di Roma La Sapienza, Universit\`a di Roma La
Sapienza and~Universit\`a~dell'Aquila}
\address[A]{L. Bertini\\
A. Faggionato\\
Dipartimento di Matematica\\
Universit\`a di Roma La Sapienza\\
P.le Aldo Moro 2, 00185 Roma\\
Italy\\
\printead{e1}\\
\phantom{E-mail: }\printead*{e2}}
\address[B]{D. Gabrielli\\
Dipartimento di Matematica\\
Universit\`a dell'Aquila\\
67100 Coppito, L'Aquila\\
Italy\\
\printead{e3}} 
\end{aug}

\thankstext{t1}{Supported by the European Research Council through the
``Advanced Grant'' PTRELSS 228032.}

\thankstext{t2}{Supported by the PRIN 20078XYHYS.}

\received{\smonth{11} \syear{2010}}

%
\begin{abstract}
We consider a lattice gas on the discrete $d$-dimensional torus
$(\bb Z/N \bb Z)^d$ with a generic translation invariant, finite
range interaction satisfying a uniform strong mixing condition. The
lattice gas performs a Kawasaki dynamics in the presence of a weak
external field $E/N$. We show that, under diffusive rescaling, the
hydrodynamic behavior of the lattice gas is described by a nonlinear
driven diffusion equation. We then prove the associated dynamical
large deviation principle. Under suitable assumptions on the
external field (e.g., $E$ constant), we finally analyze the
variational problem defining the quasi-potential and characterize
the optimal exit trajectory. From these results we deduce the
asymptotic behavior of the stationary measures of the stochastic
lattice gas, which are not explicitly known. In particular, when the
external field $E$ is constant, we prove a stationary large
deviation principle for the empirical density and show that the rate
function does not depend on~$E$.
\end{abstract}

%
\begin{keyword}[class=AMS]
\kwd[Primary ]{60K35}
\kwd{82C05}
\kwd[; secondary ]{60F10}
\kwd{82C22}.
\end{keyword}
\begin{keyword}
\kwd{Stochastic lattice gases}
\kwd{stationary nonequilibrium states}
\kwd{large deviations}.
\end{keyword}

\end{frontmatter}

\section{Introduction}\label{intro}

A classical topic in nonequilibrium statistical mechanics is the
analysis of stationary measures (steady states) for interacting
particle systems with driving fields. Here we focus on driven
diffusive systems, a typical example being the ionic conduction. As
microscopic model we consider high temperature stochastic lattice
gases with short range and translation invariant interaction
\mbox{\cite{ELS,GLMS,KLS,SZ,S}}. More precisely, let $\Lambda$ be a box in
$\bb Z^d$ that we consider with periodic boundary conditions. Each
site $x\in\Lambda$ can be either occupied or empty, the particle
configuration is therefore described by the occupation numbers
$\eta_x\in\{0,1\}$, $x\in\Lambda$. Consider a translation invariant
Gibbs measure $\mu_\Lambda$ with short range interactions on the
configuration space $\Omega_\Lambda=\{0,1\}^\Lambda$ and let
$H_\Lambda$ be the corresponding Hamiltonian so that
$\mu_\Lambda(\eta)\propto\exp\{-H_\Lambda(\eta)\}$. Note
that we
included the temperature in the definition of $H_\Lambda$. The
(symmetric) Kawasaki dynamics is then defined as a Markov chain on
$\Omega_\Lambda$ in which the allowed transitions are the exchanges
of the occupation numbers between nearest neighbor sites. The jump
rate $c^0_{x,y}$ associated to the bond $\{x,y\}$ satisfies the
\textit{detailed balance} condition with respect to the Hamiltonian
$H_\Lambda$, that is,
%
%
\begin{equation}
\label{detbal}
c^0_{x,y}(\eta^{x,y}) =
c^0_{x,y}(\eta) \exp\{\nabla_{x,y}H_\Lambda(\eta)\},
\end{equation}
where $\eta^{x,y}$ is the configuration obtained from $\eta$ by
exchanging the occupation numbers in $x$ and $y$ and $\nabla_{x,y}
H_\Lambda(\eta) = H_\Lambda(\eta^{x,y}) -H_\Lambda(\eta)$.

We regard the symmetric Kawasaki dynamics as the reference system and
model the effect of a driving field $E$ by replacing the reference
rates $c^0$ with the (asymmetric) rates $c^E$ satisfying the
\textit{local detailed balance} condition. In the case of a constant
driving field $E$ this condition reads
%
%
\begin{eqnarray}
\label{1cE}
c_{x,y}^E (\eta^{x,y} ) &=&
c_{x,y}^E (\eta) \exp\{ W_{x,y}(\eta) \},
\nonumber\\[-8pt]\\[-8pt]
W_{x,y} (\eta) &=&
\nabla_{x,y} H_\Lambda(\eta) + (\eta_y-\eta_x) E\cdot(y-x),\nonumber
\end{eqnarray}
where $\cdot$ is the inner product in $\bb R^d$. Observe that
$W_{x,y}$ is the total work done in the exchange of $\eta_x$ and
$\eta_y$. When the driving field $E$ is not constant, the right-hand
side of the second equation in (\ref{1cE}) has to be properly
modified. We remark that, in view of the periodic boundary
conditions, a nonvanishing constant field is not conservative and
therefore (\ref{1cE}) does not lead to a Gibbsian form of the invariant
measures. We assume that the rates $c^E$ are strictly positive.

The total number of particles $N_\Lambda=\sum_{x\in\Lambda}\eta_x$
is conserved by the Kawasaki dynamics. In view of the strict
positivity of the transition rates, for each integer
$K=0,\ldots,|\Lambda|$ the chain is irreducible on the subset
$\Omega_{\Lambda,K}$ of the configuration space with $K$ particles.
Therefore, on $\Omega_{\Lambda,K}$ there exists a unique invariant
measure that we denote by $\nu_{\Lambda,K}^E$. If $E=0$, by the
detailed balance condition (\ref{detbal}), $\nu_{\Lambda,K}^0$~is
the canonical measure corresponding to the Hamiltonian $H_\Lambda$,
that is, it is the measure $\mu_\Lambda$ conditioned to $\{N_\Lambda
=K\}$. For nonvanishing\vspace*{1pt} driving fields $E$, a main issue is to
understand the behavior of the measure $\nu_{\Lambda,K}^E$ in the
\textit{thermodynamic limit} $\Lambda\nearrow\bb Z^d$, $K\to\infty$
with $K/|\Lambda|\to\upbar{\rho}\in[0,1]$. About this problem there
are only few rigorous results and not much is known. In the case of
constant driving field, there are, however, some quite interesting
conjectures that we next briefly recall.

Let $\tau_x\dvtx\Omega_\Lambda\to\Omega_\Lambda$ be the translation by
$x$, the symmetric rates $c^0$ satisfy the \textit{gradient condition}
if for each bond $\{x,y\}$
%
%
\begin{equation}
\label{1grad}
c^0_{x,y}(\eta) (\eta_x-\eta_y)= h( \tau_x\eta) -h(\tau_y \eta)\vadjust{\goodbreak}
\end{equation}
for some local function $h\dvtx\Omega_\Lambda\to\bb R $.
As shown in~\cite{KLS}, if the symmetric rates $c^0$ satisfy the
gradient condition, then $\nu_{\Lambda,K}^E$ does not depend on the
driving field and therefore coincides with the canonical Gibbs measure
associated to the Hamiltonian $H_\Lambda$. In the case of the
exclusion process, for which $H_\Lambda=0$, the previous statement
corresponds to the fact that the uniform measure on
$\Omega_{\Lambda,K}$ is reversible in the symmetric case and invariant
in the asymmetric one. On the other hand, the gradient condition is
quite restrictive (\cite{S}, Section~II.2.4), and the generic picture is
believed to be qualitatively different. In particular, as conjectured
in~\cite{GLMS} and~\cite{S}, Section~II.1.4, for nongradient models the
following behavior is expected (recall we are only concerned with the
high temperature regime):
\begin{longlist}[(iii)]
\item[(i)]
for each density $\upbar{\rho}\in[0,1]$ there exists a unique translation
invariant thermodynamic limit of the sequence
$\{\nu_{\Lambda,K}^E\}$ that we denote by $\nu_{\upbar{\rho}}^E$;
\item[(ii)]
in dimension $d=1$ the measure $\nu_{\upbar{\rho}}^E$ has exponentially
decaying correlations;
\item[(iii)]
in dimension $d\ge2$ the pair correlation of $\nu_{\upbar{\rho}}^E$ decays
as a power law.
\end{longlist}
As far as we know, there are no clear expectations whether the measure
$\nu_{\upbar{\rho}}^E$ is Gibbsian or not (see, however, the result in
\cite{Am}).

We here analyze the asymptotic behavior of the sequence
$\{\nu_{\Lambda,K}^E\}$ in a \textit{scaling limit} setting. Given the
$d$-dimensional torus $\bb T^d =\bb R^d/\bb Z^d$ (which we regard as
the macroscopic domain) and a scaling parameter $N$, we take as
microscopic domain the box in $\bb Z^d$ with side length $N$ and
periodic boundary conditions that we denote by $\bb T^d_N$. In view of
the natural embedding $x\mapsto x/N$, the set $\bb T^d_N$ can be
regarded as a discrete approximation of $\bb T^d$. We then fix a
macroscopic field $E$ on $\bb T^d$ and let $E_N=E/N$ be its microscopic
counterpart. In this setting, the corresponding Kawasaki dynamics is
called \textit{weakly asymmetric}. To each configuration
$\eta\in\Omega_{\bb T^d_N}$ we associate the piecewise constant
function $\pi^N(\eta)$ on $\bb T^d$ which is equal to $\eta_x$ on the
cube $x/N +[0,1/N)^d$, $x\in\bb T^d_N$. The map $\pi^N$ from
$\Omega_{\bb T^d_N}$ to the set of functions on $\bb T^d$ is called
\textit{empirical density}. Given $\upbar{\rho}\in[0,1]$ and a sequence
$\{K_N\}$ such that $K_N/N^d\to\upbar{\rho}$, we let $P^E_N$ be the law
of the empirical density when the configuration $\eta$ is sampled
according to $\nu_{\bb T^d_N,K_N}^{E_N}$, namely,\vspace*{1pt} $P^E_N=
\nu_{\bb T^d_N,K_N}^{E_N} \circ(\pi^N)^{-1}$. The original question is
then formulated in terms of the asymptotic behavior of the sequence
$\{P^E_N\}$ as $N\to\infty$. In this paper, we describe this behavior
by proving the corresponding large deviation principle. In the case of
constant driving field, the rate functional can be directly expressed
in terms of the thermodynamic free energy of the reference system. In
particular, it does not depend on the driving field and coincides with
the one associated to the sequence of canonical Gibbs measures
$\{\nu_{\bb T^d_N,K_N}^0\}$. This result shows that, as far as
stationary large deviations of the empirical density are concerned,
weakly asymmetric nongradient stochastic lattice gases behave as
gradient models. We obtain an explicit formula for the rate function
also for nonconstant driving field provided a suitable orthogonality
condition holds. We emphasize that the choice of the periodic boundary
conditions is crucial for the above result. Indeed, as shown in
\cite{Ba,BDGJLn+1,BGL,ED}, for one-dimensional (gradient) weakly
asymmetric boundary driven stochastic lattice gases the presence of a
driving field, even in the weakly asymmetric regime, does effect the
stationary rate function.

The basic strategy of the proof follows the dynamical/variational
approach introduced in~\cite{BDGJL2}. This amounts to first analyzing
the dynamical behavior of the weakly asymmetric Kawasaki process in a
fixed macroscopic time interval. The dynamical law of large numbers
for the empirical density is called the hydrodynamic scaling
limit and it is described as follows. If at time $t=0$ the
empirical density converges to some function $\gamma\dvtx\bb T^d \to[0,1]$,
then at later time it converges to the solution $u\equiv u_t(r)$,
$(t,r)\in\bb R_+\times\bb T^d$ of the nonlinear driven diffusion
equation
%
%
\begin{equation}
\label{hyeq}
\partial_t u + \nabla\cdot[ \sigma(u) E ]
= \nabla\cdot[ D(u) \nabla u ]
\end{equation}
with initial datum $u_0=\gamma$. In the above equation, the
\textit{diffusion coefficient} $D$ and the \textit{mobility} $\sigma$
are $d\times d$ matrices which are characterized in terms of the
symmetric dynamics. The proof of the hydrodynamic limit extends the
one given in~\cite{VY} for $E=0$. Given $\upbar{\rho}\in[0,1]$ we
denote by $\gamma^E_{\upbar{\rho}}\dvtx\bb T^d\to[0,1]$ the stationary
solution to (\ref{hyeq}) with total mass equal to $\upbar{\rho}$ and
observe that for constant $E$ we simply have
$\gamma^E_{\upbar{\rho}}=\upbar{\rho}$. Of course, as $N\to\infty
$ the
sequence $\{P^E_N\}$ weakly converges to the Dirac measure
concentrated in $\gamma^E_{\upbar{\rho}}$.

The next step is to prove the dynamical large deviation principle
associated to the hydrodynamic limit, that is, to compute the
asymptotic probability that the empirical density follows some trajectory
different from the solution to (\ref{hyeq}).
For gradient stochastic lattice gases, this has been proven for several
models (see, e.g.,~\cite{KL,KOV}). For nongradient models, the proof of
the dynamical large deviation principle is technically much more
involved and it has been achieved in~\cite{Q} for one-dimensional
Ginzburg--Landau models (see also~\cite{QRV}).
The basic approach to prove such a large deviation principle is the
one set forth in~\cite{Vld} which requires us to construct a suitable
perturbation of the original measure. For gradient lattice gases this
perturbation is obtained by modifying the driving field in such a way
that the fluctuation becomes the typical behavior. In the nongradient
case this is not enough and an additional nonlocal correction is
needed~\cite{Q}. Since our model is not restricted to one dimension
and its invariant measures are not product, we have new technical
issues with respect to the case studied in~\cite{Q}. The conclusion
is that the law of the empirical density in the macroscopic time
interval $[T_1,T_2]$ satisfies a large deviation principle with some
rate function $I^E_{[T_1,T_2]}(\cdot|\gamma)$ (here $\gamma\dvtx\bb
T^d\to
[0,1]$ is the macroscopic density at time $T_1$).

The final step is the analysis of the \textit{quasi-potential}~\cite{FW}
associated to the dynamical rate function $I^E_{[T_1,T_2]}(\cdot
|\gamma)$.
Given $\upbar{\rho}\in[0,1]$, this is the functional on the set of
functions $\rho\dvtx\bb T^d\to[0,1]$ defined by
\begin{eqnarray*}
V^E_{\upbar{\rho}}(\rho) &=& \inf_{T>0} \inf
\bigl\{I^E_{[-T,0]} (\pi| \gamma_{\upbar{\rho}} ),
\pi\dvtx [-T,0]\times\bb T^d\to[0,1]
\\
&&\hspace*{91pt} \mbox{such that }
\pi_{-T} =\gamma_{\upbar{\rho}}, \pi_0 =\rho\bigr\}.
\end{eqnarray*}
In particular, $V^E_{\upbar{\rho}}(\rho)$ is the minimal cost to
produce the fluctuation $\rho$ starting from the stationary solution
$\gamma_{\upbar{\rho}}$. In view of the conservation of the total
number of particles, $V^E_{\upbar{\rho}}(\rho)$ is finite only if the
total mass of $\rho$ is $\upbar{\rho}$.
As proven in~\cite{FW} for diffusion processes on $\bb R^n$ and in
\cite{BG,F} in the present case of stochastic lattice gases, the
quasi-potential $V^E_{\upbar{\rho}}$ is the large deviations rate
function of the sequence $\{P^E_N\}$.

We here show that the quasi-potential can be expressed in terms of
the thermodynamic free energy associated to the Hamiltonian
$H_\Lambda$ and characterize explicitly the optimal path realizing a
given fluctuation. The key observation is the following. Let
$\chi(\rho)$ be the \textit{compressibility} of the system (this is a
thermodynamic quantity which coincides with the reciprocal of the
second derivative of the free energy). Then the transport
coefficients in the hydrodynamic equation (\ref{hyeq}) satisfy the
\textit{Einstein relationship} $\sigma(\rho)=D(\rho) \chi(\rho)$
(\cite{S}, Section II.2.5); observe that while $D$ and $\sigma$ are
matrices, $\chi$ is a scalar. The Einstein relationship implies
that the vector field describing the flow given by the hydrodynamic
equation (\ref{hyeq}) admits an orthogonal decomposition with
respect to the metric associated to the dynamical large deviation
rate function. The characterization of the quasi-potential is then
achieved by using an argument analogous to the one for diffusion
processes in $\bb R^n$ (see~\cite{FW}, Theorem 4.3.1).

\section{The model}
\label{s2}

In this section we fix the notation (recall some basic concepts about
Gibbs measures) and define the weakly asymmetric Kawasaki dynamics.

\subsection{The lattice and the configuration space}
\label{notazione}

On $\bb R^d$ and on the $d$-dimen\-sional cubic lattice $\bb Z^d$ we
consider the norm $|x|:=|x|_\infty=\max_{i=1,\ldots,d} |x_i|$; we
denote by $d(\cdot,\cdot)$ the associated distance. The diameter of a
set $V\subset\bb Z^d$ with respect to $d(\cdot,\cdot)$ is denoted by
$\operatorname{diam} (V)$. Given $\ell\geq0$ and $x\in\bb Z^d$, we set
$\Lambda_{x,\ell} =\{y \in\bb Z^d\dvtx |y-x|\leq\ell\}$ and write
simply $\Lambda_\ell$ if $x=0$. The canonical basis, both in $\bb Z^d$
and in $\bb R^d$, is denoted by $e_1,\ldots,e_d$. If $\Lambda$ is a
finite subset of $\bb Z^d$, we write $\Lambda\subset\subset\bb Z^d$
and denote by $|\Lambda|$ the cardinality of $\Lambda$. The collection
of all finite subsets of $\bb Z^d$ is denoted by $\bb F$. Given an
integer $N$, we let $\bb T_N:= \bb Z / N \bbZ=\{0,\ldots, N-1\}$ so
that $\bb T_N^d $ is the discrete\vadjust{\goodbreak} $d$-dimensional torus of side length
$N$. Given $\Lambda\in\bb F$ and $\phi\dvtx\Lambda\to\bb R$ we let
$\Av_{x\in\Lambda} \phi(x):= |\Lambda|^{-1}\sum_{x\in\Lambda}
\phi(x)$ be the \textit{average} of $\phi$. The average over $\bb T_N^d$
is simply denoted by $\Av_{x}$. The \textit{bonds} in $\bb Z^d$ are the
(unordered) pairs $\{x,y\}$ with $x,y\in\bb Z^d$ such that $y=x\pm
e_i$ for some $i=1,\ldots,d$. The collection of all bonds in $\bb
Z^d$ is denoted by $\bb B$. Given $\Lambda\subset\bb Z^d$, we let
$\bb
B_\Lambda:=\{ b \in\bb B\dvtx b\subset\Lambda\}$ be the
collection of bonds in $\Lambda$ and denote by $\bb B_N$ the
collection of bonds in~$\bb T_N^d$.

Given $\Lambda\subset\bb Z^d$, the \textit{configuration space} in
$\Lambda$ is the set $\Omega_\Lambda:=\{0,1\}^\Lambda$; we also let
$\Omega:= \Omega_{\bb Z^d}$ and $\Omega_N:=\Omega_{\bb T_N^d}$. For
$V\subset\Lambda\subset\bb Z^d$ and $\eta\in\Omega_\Lambda$, the
natural projection of $\Omega_\Lambda$ to $\Omega_V$ is denoted by
$\eta_V$; we also write $\eta_x$ for $\eta_{\{x\}}$, $x\in\Lambda$. A
configuration $\eta\in\O_\L$ describes the microscopic state of the
lattice gas; a site $x \in\L$ is occupied by a particle if and only if
$\eta_x =1$. We consider the single spin space $\{0,1\}$ endowed with
the discrete topology and $\Omega_\Lambda$ with the product topology.
Given $\Lambda\subset\bb Z^d$, we let $\mc F_\Lambda$ be the
$\sigma$-algebra on $\Omega$ generated by the one-dimensional
projections $\eta_x$, $x\in\Lambda$. We also set $\mc F:=\mc F_{\bb
Z^d}$ and note it coincides with the Borel $\sigma$-algebra associated
to the product topology. If $V_1,V_2\subset\bb Z^d$ are disjoint, we
denote by $\eta_{V_1}\eta_{V_2}$ the configuration in $\Omega_{V_1\cup
V_2}$ equal to $\eta_{V_i}$ in $V_i$, $i=1,2$. For $V\subset\Lambda$,
$V\in\bb F$, the \textit{number of particles} $N_V\dvtx\Omega_\Lambda \to\bb
Z_+$ is the function $N_V(\eta):=\sum_{x\in V} \eta_x$, while the
\textit{density} $\bar{\eta}_V\dvtx\Omega_\Lambda\to[0,1]$ is $\bar{\eta}_V:=
\Av_{x\in V} \eta_x$. If $V=\Lambda_{x,\ell}$ for some $x\in\bb Z^d$
and $\ell\in\bb N$, the density in $\Lambda_{x,\ell}$ is simply denoted
by $\bar{\eta}_{x,\ell}$ omitting the subscript $x$ when $x=0$. The
same notation holds when referred to the discrete torus $\bb T ^d_N$.

Given $x\in\bb Z^d$, respectively, $x\in\bb T_N^d$, we define the
\textit{shift} $\tau_x\dvtx \Omega\to\Omega$, respectively, $\tau_x\dvtx
\Omega_N \to\Omega_N$ by $(\tau_x\eta)_y:= \eta_{y+x}$. The map
$\tau_x$ is naturally lifted to functions by setting $(\tau_x f)
(\eta):= f( \tau_{x} \eta)$.
Given $i,j=1,\ldots,d$, $i\neq j$, we denote by $R^{i,j}$ the rotation
by $\pi/2$ in the plane spanned by $e_i,e_j$,
that is, $R^{i,j}(\ldots,x_i,\ldots,x_j,\ldots) =
(\ldots,-x_j,\ldots,x_i,\ldots)$. We denote by $\mc R$ the collection
of all such rotations. Given $R\in\mc R$, the map $x\mapsto R x$ is
naturally lifted to configurations and functions by setting $ ( R
\eta)_x:= \eta_{R x}$ and $R f (\eta):= f(R \eta)$.
Given a function $f\dvtx \Omega\rightarrow\bbR$, its so-called
\textit{support} $\Delta_f$ is the smallest subset $V\subset\bb Z^d$
such that $f$ depends on $\eta$ only through the projection $\eta_V$.
If $\Delta_f\in\bb F$, the function $f$ is called \textit{local}.
Given a local function $f$, we let $\downbar{f}$ be the formal series
%
%
\begin{equation}
\label{2gvec}
\downbar{f}:= \sum_{x\in\bb Z^d} \tau_x f.
\end{equation}

\subsection{Gibbs measures}
\label{amicogibbs}

In this paper, by an \textit{interaction}, we mean a finite range,
translation invariant interaction as defined below.
%
%
\begin{definition}
\label{tint}
An \textit{interaction} $\Phi$ is a collection of real-valued local
function $\{\Phi_V\dvtx \Omega\to\bb R, V\in\bb F, |V|\ge2
\}$ such that:
\begin{longlist}[(iii)]
\item[(i)]for each $V\in\bb F$ with $|V|\ge2$
the support of $\Phi_V$ is $V$;
\item[(ii)]there exists $r_0\in\bb N$ called \textit{range} such
that $\Phi_{V}=0$ if $\operatorname{diam}(V) > r_0$;
\item[(iii)]for each $V\in\bb F$ with $|V|\ge2$ and
$x\in\bb Z^d$ we have $\tau_x\Phi_V=\Phi_{V+x}$.\vadjust{\goodbreak}
\end{longlist}
In some statements we also assume that the interaction is
\textit{isotropic}, that is, it satisfies:
\begin{longlist}[(iv)]
\item[(iv)]{for each $V\in\bb F$ with $|V|\ge2$ and each $R\in\mc
R$ we have $R \Phi_V=\Phi_{R V}$.}
\end{longlist}
\end{definition}

Given an interaction $\Phi$, a parameter $\lambda\in\bbR$ (called
\textit{chemical potential}), and a set $\Lambda\in\bb F$, we define the
\textit{Hamiltonian} $H_\Lambda^\lambda\dvtx\Omega\to\bb R$ by
%
%
\begin{equation}
\label{hamilton}
H_\L^\lambda( \eta):=
\sum_{V\dvtx V\cap\Lambda\neq\varnothing} \Phi_V(\eta)
+ \lambda\sum_{x \in\L} \eta_x,
\end{equation}
dropping the superscript in the case $\lambda=0$.
Given $\sigma\in\Omega$, called \textit{boundary condition}, we also set
$H_\Lambda^{ \lambda,\sigma}(\eta):= H_\Lambda^{ \lambda}(\eta
_\Lambda
\sigma_{\Lambda^\complement})$.
To the Hamiltonian $H_\L^\lambda$ and the boundary condition $\s$ we
associate the finite volume (grand-canonical) \textit{Gibbs measure} in
$\Lambda$, defined as the probability measure on $(\Omega,\mc F)$
given by
%
%
\begin{equation}
\label{2fvgm}
\mu_\Lambda^{ \lambda,\sigma} (\eta):=
\cases{
(Z_\Lambda^{ \lambda,\sigma})^{-1} \exp\{ -
H_\Lambda^{ \lambda,\sigma}(\eta) \},
&\quad if
$\eta_{\Lambda^\complement}= \sigma_{\Lambda^\complement}$,\vspace*{1pt}\cr
0, &\quad otherwise,}
\end{equation}
where the constant $Z_\Lambda^{ \lambda,\sigma}$, called
\textit{partition function}, is the proper normalization.
In addition, the \textit{canonical Gibbs measure} associated to the
interaction $\Phi$, boundary condition $\s$ and particle number $K\in
\{0,1, \ldots, |\L|\}$, is the probability measure on $(\Omega,\mc F)$
given by
%
%
\begin{equation}
\label{misurecanoniche}
\nu_{\L,K} ^\s(\cdot): =
\mu^{\lambda,\s} _\L( \cdot| N_\L=K),
\end{equation}
noticing that this measure does not depend on the chemical potential
$\lambda$. In the case of periodic boundary conditions, $\Lambda=\bb
T_N^d$, we denote the Hamiltonian, which has, of course, no boundary
condition, as $H_N^{\lambda}$ and by $Z_N^{ \lambda}$ the corresponding
partition function. The associated grand-canonical and canonical
Gibbs measures are denoted by $\mu_N^{\lambda}$ and $\nu_{N,K}$,
respectively. Finally, we write $\mu_N$, $H_N$ instead of
$\mu_N^0$,~$H_N^0$, respectively.

Given a probability measure $\mu$ and bounded measurable functions
$f,g$, we denote by $\mu(f)$ the expectation of $f$ with respect to
$\mu$ and by $\mu(f;g):= \mu(f g) - \mu(f) \mu(g)$ the
covariance, or pair correlation, between $f$ and $g$. Given a bounded
measurable function $f\dvtx\Omega\to\bb R$ and a set $\Lambda\in\bb F$,
we denote by $\mu_\Lambda^{\lambda,\cdot}(f)$ the real function
$\Omega\ni
\sigma\mapsto\mu_\Lambda^{\lambda,\sigma}(f)$. As simple to check,
the finite volume Gibbs measures defined in (\ref{2fvgm}) satisfy the
compatibility conditions
\[
\mu_\Lambda^{\lambda,\sigma}
( \mu_{\Lambda'}^{\lambda,\cdot}(f) ) =
\mu_\Lambda^{\lambda,\sigma} (f)\qquad
\forall\mbox{ local } f, \forall
\Lambda'\subset\Lambda\in\bb F.
\]

The definition of infinite volume Gibbs measure is then given in terms
of the so-called DLR equations as follows.
%
%
\begin{definition}
\label{t22}
Given $\lambda\in\bb R$, a probability measure $\mu$ on $(\Omega
,\mc
F)$ is called an \textit{infinite volume Gibbs measure} with chemical
potential $\lambda$ iff
%
%
\begin{equation}
\label{2dlr}
\mu( \mu_\Lambda^{\lambda,\cdot}(f) ) = \mu(f)\qquad
\forall\mbox{ local } f,
\forall\Lambda\in\bb F.
\end{equation}
\end{definition}

The compactness of $\Omega$ readily implies that the set of
(infinite volume) Gibbs measure is not empty. The nonuniqueness of
solutions to the DLR equations (\ref{2dlr}) corresponds to phase
transitions.
As stated in the \hyperref[intro]{Introduction}, our analysis is restricted to the
high temperature regime. This is specified by a \textit{uniform
strong mixing condition} on the interaction $\Phi$. Referring to~\cite{CM}
for the precise formulation, this condition basically
requires that the pair correlation
$\mu_{\Lambda}^{\lambda,\sigma}(f;g)$ between two local functions
$f$ and $g$ decays exponentially fast in the distance between their
supports $\Delta_f$ and $\Delta_g$. This decay is required to be
uniform with respect the volume $\Lambda$, the boundary condition
$\sigma$ and the chemical potential $\lambda$. To be precise, one
also needs to allow chemical potentials which are not constant. As
it is easy to show, the uniform strong mixing condition implies that
for each $\lambda\in\bb R$ there exists a unique infinite volume
Gibbs measure $\mu^\lambda$. Moreover, $\mu^\lambda$ has exponential
decay of pair correlations. In the one-dimensional case $d=1$,
standard transfer matrix arguments show that the uniform strong
mixing condition is always satisfied (recall that the interaction
has finite range). For the standard Ising model in $d=2$, the
results in~\cite{BCO,SS} imply that the uniform strong mixing
condition is satisfied for any supercritical temperature. Finally,
the uniform strong mixing condition holds if the single site
Dobrushin criterion (\cite{Err}, Section 3.2) is satisfied uniformly in the
chemical potential $\lambda$. In particular, it holds if the
interaction $\Phi$ is small enough, that is, in the high temperature
regime. Throughout this paper we assume that the interaction
$\Phi$ satisfies the uniform strong mixing condition as stated in
\cite{CM}, \textit{Property USMT} there, without further mention.

Fix a configuration $\sigma\in\Omega$ and a sequence
$\{\Lambda_n\}$ of sets in $\bb F$ invading $\bb Z^d$ such that
$\lim_{n\to\infty} |\partial^+_{r_0} \Lambda_n| /|\Lambda_n| =0$,
where $r_0$ is the range of the interaction and $\partial^+_{r_0}
\Lambda:=\{x \in\Lambda^\complement\dvtx d(x,\Lambda)
\le r_0\}$. A classical result in statistical mechanics (see,
e.g.,~\cite{Err}, Section 2.3), states that the \textit{pressure},
$p\dvtx\bb R\to
\bb R$,
\[
p(\lambda):= \lim_{n} \frac1{|\Lambda_n|}
\log Z_{\Lambda_n}^{\lambda,\sigma}
= \lim_{N\to\infty} \frac{1}{N^d} \log Z_N^\lambda
\]
is well defined, that is, the limits exist (the first is also
independent of $\sigma$ and the sequence $\{\Lambda_n\}$), and
are convex. In view of the uniform strong mixing condition (see
\cite{SS} and reference therein), the pressure $p$ is uniformly
convex and real analytic.
The \textit{free energy} $f\dvtx[0,1]\to\bb R$ is defined as the Legendre
transform of $p$, namely,
%
%
\begin{equation}
\label{2freee}
f(\rho):= \sup\{ \lambda\rho- p(\lambda),
\lambda\in\bb R \},
\end{equation}
which is a continuous uniformly convex function in $[0,1]$ and
real analytic in $(0,1)$.
Moreover, as $\rho\uparrow1$ and $\rho\downarrow0$, we have
$f'(\rho)\uparrow+\infty$ and $f'(\rho)\downarrow-\infty$,
respectively.
Given $\rho\in[0,1]$, let $\mu_\rho:= \mu^{f'(\rho)}$ be the (unique)
infinite volume Gibbs measure with chemical potential $f'(\rho)$. We
understand that $\mu_0$ and $\mu_1$ are, respectively, the Dirac
measures in the configurations identically equal to zero and one.
From the definition of the free energy and the regularity of $p$, we
then have $\mu_\rho(\eta_x)=\rho$, so that $\rho$ is the density.
We also define the \textit{compressibility} $\chi\dvtx[0,1] \to\bb
[0,\infty)$ as
%
%
\begin{equation}
\label{2chi}
\chi(\rho):= \sum_{x\in\bb Z^d} \mu_\rho(\eta_0;\eta_x) =
\frac
1{f''(\rho)},
\end{equation}
understanding that $\chi(0)=\chi(1)=0$. By using the uniform strong
mixing condition, it is not difficult to show the compressibility
$\chi$ satisfies the following bound. There exists a real
$C\in(0,\infty)$ such that for any $\rho\in(0,1)$,
%
%
\begin{equation}
\label{2bchi}
\frac1C \le\frac{\chi(\rho)}{\rho(1-\rho)} \le C.
\end{equation}

The free energy $f$ gives the asymptotic probability of deviations of
the density in the following sense. Fix $\upbar{\rho}\in(0,1)$,
recall $\upbar{\eta}_V$ is the average number of particles in $V$ and
let $\{\Lambda_n\}$ be a sequence invading $\bb Z^d$ as before.
The sequence of probability measures on $[0,1]$ given by
$\{\mu_{\bar{\rho}} \circ(\bar{\eta}_{\L_n})^{-1}\}$
satisfies a large deviation principle (see,
e.g.,~\cite{Err}, Theorem 2.4.3.1) with speed $|\L_n|$ and convex rate
function $f_{\bar{\rho}}\dvtx[0,1]\to[0,+\infty)$ given by
%
%
\begin{equation}
\label{2febr}
f_{\bar\rho}(\rho):= f(\rho)-f(\bar{\rho}) - f'(\bar{\rho})
(\rho-\bar{\rho}).
\end{equation}
The same result holds if one replaces the infinite volume Gibbs
measure $\mu_{\bar{\rho}}$ with a finite volume Gibbs measure,
either with a fixed boundary condition $\sigma$ or with periodic
boundary, on $\L_n$ with chemical potential $f'(\bar\rho)$.

\subsection{Kawasaki dynamics}
\label{dinamocremisi}
Having introduced the formalism of the lattice gases at equilibrium,
here we define the dynamics we are interested in.

Given a bond $\{x,y\}\in\bb B$ and $\eta\in\Omega$, we let
$\eta^{x,y}$ be the configuration obtained from $\eta$ by exchanging
the occupation numbers in $x$ and $y$, that is,
\[
(\eta^{x,y})_z:=
\cases{
\eta_y, &\quad if $ z=x$,\cr
\eta_x, &\quad if $ z=y$,\cr
\eta_z, &\quad otherwise,}
\]
and let $\nabla_{x,y}$ be the operator defined by
$(\nabla_{x,y} f) (\eta):=f(\eta^{x,y}) -f(\eta)$,
where $f\dvtx\Omega\to\bb R$.
Recall that $\Omega_N:=\{0,1\}^{\bb T_N^d}$ is the configuration
space in the discrete $d$-dimensional torus $\bb T_N^d$ of side
length $N$. The \textit{symmetric Kawasaki dynamics} is then defined
by the Markov generator $L_{0,N}$ acting on functions $f\dvtx\Omega_N\to
\bb R$ as
%
%
\begin{equation}
\label{2sg}
L_{0,N} f (\eta):= N^2 \sum_{\{x,y\}\in\bb B_N}
c_{x,y}^0(\eta) \nabla_{x,y} f (\eta),
\end{equation}
where we recall that $\bb B_N$ is the collection of (unordered)
bonds in $\bb T_N^d$. Note that the generator has been speeded up by
the factor $N^2$ which corresponds to the diffusive scaling. We need
some conditions, that are detailed below, on the jump rates
$c^0_{x,y}$ (recall $r_0$ is the range of the interaction $\Phi$).\vadjust{\goodbreak}
%
%
\begin{definition}
\label{tjr0}
The \textit{symmetric jump rates}
$c^0_{x,y}\dvtx \Omega\to\bb R_+$, $\{x,y\}\in\bb B_N$ satisfy the
following conditions.
\begin{longlist}[(iii)]
\item[(i)] \textit{Detailed balance.} For any $\{x,y\}\in\bb B_N$
and $\eta\in\Omega_N$ we have
\[
c^0_{x,y}(\eta^{x,y}) = c^0_{x,y}(\eta)
\exp\{ \nabla_{x,y} H_N (\eta) \}.
\]
\item[(ii)] \textit{Finite range.} The support of $c^0_{x,y}$ is a
subset of $\{z\in\bb T_N^d\dvtx \break d(z,\{x,y\}) \le r_0\}$.
\item[(iii)] \textit{Translation invariance.} For each $\{x,y\} \in\bb
B_N$ and $z\in\bb T_N^d$ we have $\tau_z c^0_{x,y} =
c^0_{x+z,y+z}$.\vspace*{1pt}
\item[(iv)] \textit{Positivity and boundedness.} There exists $C\in
(0,\infty)$ such that for any $\{x,y\}\in\bb B_N$ we have
$C^{-1}\le c^0_{x,y} \le C$.
\end{longlist}
In some statements we also assume that the jump rates are isotropic,
namely:

\begin{longlist}[(v)]
\item[(v)] \textit{Rotation invariance.}
For each $\{x,y\} \in\bb B_N$ and each $R\in\mc R$
we have $R c^0_{x,y} = c^0_{Rx, Ry}$.
\end{longlist}
\end{definition}

Note that we consider the jump rates $c_{x,y}^0$ as functions on
$\Omega$ and not $\Omega_N$. In view of the finite range assumption,
the generator $L_{0,N}$ is well defined on $\Omega_N$ as soon as
$N>r_0$. The detailed balance condition implies that the generator
$L_{0,N}$ is self-adjoint in $L^2(\Omega_N,d\mu_N^\lambda)$ for any
$\lambda\in\bb R$. Since the Kawasaki dynamics conserves the total
number of particles, the ergodic measures for $L_{0,N}$ are the
canonical Gibbs measures $\nu_{N,K}$ on $\bb T_N^d$. In
\cite{CM,LY,Y} it is shown that if the interaction satisfies the
uniform strong mixing condition then the spectral gap of the
generator $L_{0,N}$ considered on $L^2(\Omega_N,\nu_{N,K})$ is of
order one uniformly in $N$ and $K$ (recall that $L_{0,N}$ has been
speeded up by $N^2$).

We next extend the previous symmetric dynamics by allowing the
presence of an external field $E$ of order $1/N$. Let $\bb T^d:= \bb
R^d/\bb Z^d$ be the $d$-dimen\-sional torus of side length one [the
coordinate on $\bb T^d$ is denoted by $r=(r_1,\ldots,r_d)]$. The
gradient and the divergence on $\bb T^d$ are, respectively, denoted by
$\nabla$ and $\nabla\cdot$. We denote by $\langle\cdot,\cdot
\rangle$
the inner product in $L^2(\bb T^d,dr)$. Let $\wwidetilde{\bb
B}_N$ be the collection of \textit{ordered} bonds in $\bb T_N^d$.
Given a\vspace*{1pt} $C^1$ vector field $E\dvtx \bb T^d \to\bb R^d$, we introduce a
\textit{discrete vector field} $E_N\dvtx \wwidetilde{\bb B}_N \to\bb R$ as
follows. Given $(x,y)\in\wwidetilde{\bb B}_N$, let $\gamma^N_{x,y}$ be
the oriented segment on $\bb T^d$ given by $\gamma^N_{x,y}(t):=
(x/N) (1-t) + (y/N) t$, $t\in[0,1]$. We then set
%
%
\begin{equation}
\label{2EN=}
E_N(x,y):= \int_0^1 dt \,E(\gamma^N_{x,y}(t)) \cdot
\frac{d}{dt} \gamma^N_{x,y}(t),
\end{equation}
where $\cdot$ is the inner product in $\bb R^d$.
Note that $ E_N(x,y)$ is the work done by the vector field $E$ along
the path $\gamma^N_{x,y}$. Moreover, $E_N(y,x)=-E_N(x,y)$ and, if $E$
is constant, we simply have $E_N(x,y) = (1/N) E\cdot(y-x)$.
The \textit{weakly asymmetric Kawasaki dynamics} is then defined by the
Markov generator $L_{E,N}$ acting on functions $f\dvtx\Omega_N\to\bb R$
as
%
%
\begin{equation}
\label{2LEN}
L_{E,N} f (\eta):= N^2 \sum_{\{x,y\}\in\bb B_N}
c_{x,y}^E(\eta) \nabla_{x,y} f (\eta),
\end{equation}
where the weakly asymmetric jump rates $c_{x,y}^E(\eta)$ satisfy the
so-called \textit{local detailed balance} condition (see, e.g.,
\cite{S}, Section II.1.4)
\[
c_{x,y}^E (\eta^{x,y} ) =
c_{x,y}^E (\eta) \exp
\{ \nabla_{x,y} H_N (\eta) + E_N (x,y) (\eta_y-\eta_x)
\}.
\]
Note indeed that $E_N(x,y) (\eta_y-\eta_x)$ does not depend on the
orientation of the bond $(x,y)\in\wwidetilde{\bb B}_N$.
In this paper, for simplicity, we shall consider the explicit choice
%
%
\begin{equation}
\label{tassi}
c_{x,y}^E(\eta):= c_{x,y}^0(\eta)
\exp\{ E_N(x,y) (\eta_x-\eta_y) / 2 \}
\end{equation}
in which $c^0_{x,y}$ are the jump rates of the symmetric Kawasaki
dynamics.

Given $T>0$, we denote by $D([0,T];\Omega_N)$ the Skorohod space
given by the set of c\`adl\`ag paths from $[0,T]$ to $\Omega_N$.
We consider $D([0,T];\Omega_N)$ endowed with the Skorohod topology
and the corresponding Borel $\sigma$-algebra.
Elements in $D([0,T];\Omega_N)$ are denoted by $\eta(t)$, $t\in
[0,T]$. The distribution of the Markov chain on $\Omega_N$ with
generator $L_{E,N}$ and initial distribution $\nu$ on $ \Omega_N$ is
a probability measure on $D([0,T];\Omega_N)$ which we denote by $\bb
P^{E,N}_\nu$. In particular, $\bb P^{0,N}_\nu$ is the distribution of
the symmetric Kawasaki dynamics defined by the generator $L_{0,N}$ in
(\ref{2sg}), with initial distribution $\nu$. If $\nu= \delta_\eta
$ with
$\eta\in\Omega_N$, we write simply $\bb P^{E,N}_\eta$.
The expectation with respect to $\bb P^{E,N}_\eta$ is denoted by
$\bb E^{E,N}_\eta$.

If the vector field $E$ is \textit{conservative}, that is, $E = -\nabla U$
for some $C^2$ function $U\dvtx\bb T^d\to\bb R$, then $E_N(x,y) =
U(x/N)-U(y/N)$ and the jump rates $c_{x,y}^E$ satisfy the detailed
balance condition with respect to the Hamiltonian
%
%
\begin{equation}
\label{UHamiltonian}
H_N^U(\eta):= H_N(\eta) + \sum_{x\in\bb T_N^d} U(x/N) \eta_x.
\end{equation}
In particular, if $E$ is conservative, the weakly asymmetric Kawasaki
dynamics is reversible with respect to the canonical or
grand-canonical Gibbs measures on $\bb T^d _N$ associated to the
Hamiltonian $H_N^U$.
On the other hand, when the vector field $E$ is not conservative, then
the weakly asymmetric Kawasaki dynamics is not reversible.
If the unperturbed jump rates $c^0_{x,y}$ satisfy the
gradient\vspace*{1pt}
condition (\ref{1grad}) and the vector field $E$ is constant then
(see~\cite{KLS} and~\cite{S}, Section II.1.4) the canonical Gibbs measures
$\nu_{N,K}$, which are the reversible measures for the symmetric
dynamics, are also the invariant measures of the weakly asymmetric
dynamics. This statement also holds if the field $E$ has vanishing
divergence (see~\cite{BDGJLn}, Section 2.5) for the precise formulation.
In the general case in which the gradient condition does not hold and
the vector field $E$ is not conservative, the invariant measures for
the asymmetric dynamics cannot be computed explicitly.

\section{Main results}
\label{smr}

\subsection{Hydrodynamic scaling limit}

The hydrodynamic scaling limit of the symmetric Kawasaki dynamics has
been proven in~\cite{VY}. As discussed here, the proof extends to the
weakly asymmetric case.

We set $M:= L^\infty(\bb T^d;[0,1])$ which we consider equipped
with the weak* topology, namely, a sequence $\{\gamma^n\} \subset M$
converges to $\gamma$ iff $\langle\gamma^n, \phi\rangle\to
\langle\gamma, \phi\rangle$ for any function $\phi\in L^1(\bb
T^d,dr)$, equivalently for any smooth function $\phi\in C^\infty(\bb
T^d)$ [recall $\langle\cdot,\cdot\rangle$ is the inner product in
$L^2(\bb T^d,dr)$]. The set $M$ is a compact Polish space that we
consider endowed with the corresponding Borel $\sigma$-algebra.
Given $N\ge1$ and $x\in\bb T_N^d$, let $Q_{1/N}(x/N)\subset\bb
T^d$ be the set $Q_{1/N}(x/N):= x/N+[0,1/N)^d$.
The \textit{empirical density} is the map $\pi^N\colon\Omega_N\to M$
defined by
%
%
\begin{equation}
\label{2ed} \pi^N (\eta) (r):= \sum_{x\in\bb T_N^d} \eta_x
\bb I_{Q_{1/N}(x/N)} (r),
\end{equation}
where $\bb I_A$ stands for the indicator function of the set $A$.

We say that a sequence $\{\eta^N\in\Omega_N\}$ is \textit{associated}
to the profile $\gamma\in M$ iff the sequence
$\{\pi^N(\eta^N)\}\subset M$ converges to $\gamma$. Given
$T_1<T_2$, we denote by $\mc M_{[T_1,T_2]}:=D([T_1,T_2];M)$ the
Skorohod space of paths from $[T_1,T_2]$ to $M$ equipped with its
Borel $\sigma$-algebra. Elements of $D([T_1,T_2];M)$ will be denoted by
$\pi\equiv\pi_t(r)$. Note that the evaluation map $\mc
M_{[T_1,T_2]} \ni\pi\mapsto\pi_t \in M$ is not continuous for
$t\in(T_1,T_2)$ but it is continuous for $t=T_1,T_2$. We also denote by
$\pi^N$ the map from $D([T_1,T_2];\Omega_N)$ to $\mc
M_{[T_1,T_2]}$ defined by $[\pi^N(\eta) ]_t:=
\pi^N(\eta(t))$.

Recall that $\mu_\rho$ is the unique infinite volume Gibbs measure
with density $\rho$ and the formal series defined in
(\ref{2gvec}). Given $\rho\in[0,1]$, the \textit{mobility}
$\sigma(\rho)$ is defined as the symmetric $d\times d$ matrix given
by the following variational formula~\cite{V,VY},
%
%
\begin{equation}
\label{2sr}
v \cdot\sigma(\rho) v:=
\inf_{f} \frac12 \mu_\rho\Biggl[
\sum_{i=1}^d c^0_{0,e_i} ( v_i [\eta_{e_i} - \eta_0]
+ \nabla_{0,e_i} \downbar{f} )^2 \Biggr],
\end{equation}
where $v\in\bb R^d$ and the infimum is carried out over all local
functions $f\dvtx\Omega\to\bb R$. Since $f$ is local, $\nabla_{0,e_i}
\downbar{f}$ is well defined as only finitely many terms in the sum
do not vanish. As shown in~\cite{VY}, Lemma 8.3, if the interaction
and the symmetric jump rates are isotropic then the mobility is a
multiple of the identity. Namely, there exists a scalar function,
still denoted by $\sigma$, such that
$\sigma_{i,j}(\rho)=\sigma(\rho)\delta_{i,j}$, $i,j=1,\ldots,d$.

Let $\varkappa^{(i)}\dvtx[0,1]\to\bb R_+$, $i=1,\ldots,d$, be the
function $ \varkappa^{(i)} (\rho):= \mu_\rho
([\eta_0-\eta_{e_i}]^2)$. As it is simple to check, the
functions $\varkappa^{(i)}$ satisfy the following bound. There
exists $C\in(0,\infty)$ such that for any $i=1,\ldots,d$ and
$\rho\in(0,1)$,
%
%
\begin{equation}
\label{2bkr}
\frac1C \le\frac{\varkappa^{(i)}(\rho)}{\rho(1-\rho)} \le C.
\end{equation}
The mobility $\sigma$ satisfies the following bounds. There exists a
real $C>0$ such that for any $\rho\in[0,1]$ and any $v\in\bb R^d$,
%
%
\begin{equation}
\label{2bsr} C^{-1} \sum_{i=1}^d \varkappa^{(i)}(\rho) v_i^2
\le
v \cdot\sigma(\rho) v \le C \sum_{i=1}^d \varkappa^{(i)}(\rho)
v_i^2.
\end{equation}
Indeed, the upper bound follows directly from the variational
expression (\ref{2sr}) by taking $f=0$, while the lower bound is
proven in~\cite{SY}.

Given $\rho\in[0,1]$, the \textit{diffusion matrix} $D(\rho)$ is defined
as the symmetric $d\times d$ matrix given by
%
%
\begin{equation}
\label{2D}
D(\rho):= \sigma(\rho) \frac{1}{\chi(\rho)} = \sigma(\rho)
f''(\rho),
\end{equation}
where the free energy $f$ has been defined in (\ref{2freee}) and
the compressibility $\chi$ (which is a scalar) has been defined in
(\ref{2chi}). Note that, by (\ref{2bchi}), (\ref{2bkr}) and
(\ref{2bsr}), the diffusion matrix $D$ is bounded and strictly
positive uniformly for $\rho\in[0,1]$. As follows from~\cite{VY} and
the arguments in~\cite{KL}, Chapter 7, the maps $[0,1]\ni\rho\to
\sigma
(\rho)$ and $(0,1)\ni\rho\to D(\rho)$ are continuous. In our
analysis, however, we need the smoothness of these maps on the
interval $[0,1]$. In the case in which the Gibbs measure is product,
that is, the interaction vanishes, this result is proven in~\cite{Ced}.
The general case remains, however, a long standing open problem in
hydrodynamic limits.
%
%
\begin{assumption}
\label{tregmob}
The maps $[0,1]\ni\rho\mapsto\sigma(\rho)$ and $[0,1]\ni\rho
\mapsto D (\rho)$ are continuously differentiable.
\end{assumption}

The hydrodynamic scaling limit for the weakly asymmetric Kawasaki
dynamics is stated as follows. Given a sequence $\{\eta^N \in
\Omega_N\}$, we set $\mc P^{E,N}_{\eta^N}:= \bb P^{E,N}_{\eta^N}
\circ(\pi^N)^{-1}$, that is, $\mc P^{E,N}_{\eta^N}$ is the law of the
empirical density when $\eta(t)$, $t\in[0,T]$, is sampled according
to $\bb P^{E,N}_{\eta^N}$. Then $\mc P^{E,N}_{\eta^N}$ is a
probability measure on the path space $\mc M_{[0,T]}$.
%
%
\begin{theorem}
\label{thl}
Fix $T>0$, a vector field $E\in C^1(\bb T^d;\bb R^d)$, a profile
$\gamma\in M$ and a sequence $\{\eta^N \in\Omega_N\}$
associated to $\gamma$. The sequence of probability
measures $\{\mc P^{E,N}_{\eta^N}\}$ on $\mc M_{[0,T]}$ converges
weakly to
$\delta_{u}$ where $u\equiv u_t(r)$ is the unique element of $\mc
M_{[0,T]}$ satisfying the two following conditions.
\begin{longlist}[(ii)]
\item[(i)] \textup{Energy estimate.} The weak gradient of $u$
is in $L^2([0,T]\times\bb T^d, dt \,dr;\bb R^d)$,
%
%
\begin{equation}
\label{2ee}
\int_0^T dt \,\langle\nabla u_t, \nabla u_t\rangle<
+\infty.
\end{equation}
\item[(ii)] \textup{Hydrodynamic equation.} The function $u$ is a weak
solution to
%
%
\begin{equation}
\label{2he}
\cases{
\partial_t u + \nabla\cdot[ \sigma(u) E ] =
\nabla\cdot[ D(u) \nabla u],
&\quad $(t,r) \in(0,T)\times\bb T^d$,\cr
u_0(r)=\gamma(r), &\quad $r \in\bb T^d$.}
\end{equation}
\end{longlist}
\end{theorem}

Of course, a function $u$ in $\mc M_{[0,T]}$ satisfying the energy
estimate (\ref{2ee}) is said to be a weak solution to (\ref{2he})
iff the identity
%
%
\begin{equation}\label{stellina}\quad
\langle u_T,H_T\rangle- \langle\gamma,H_0\rangle
= \int_0^T dt\,
[\langle u_t,\partial_t H_t\rangle+
\langle\sigma(u_t) E - D(u_t)\nabla u_t, \nabla H_t\rangle]\hspace*{-18pt}
\end{equation}
holds for any $H\equiv H_t(r)\in C^1([0,T]\times\bb T^d)$. We
emphasize that the above condition is meaningful in view of the energy
estimate. Since we assumed $E$ to be a $C^1$ vector field, the
uniqueness of a function $u\in\mc M_{[0,T]}$ satisfying the two
conditions stated in the theorem can be proven by repeating the
argument in~\cite{VY}. We emphasize uniqueness holds either if
$\sigma$ is Lipschitz (recall Assumption~\ref{tregmob}) or if
$\sigma$ is a multiple of the identity and continuous.

\subsection{Dynamical large deviation principle}

In order to state the large deviation principle associated to the law
of large numbers in Theorem~\ref{thl}, we first introduce the rate
functional. Fix a function $\gamma\in M$ corresponding to the initial
density profile. Given $\pi\in\mc M_{[0,T]}$ satisfying the energy
estimate [i.e., such that (\ref{2ee}) holds with $u$ replaced by
$\pi$], let $\ell_{\gamma,\pi}$ be the linear functional on
$C^{1}([0,T]\times\bb T^d)$ defined by
%
%
\begin{eqnarray}
\label{2lpH}
\ell_{\gamma,\pi} (H)&:=&
\langle\pi_T, H_T \rangle- \langle\gamma, H_0 \rangle
\nonumber\\[-8pt]\\[-8pt]
&&{} - \int_0^{T} dt\, [ \langle\pi_t, \partial_t H_t
\rangle
+\langle\sigma(\pi_t) E - D(\pi_t)\nabla\pi_t, \nabla H_t\rangle
].\nonumber
\end{eqnarray}
Note that $\ell_{\gamma,\pi}$ vanishes iff $\pi$ is a weak solution
to the
hydrodynamic equation (\ref{2he}).
The rate functional $I^E_{[0,T]}(\cdot|\gamma)\dvtx \mc M_{[0,T]} \to
[0,+\infty]$
is then defined by
%
%
\begin{equation}
\label{2Ig}\quad
I^E_{[0,T]}(\pi|\gamma):=
\sup_{H\in C^1([0,T]\times\bb T^d)} \biggl\{ \ell_{\gamma,\pi} (H)
- \int_0^T dt\,
\langle\nabla H_t, \sigma(\pi_t) \nabla H_t\rangle
\biggr\},\hspace*{-18pt}
\end{equation}
if $\int_0^T dt\, \langle\nabla\pi_t, \nabla\pi_t\rangle<
+\infty$ and $I^E_{[0,T]}(\pi|\gamma):= +\infty$ otherwise.
It is not difficult to check, by choosing
suitable test functions $H$ above, that $I^E_{[0,T]} (\pi|
\gamma)<+\infty$ implies $\pi\in C([0,T];M)$ and $\pi_0=\gamma$.

An application of Riesz's representation lemma allows us to write the
rate function $I^E_{[0,T]}(\cdot| \gamma)$ in a more explicit form
(\cite{KL}, Lemma 10.5.3). For this purpose, we introduce some Hilbert
spaces. Given a path $\pi\in\mc M_{[0,T]}$, let $\mc H^{1}
(\sigma(\pi))$ be the Hilbert space obtained by quotienting and
completing $C^1([0,T]\times\bb T^d)$ with respect to the
pre-inner product defined by
\[
\llangle G,H \rrangle_{1,\sigma(\pi)}
:= \int_0^T dt\,
\langle\nabla G_t, \sigma(\pi_t) \nabla H_t \rangle.
\]
Denote the norm in $\mc H^{1} (\sigma(\pi))$ by $\Vert\cdot
\Vert_{1,\sigma(\pi)}$ and let $\mc H^{-1} (\sigma(\pi))$ be the dual
space. The latter is a Hilbert space equipped with the norm $\Vert
\cdot\Vert_{-1, \sigma(\pi)}$ defined by
\[
\|\wp\|^2_{-1,\sigma(\pi)}:=\mathop{\sup_{ H\in\mc H^{1}
(\sigma(\pi)):}}_{
\| H\|_{1, \s(\pi) } =1} \wp(H)^2
= \sup_{H\in\mc H^{1} (\s(\pi)) } \bigl\{ 2 \wp(H) - \Vert H
\Vert^2_{1, \sigma(\pi)} \bigr\}.
\]
By density, in the above formula one can restrict to $H\in
C^1([0,T]\times\bb T^d)$.

Fix a path $\pi\in\mc M_{[0,T]}$ such that $I^E_{[0,T]} (\pi|
\gamma) < +\infty$, in particular, $\pi$ satisfies the energy estimate.
Since the right-hand side of (\ref{2Ig}) is finite, the linear
functional~$\ell_{\gamma,\pi}$, as defined in (\ref{2lpH}), extends
univocally to a continuous linear functional on $\mc H^1(\sigma(u))$,
that we still denote by $\ell_{\gamma,\pi}$. From (\ref{2Ig}) we
deduce $\| \ell_{\gamma,\pi} \|^2_{-1,\sigma(\pi)}= 4 I^E_{[0,T]}
(\p|\g)$. Therefore, by Riesz's representation lemma, there exists a
unique $\Psi_{\gamma,\pi}\in\mc H^1(\sigma(\pi))$ such that
%
%
\begin{equation}\label{riesz}
\ell_{\gamma,\pi}(H) =
2 \llangle\Psi_{\gamma,\pi}, H \rrangle
_{1,\sigma(\pi)}\qquad
\mbox{for any } H\in\mc H^1(\sigma(u)),
\end{equation}
thus leading to the identity $\|
\ell_{\gamma,\pi}\|_{-1,\sigma(\pi)}=2\|
\Psi_{\gamma,\pi}\|_{1,\s(\pi) }$. In conclusion, it holds
%
%
\begin{equation}
\label{2rI} I^E_{[0,T]} (\pi| \gamma) = \|
\Psi_{\gamma,\pi}\|_{1,\sigma(\pi)}^2 =\tfrac14 \|
\ell_{\gamma,\pi}\|^2_{-1,\sigma(\pi)}.
\end{equation}
In view of (\ref{riesz}), $\pi$ is a weak solution to
%
%
\begin{eqnarray}
\label{2hepsi}
\partial_t \pi+ \nabla\cdot[ \sigma(\pi)
(E + 2 \nabla\Psi_{\gamma,\pi}) ] &=&
\nabla\cdot[ D(\pi) \nabla\pi],\nonumber\\
&&\eqntext{(t,r) \in(0,T)\times\bb T^d,}\\
[-18pt]\\
\pi_0(r)&=&\gamma(r),\qquad r \in\bb T^d,\nonumber
\end{eqnarray}
so that $2 \nabla\Psi_{\gamma,\pi}$ can be interpreted as the extra
driving field to produce the fluctuation $\pi$.
%
%
\begin{theorem}
\label{tdldp} Fix $T>0$, a vector field $E\in C^1(\bb T^d;\bb R^d)$, a
profile $\gamma\in M$ and a sequence $\{\eta^N \in\Omega_N\}$
associated to $\gamma$. The sequence of probability measures $\{\mc
P^{E,N}_{\eta^N}\}$ on $\mc M_{[0,T]}$ satisfies a large deviation
principle with speed $N^d$ and good rate function
$I^E_{[0,T]}(\cdot|\gamma)$. Namely, $I^E_{[0,T]}(\cdot|\gamma)\dvtx
\mc M_{[0,T]} \to[0,+\infty]$ has\vspace*{1pt} compact level sets and\vadjust{\goodbreak}
for each closed set $\mc C \subset\mc M_{[0,T]}$ and each open set $\mc
O \subset\mc M_{[0,T]}$
%
%
\begin{eqnarray}
\label{2ub}
\limsup_{N\to\infty} \frac1{N^d} \log\mc P^{E,N}_{\eta^N}
( \mc C )
&\leq&- \inf_{\pi\in\mc C} I^E_{[0,T]} (\pi| \gamma),
\\
\label{2lb}
\liminf_{N\to\infty} \frac1{N^d} \log\mc P^{E,N}_{\eta^N}
( \mc O ) &\geq&- \inf_{\pi\in\mc O} I^E_{[0,T]} (\pi|
\gamma).
\end{eqnarray}
\end{theorem}

\subsection{The quasi-potential}
\label{sqp}

From now on we assume that the driving field $E$ admits the following
orthogonal decomposition.
%
%
\begin{definition}
\label{tbello}
The vector field $E\in C^1(\bb T^d;\bb R^d)$ is \textit{orthogonally
decomposable} iff it admits the following decomposition. There
exists\vspace*{1pt} a function $U\in C^2(\bb T^d)$ and a vector
field $\wwidetilde E \in C^1(\bb T^d; \bb R^d)$ such that
%
%
\begin{equation}
\label{2E=}\qquad
E = -\nabla U + \wwidetilde E,\qquad
\nabla\cdot\wwidetilde E =0,\qquad
\nabla U (r) \cdot\wwidetilde E(r) =0 \qquad\forall r\in\bb T^d.
\end{equation}
\end{definition}

Given a $C^1$ vector field $E$, the first two requirements in the
above definition are met by letting $U$ be a solution to the Poisson
equation $-\Delta U =\nabla\cdot E$ and then setting $\wwidetilde E
= E +\nabla U$. Then\vspace*{1pt} (\ref{2E=}) requires that for each $r\in\bb
T^d$ we have $\nabla U (r)\cdot\wwidetilde E (r) =0$. Observe that a
conservative or divergenceless vector field is orthogonally
decomposable; indeed in first case (\ref{2E=}) holds with $\tilde
E=0$, while in the second case (\ref{2E=}) holds with a constant
$U$ and $\wwidetilde E=E$. In the one-dimensional case $d=1$, a
vector field is orthogonally decomposable either if it is constant
or if it is conservative. On the other hand, when $d\ge
2$ there exist orthogonally decomposable vector fields for which the
decomposition (\ref{2E=}) is not trivial.
Although $U$ is univocally determined by (\ref{2E=}) apart an
additive constant, all the $U$-dependent definitions given
below are not affected by the choice of the additive constant.
In the sequel, we shall restrict to either one of the three following
cases: (i) $E$ is a conservative vector field,
(ii) $E$ is a constant vector field,
(iii) the mobility $\sigma$ is a multiple of the identity
and $E$ is orthogonally decomposable.
As stated above, if the interaction $\Phi$ and the symmetric jump
rates $c^0$ are isotropic, then $\sigma$ is indeed a multiple of the
identity.

Recall the definition of the free energy $f$ given in (\ref{2freee}).
Given an orthogonally decomposable field $E$ and $\upbar{\rho}\in
(0,1)$, let $\gamma_{\upbar{\rho}}\dvtx\bb T^d\to(0,1)$ be the
function satisfying
%
%
\begin{equation}
\label{2gbar}
f' ( \gamma_{\bar{\rho}} (r) )
+ U(r) = \alpha(\bar\rho),
\end{equation}
where $\alpha(\upbar{\rho}) \in\bb R$ is chosen so that $\int dr\,
\gamma_{\upbar{\rho}} (r) =\upbar{\rho}$. Equivalently,
$\gamma_{\bar{\rho}}$ is defined as $\gamma_{\bar{\rho}} (r)=
(f')^{-1}( -U(r)+c)$, where the constant $c$ is chosen such that
\mbox{$\int dr\, \gamma_{\bar{\rho}} (r) =\bar{\rho}$}. By the properties
of the free energy mentioned just after (\ref{2freee}), the function
$\gamma_{\bar{\rho}}$ is well defined.
When $\upbar{\rho}$ equals $0$ or $1$ then we define
$\gamma_{\bar{\rho}}$ as the function, respectively, identically equal
to $0$ or $1$.
A simple computation shows that, under either condition (i), (ii) or
(iii) above, for each $\upbar{\rho}\in[0,1]$ the function
$\gamma_{\upbar{\rho}}$ is a stationary solution of the hydrodynamic
equation (\ref{2he}). Moreover, as we show in Section~\ref{s6},
under the flow defined by the hydrodynamic equation (\ref{2he}) any
point in the closed subset $M(\upbar{\rho})\subset M$ defined by
%
%
\begin{equation}
\label{7Mbr}
M(\upbar{\rho}):=
\biggl\{\rho\in M\dvtx \int dr\, \rho(r) =
\upbar{\rho}\biggr\}
\end{equation}
converges as $t\to+\infty$ to the stationary solution
$\gamma_{\upbar{\rho}}$. Furthermore, this convergence is uniform with
respect to the initial condition.

We next define the \textit{quasi-potential} as in the classical
Freidlin--Wentzell theory for finite-dimensional diffusion processes
\cite{FW}. We denote\vspace*{1pt} by $I^E_{[T_1,T_2]}(\cdot|\gamma)$ the functional
(\ref{2Ig}) when the time window is $[T_1,T_2]$. Given
$\bar{\rho}\in[0,1]$ we then let the quasi-potential
$V^E_{\upbar{\rho}}\dvtx M \to[0,+\infty]$ be the functional defined by
%
%
\begin{equation}
\label{2qp}
V^E_{\upbar{\rho}}(\rho):= \inf_{T>0} \inf\bigl\{
I^E_{[-T,0]}(\pi| \gamma_{\upbar{\rho}} ),
\pi\in\mc M_{[-T,0]}\dvtx \pi_0 =\rho\bigr\}.
\end{equation}
Since $I_{[-T,0]}(\pi|\gamma)<+\infty$ implies $\pi_{-T}=\gamma$, the
quasi-potential $V^E_{\bar{\rho}}(\rho)$ measures the minimal cost to
reach the profile $\rho\in M$ starting from the stationary
solution~$\gamma_{\upbar{\rho}}$.

We can also define the quasi-potential by considering directly paths
defined on a semi-infinite time interval.
To this end, let $I^E_{[T_1,T_2]}\dvtx \mc M_{[T_1,T_2]} \to[0,+\infty]$
be the functional defined by
\[
I^E_{[T_1,T_2]}(\pi):=I^E_{[T_1,T_2]}(\pi|\pi(T_1)).
\]
This functional can also be expressed by the variational formula
(\ref{2Ig}) in which the linear functional $\ell_{\gamma,\pi}$ is
replaced by
%
%
\begin{eqnarray}
\label{ldav}
\ell_{\pi}(H) &:=& \langle\pi_{T_2},H_{T_2}\rangle-\langle
\pi_{T_1},H_{T_1}\rangle
\nonumber\\[-8pt]\\[-8pt]
&&{} -
\int_{T_1}^{T_2} dt\, [
\langle\pi_t,\partial_tH_t\rangle+
\langle\sigma(\pi_t) E - D(\pi_t) \nabla\pi_t
, \nabla H_t\rangle].\nonumber
\end{eqnarray}
Given $\upbar{\rho}\in[0,1]$, we define $\mathcal M_{(-\infty,0]}
(\upbar{\rho}) \subset\mathcal M_{(-\infty,0]} $ by
%
%
\begin{equation}
\label{infdav}
\mathcal M_{(-\infty,0]} (\upbar{\rho})
:=\Bigl\{\pi\in\mc M_{(-\infty,0]}\dvtx
\lim_{t\to-\infty}\pi_t=\gamma_{\upbar{\rho}}\Bigr\}.
\end{equation}
We then let $I^E_{(-\infty,0]}\dvtx \mathcal M_{(-\infty,0]}(\upbar
{\rho})
\to[0,+\infty]$ be the lower semicontinuous functional given by
%
%
\begin{equation}
\label{funinfdav}
I^E_{(-\infty,0]}(\pi):= \lim_{T\to+\infty} I^E_{[-T,0]}(\pi)
\end{equation}
observing that the limit on the right-hand side (possibly taking the
value $+\infty$) exists by monotonicity.
We finally let $\widehat{V}{}^E_{\upbar{\rho}}\dvtx M \to[0,+\infty]$ be
the functional defined by
%
%
\begin{equation}
\label{genqpdav}
\widehat{V}{}^E_{\upbar{\rho}} (\rho)
:= \inf\bigl\{ I^E_{(-\infty,0]} (\pi),
\pi\in\mc M_{(-\infty,0]}(\upbar{\rho})\dvtx \pi_0=\rho\bigr\}.\vadjust{\goodbreak}
\end{equation}
In the context of diffusion processes in $\bb R^n$, in view of the
continuity of the quasi-potential, it is simple to check that the
functionals defined in (\ref{2qp}) and (\ref{genqpdav}) are
identical. We show this is also the case in the present setting in
which the quasi-potential is only lower semicontinuous.

The next result states that the quasi-potential has a simple
representation in terms of the function $\gamma_{\upbar{\rho}}$, which
does not depend on the divergenceless part $\wwidetilde E$ in the
decomposition (\ref{2E=}). Moreover, the variational problem on the
right-hand side of (\ref{genqpdav}) has a unique minimizer that can be
explicitly characterized. We first introduce such optimal path.
Recall (\ref{7Mbr}). Fix $\upbar{\rho}\in[0,1]$, $\rho\in M
(\bar\rho)$, and let $v\dvtx [0,+\infty)\times\bb T^d \to[0,1]$ be the
weak solution to
%
%
\begin{eqnarray}
\label{2he*}
\partial_t v
+ \nabla\cdot[ \sigma(v)
( - \nabla U - \wwidetilde E ) ] &=&
\nabla\cdot[ D(v) \nabla v],
\nonumber\\
&&\eqntext{(t,r) \in(0,+\infty) \times\bb T^d,}\\
[-18pt]\\
v_0(r)&=&\rho(r),\qquad r \in\bb T^d.\nonumber
\end{eqnarray}
Note the change of sign in the field $\wwidetilde E$ with respect to
(\ref{2he}). Then, as we show is Section~\ref{s6}, $v_t\to
\gamma_{\upbar{\rho}}$ as $t\to+\infty$. Therefore, denoting by
$\theta$ the time reversal, that is, $(\theta v)_t:= v_{-t}$, it holds
$\theta v \in\mc M_{(-\infty,0]}(\upbar{\rho})$ so that $\theta v$ is
a legal test path for the variational problem (\ref{genqpdav}).
%
%
\begin{theorem}
\label{tqp=f}
Assume either one of the three following conditions:
\begin{longlist}[(iii)]
\item[(i)] $E$ is a conservative vector field;
\item[(ii)] $E$ is a constant vector field;
\item[(iii)] the mobility $\s$ is a multiple of the identity and $E$
is orthogonally decomposable.
\end{longlist}
For each $\upbar{\rho}\in[0,1]$ we have $V^E_{\upbar{\rho}} =
{\widehat V}{}^E_{\upbar{\rho}} = \mc F^U_{\upbar{\rho}}$, where the
functional $\mc F^U_{\upbar{\rho}}\dvtx M \to[0,+\infty)$ is given
by
%
%
\begin{equation}
\label{2Fbr}
\mc F^U_{\bar{\rho}}(\rho) =
\cases{
\displaystyle \int dr\,
f_{\bar{\rho}}^U(r,\rho(r)) , &\quad
if $ \rho\in M(\bar\rho)$, \vspace*{2pt}\cr
+\infty&\quad otherwise,}
\end{equation}
in which, recalling (\ref{2febr}), $f_{\bar{\rho}}^U\dvtx \bb T^d
\times[0,1] \to\bb R_+$ is the function
%
%
\begin{equation}
\label{2frU}
f_{\bar{\rho}}^U(r,\rho):=
\int_{\gamma_{\bar{\rho}}(r)}^\rho du
\int_{\gamma_{\bar{\rho}}(r)}^u dv\, f''(v) =
f_{\gamma_{\bar{\rho}}(r)}(\rho).
\end{equation}
Moreover, the unique minimizer for the variational problem on the
right-hand side of (\ref{genqpdav}) is the path $\theta v$, where $v$
is the weak solution to (\ref{2he*}).
\end{theorem}

Note that $\mc F^U_{\bar{\rho}}$ is a lower semicontinuous strictly
convex functional which attains its minimum for $\rho=
\gamma_{\bar{\rho}}$. Moreover, if $E$ has vanishing divergence then
$U$ is constant and $ \gamma_{\bar{\rho}}(r) \equiv\bar\rho$; in
particular $f_{\bar{\rho}}^U(r;\rho) $ does not depend on $r$ and
coincides with $f_{\bar\rho}(\rho)$ [see (\ref{2febr})].\vadjust{\goodbreak} In this
case, we drop the dependence on $U$ from the notation. Note, however,
that the optimal path $\theta v$ depends also on the divergenceless
part $\wwidetilde E$ in the decomposition (\ref{2E=}). As stated
before, the previous result is an infinite-dimensional analogue of
\cite{FW}, Theorem 4.3.1. The condition that $\sigma(\rho)$ is a
multiple of the identity can be slightly relaxed.

%
\begin{remark}
Assume $\sigma(\rho)=\sigma_0(\rho) \Sigma$ for some \textit{scalar}
function $\sigma_0\dvtx[0,1]\to[0,+\infty)$ and some \textit{constant}
symmetric strictly positive $d\times d$ matrix $\Sigma$. Replace the
condition (\ref{2E=}) on the driving field $E$ with the following
assumption. There exists a $C^2$ function $U\dvtx\bb T^d\to\bb R$ and a
$C^1$ vector field $\wwidetilde E\dvtx\bb T^d\to\bb R^d$ such that
$E=-\nabla U +\wwidetilde E$ with $\nabla U(r) \cdot\Sigma\wwidetilde
E(r) =0$ for any $r\in\bb T^d$ and $\nabla\cdot(\Sigma\wwidetilde
E)=0$. Then Theorem~\ref{tqp=f} still holds.
\end{remark}

\subsection{Stationary large deviation principle}

As a corollary of the large deviations analysis of the weakly
asymmetric dynamics and the characterization of the quasi-potential
in Theorem~\ref{tqp=f}, we deduce the asymptotic behavior of the
corresponding invariant measures.

We first discuss the case of the symmetric dynamics. As stated before,
in this case the ergodic invariant measures are the canonical Gibbs
measures $\nu_{N,K}$. Fix a sequence $\{K_N \}\subset\bb N$ such that
$N^{-d} K_N \to\bar{\rho} \in[0,1]$ and set
$P_N^0:=\nu_{N,K_N}\circ(\pi^N)^{-1}$, that is, $P_N^0$ is the law of
the empirical density when the configuration $\eta$ is sampled
according to $\nu_{N,K_N}$. Then the sequence of probability measures
on $M$ given by $\{ P_N^0 \}$ satisfies a large deviations principle
with speed $N^d$ and convex rate function $\mc F_{\bar{\rho}}$ (recall
that $\mc F_{\bar{\rho}} = \mc F_{\bar{\rho}}^U$ when $U$ is
constant).\vspace*{1pt} This result can be derived from the large
deviation principle for the sequence of grand-canonical Gibbs measures
$\{\mu_N\}$. On the other hand, it is also a corollary of Theorem
\ref{tqp=f} and Theorem~\ref{tBG} below.

We now consider the weakly asymmetric dynamics with a smooth
orthogonally decomposable external field $E$. Since the total number
of particles is conserved, we have a well defined dynamics on the
hyperplanes $\Omega_{N,K}:= \{\eta\in\Omega_N\dvtx \sum
_{x\in\bb
T_N^d} \eta_x =K \}$, $K=0,\ldots,N^d$. It easy to check that
the generator $L_{E,N}$ is irreducible when restricted to
$\Omega_{N,K}$ so that there exists a unique invariant measure denoted
by $\nu^E_{N,K}$. Fix a sequence $\{K_N \}\subset\bb N$ such that
$N^{-d} K_N \to\bar{\rho} \in[0,1]$ and set
$P_N^E:=\nu_{N,K_N}^E\circ(\pi^N)^{-1}$. As discussed in Section
\ref{dinamocremisi}, if $E=-\nabla U$ the weakly asymmetric Kawasaki
dynamics is reversible with respect to the Gibbs measures on $\bb
T_N^d$ corresponding to the Hamiltonian $H_N^U$ defined in
(\ref{UHamiltonian}). Accordingly, the sequence of probability
measures $\{P_N^E\}$ on $M$ satisfies a large deviation principle with
convex rate function $\mc F^U_{\bar{\rho}}$ as defined in
(\ref{2Fbr}). Also this statement can be obtained as a corollary of
Theorem~\ref{tqp=f} and Theorem~\ref{tBG} below. It remains to
discuss the more interesting case in which either the vector $E$ is
constant or $\sigma$ is a multiple of the identity and $E$ is
orthogonally decomposable with some nontrivial $\wwidetilde E$. We
emphasize that in this case the invariant measures $\nu_{N,K}^E$
cannot be computed explicitly. The following result, which states
that the quasi-potential\vadjust{\goodbreak} $V^E_{\bar{\rho}}$ gives the rate function of
the empirical density when particles are distributed according to
$\nu^E_{N,K_N}$, is proven in~\cite{BG} for the one-dimensional
boundary driven symmetric simple exclusion process. See also~\cite{F}
(where this statement is proven in greater generality) for more
details. The basic argument is analogous to the one for diffusions on
$\bb R^n$ (see~\cite{FW}, Theorem~4.4.3). In view of the dynamical large
deviation principle stated in Theorem~\ref{tdldp} and the uniform
convergence of the hydrodynamic equation (\ref{2he}) proven in
Theorem~\ref{tcfhe*} below, the arguments presented in~\cite{BG,F}
extend to the current setting of nongradient weakly asymmetric
stochastic lattice gases with periodic boundary conditions. We
therefore only state precisely the result.
%
%
\begin{theorem}
\label{tBG}
Fix a vector field $E\in C^1(\bb T^d;\bb R^d)$ satisfying either one
of the conditions in Theorem~\ref{tqp=f} and a sequence
$\{K_N\}\subset\bb N$ such that $N^{-d} K_N \to\bar{\rho} \in
[0,1]$. Then, the sequence of probability measures $\{P_N^E\}$ on
the compact space $M$ satisfies a large deviation principle with
speed $N^d$ and rate function $V^E_{\bar{\rho}}\dvtx M\to[0,+\infty]$
as defined in (\ref{2qp}). Namely, for each closed set $C \subset
M$ and each open set $O\subset M$,
\begin{eqnarray*}
\limsup_{N\to\infty} \frac1{N^d} \log P^E_{N} ( C )
&\leq&- \inf_{\gamma\in C} V^E_{\bar{\rho}} (\gamma),
\\
\liminf_{N\to\infty} \frac1{N^d} \log P^E_{N} ( O )
&\geq&- \inf_{\gamma\in O} V^E_{\bar{\rho}} (\gamma).
\end{eqnarray*}
\end{theorem}

The above result, together with Theorem~\ref{tqp=f}, describes
explicitly the large deviations behavior of the sequence $\{P^E_{N}\}$
in the scaling limit $N\to\infty$. In particular, as discussed before,
it implies that, as far as stationary large deviations of the
empirical density are concerned, weakly asymmetric nongradient
stochastic lattice gases behave as gradient models.

\section{Nongradient tools}
\label{s3}

In this section we collect some technical results which will be used
in the proof both of the hydrodynamic limit and of the dynamical large
deviation principle. We heavily rely on the results in
Vardahan and Yau~\cite{VY}.

\subsection{Additional notation}
\label{pampers}

For the reader's convenience, we fix here some additional notation
needed in the sequel. We first define some (not scaled) generators.
Given a bond $b\in\bb B$ we set $L_{0,b}:= c_b^0 \nabla_b$, $L_{E,b}:=
c_b^E \nabla_b$. Moreover, given $\L\subset\bb Z^d$, we define
$L_{0,\Lambda}:= \sum_{b \in\bb B_\Lambda} L_{0,b}$ and
$L_{E,\Lambda}:= \sum_{b \in\bb B_\Lambda} L_{E,b}$. Recalling
(\ref{2sg}), for $f\dvtx\Omega_N\to\bb R$ we set
\[
L_0 f(\eta):= N^{-2} L_{0,N} f(\eta)
= \sum_{ \{x,y\}\in\bb B_N } c_{x,y}^0 (\eta) \nabla_{x,y}f(\eta).
\]
With some abuse, we also denote by $L_0$ the operator
\[
L_0 f(\eta):= \sum_{x\in\mathbb{Z}^d}\sum_{i=1}^d
c_{x,x+e_i}^0 (\eta) \nabla_{x,x+e} f (\eta)\vadjust{\goodbreak}
\]
acting on local functions $f\dvtx\Omega\rightarrow\bb R$. The meaning
of $L_0$ will be clear from the context. The same definitions hold
for $L_E$ by replacing $c^0_{x,y}$ with $c^E_{x,y}$.

As in~\cite{VY}, given an integer $\ell$ we set $\ell_1 = \ell
-\sqrt{\ell}$ and, given parameters $a_1, a_2, \ldots, a_n$, such that
$a_i \rightarrow\alpha_i$, $i=1,\ldots,n$,
$\limsup_{a_1\rightarrow\alpha_1, a_2 \rightarrow\alpha_2, \ldots , a_n
\rightarrow\alpha_n}$ is a\vspace*{1pt} shorthand for $\limsup
_{a_1\rightarrow\alpha_1} \limsup_{ a_2 \rightarrow \alpha_2}
\cdots\limsup_{ a_n \rightarrow\alpha_n}$. We recall\vspace*{1pt} that
we write ${\Av}_x$ and $\sum_x$ instead of $\Av_{x\in \bbT^d_N}$ and
$\sum_{x \in\bbT^d_N}$, respectively.

Given $\k\in(0,1)$, fix a $C^\infty$ function $\psi^{(\k)}\dvtx \bb R^d
\to[0, \infty)$ such that $ \psi^{(\k)} (r)=0$ if $|r|>1$,
$\psi^{(\k)} (r) = 2^{-d} $ is $|r|<1-\k$ and $\int dr\, \psi
^{(\k)}
(r) =1 $. We write $\psi^{(\k)}_\epsilon$ for the mollifier $\psi
^{(\k)}_\epsilon(r):= {\e}^{-d}\psi^{(\k)} (r/\e)$. Given $\pi
\in
M$, we then define the smooth mollified function $\tilde{\pi}^{\k,
\e}$ as the convolution
%
%
\begin{equation}\label{iacopo}
\tilde{\pi}^{\k, \e}(r):= \pi\ast\psi^{(\k)}_\e(r).
\end{equation}

Finally, we isolate some classes of special functions.
Recall the definition (\ref{misurecanoniche}) of the canonical Gibbs
measure.
As in~\cite{VY} we define the function space $\mathcal{G}$ by
%
%
\begin{eqnarray}
\label{grifondoro}
&&\mathcal{G}:=\bigl\{
f\dvtx\Omega\to\bb R\dvtx f \mbox{ is local
and } \nu^\sigma_{\Delta_f,K} (f)=0\nonumber\\[-8pt]\\[-8pt]
&&\hspace*{89.5pt}
\forall K\in\{0,\ldots,|\D_f|\}, \sigma\in\Omega\bigr\}.\nonumber
\end{eqnarray}
If $f \in\mc G$ then $\nu^\sigma_{\Lambda,K} (f)=0$ for any
$\Lambda\in\bb F$ such that $\Lambda\supset\Delta_f$. It is simple
to check that the current $j^0_{0,e} (\eta)= c^0_{0,e}(\eta)
(\eta_0-\eta_{e})$, where $e$ is an\vspace*{1pt} element of the canonical basis,
belongs to $\mc G$. Moreover, if $g$ is a local function on $\Omega$
then $L_0 g\in\mc G$.

The following class of functions will also play an important role in
the sequel.
%
%
\begin{definition}
\label{buono}
A function $g\equiv g_\rho(\eta)\equiv g(\eta, \rho)
\dvtx \Omega\times[0,1] \to\bb R$ is called \textit{good} iff:
\begin{longlist}[(ii)]
\item[(i)] $g$ is Lipschitz in $\rho$ uniformly with respect to
$\eta$, that is, there exists $C>0$ such that for any
$\rho,\rho'\in[0,1]$ and $\eta\in\Omega$
\[
| g_\rho(\eta)-g_{\rho'}(\eta)| \le C |\rho-\rho'|;
\]
\item[(ii)] $g$ is local in $\eta$ uniformly with respect to $\rho$,
that is, there exists a set $\Delta_0\in\bb F$ such that for any
$\rho\in[0,1]$ we have $\Delta_{g_\rho}\subset\Delta_0$.
\end{longlist}
\end{definition}

Note that good functions are bounded. Working with good functions it
is convenient to introduce the following convention. Given a good
function $g \equiv g(\eta, \rho)$ we will add the superscript 1 both
to generators and to gradients applied to expressions as $g(\tau_y
\eta, \upbar{\eta}_{x, \ell} )$ when these operators act only on the
first entry. For example,
%
%
\begin{equation}
\label{albachiara}
\nabla_{z, z+e} ^1 g ( \tau_y \eta, \upbar{\eta}_{x, \ell}
)
= g ( \tau_y (\eta^{z,z+e}), \upbar{\eta}_{x, \ell} )
- g ( \tau_y \eta, \upbar{\eta}_{x, \ell} ).
\end{equation}
Given a good function $g$ and a function $m\dvtx\Omega\to[0,1]$ we set
%
%
\begin{equation}
\label{pesca}
\underbar{g} (\eta, m(\eta)):=
\sum_{x\in\bb Z^d} g(\tau_x \eta, m(\eta)).
\end{equation}
In words, $\underbar{g} (\eta, m(\eta))$ is obtained by first considering
the formal series $\underbar{g}_\rho$ as defined in
(\ref{2gvec}) and then setting $\rho= m(\eta)$.

\subsection{Spectral estimates}
\label{ocacaterina}

Recall that $\mu_N$ denotes the grand-canonical Gibbs measure on
$\Omega_N$ with zero chemical potential and that $\bb
P^{0,N}_{\mu_N}$ denotes the law of the reversible symmetric
Kawasaki dynamics with initial distribution $\mu_N$. We discuss a
standard method to get super-exponential estimates of the type
%
%
\begin{equation}
\label{porchetta}
\limsup_{k\uparrow\infty, N\uparrow\infty} \frac1{N^d}
\log\bb P^{0,N}_{\mu_N} (B^N_k) = -\infty
\end{equation}
for events of the form $B^N_k=\{ | \int_0^T ds\,
h^N_k(s,\eta(s))|>\delta\}$ for some function $h^N_k$ on
$[0,T]\times\O_N$. Since $e^{|x|}\leq e^x+e^{-x}$ and $\log(a+b)
\leq\log[2(a\lor b)]$, by the exponential Chebyshev inequality and
the Feynman--Kac formula (see, e.g.,~\cite{KL}, Appendix 1, Lemma 7.2)
for each $\gamma>0$ we have
\begin{eqnarray*}
&&
\frac{1}{N^d} \log\bbP^{0,N}_{\mu_N} (B^N_k )\\
&&\qquad\leq-\g\delta
+ \frac{1}{N^d} \log\mathbb{E}^{0,N}_{\mu_N}
\biggl(\exp\biggl\{\biggl|
\int_0^T ds\, \g N^d h^N_k( s, \eta(s) ) \biggr| \biggr\}
\biggr)\\
&&\qquad\leq
-\g\delta+\frac{\log2}{ N^d} + \frac{1}{N^d}
\sup_{\s=\pm1} \log\mathbb{E} ^{0,N}_{\mu_N} \biggl(\exp\biggl\{
\int_0^T ds\, \s\g N^d h^N_k( s, \eta(s) ) \biggr\} \biggr)
\\
&&\qquad\leq
-\g\delta+\frac{\log2 }{N^d}
+\g\sup_{\s=\pm1}
\int_0^T ds \mathop{\sup
\spec}_{L^2(\mu_N) } \{ \s h^N_k( s, \cdot)+ \g^{-1}
N^{2-d}L_0\},
\end{eqnarray*}
where\vspace*{1pt} $\spec_{L^2(\mu_N)}$ denotes the spectrum in $L^2(\mu_N)$.
Hence, in order to get (\ref{porchetta}) it is enough to show that for
each $\g>0$
%
%
\begin{equation}
\label{cocacola}
\limsup_{k\uparrow\infty, N\uparrow\infty}
\int_0^T ds \mathop{\sup
\spec}_{L^2(\mu_N)}
\{ \pm h^N_k( s, \cdot)+ \g^{-1} N^{2-d}L_0\}
\leq0.
\end{equation}

A useful tool to derive the estimate (\ref{cocacola}) is given by the
following perturbative result concerning $\sup\spec_{L^2(\nu)} \{\a V+
\mathfrak{L}\}$, where $\mathfrak{L}$ is an ergodic reversible Markov
generator on a countable set $E$ with invariant measure $\nu$,
$\a\in\bb R$, and $V$ is a function defined on $E$. We refer to
\cite{KL}, Appendix 3, Theorem 1.1, for the proof.
%
%
\begin{lemma}
\label{morbillo}
Let $\operatorname{gap}(\mathfrak{L},\nu)$ be the spectral gap of
$\mathfrak{L}$ in $L^2(\nu)$ and $(\cdot,\cdot)_\nu$ be the
inner product in $L^2(\nu)$. If $\nu(V)=0$ and
$2\a\operatorname{gap} (\mathfrak{L}, \nu)^{-1} \|V\|_\infty
<1$, then
\begin{eqnarray*}
0 &\leq&
\mathop{\sup\spec}_{L^2(\nu)} \{\a V +\mathfrak{L}\}\\
&\leq&\frac{\a^2}{1-2\a\operatorname{gap} (\mathfrak{L},
\nu)^{-1}\|V\|_\infty} ( V, -\mathfrak{L}^{-1} V)_\nu.
\end{eqnarray*}
\end{lemma}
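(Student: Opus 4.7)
My plan is to prove Lemma \ref{morbillo} through the Rayleigh--Ritz variational characterization of the top of the spectrum for the self-adjoint operator $\alpha V + \mathfrak{L}$, reducing the problem to a quadratic estimate in the Dirichlet form $\mathcal{D}(f) := (f, -\mathfrak{L} f)_\nu$. Since the operator is self-adjoint on $L^2(\nu)$, one has
\[
\sup \spec_{L^2(\nu)} \{\alpha V + \mathfrak{L}\}
= \sup_{\|f\|_2 = 1} \bigl\{ \alpha \, \nu(V f^2) - \mathcal{D}(f) \bigr\}\,.
\]
The lower bound $\sup \spec \ge 0$ is immediate by testing with $f \equiv 1$, which gives Rayleigh quotient $\alpha \nu(V) - \mathcal{D}(1) = 0$ since $\nu(V) = 0$ and constants lie in $\ker \mathfrak{L}$.

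For the upper bound, I decompose $f = c + g$ with $c := \nu(f)$ and $g := f - c$, so that $\nu(g) = 0$, $\mathcal{D}(f) = \mathcal{D}(g)$, and $c^2 + \nu(g^2) = 1$; in particular $c^2 \le 1$. Since $\nu(V) = 0$ one obtains
\[
\nu(V f^2) = \nu(V g^2) + 2 c \, \nu(V g)\,.
\]
The first term is controlled by $L^\infty \times L^1$: $|\nu(V g^2)| \le \|V\|_\infty \, \nu(g^2) \le \|V\|_\infty \, \mathrm{gap}(\mathfrak{L},\nu)^{-1} \mathcal{D}(g)$, by the Poincar\'e inequality applied to the mean-zero function $g$. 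For the cross term, the key step is an $H^{-1}$ duality estimate: since $\nu(V) = 0$, I may define $u := (-\mathfrak{L})^{-1} V$ on the orthogonal complement of the constants, and use the self-adjointness of $\mathfrak{L}$ together with Cauchy--Schwarz in the Dirichlet inner product,
\[
|\nu(V g)|^2 = |(-\mathfrak{L} u, g)_\nu|^2 = |(u, -\mathfrak{L} g)_\nu|^2 \le (u, -\mathfrak{L} u)_\nu \, \mathcal{D}(g)
= (V, -\mathfrak{L}^{-1} V)_\nu \, \mathcal{D}(g)\,.
\]
Applying Young's inequality $2|c|\sqrt{X \, \mathcal{D}(g)} \le c^2 X/\epsilon + \epsilon \, \mathcal{D}(g)$, with $X := (V, -\mathfrak{L}^{-1} V)_\nu$, I then assemble
\[
\alpha \, \nu(V f^2) - \mathcal{D}(f)
\le \frac{\alpha \, c^2}{\epsilon} \, X +
\bigl[ \alpha \epsilon + \alpha \|V\|_\infty \mathrm{gap}(\mathfrak{L},\nu)^{-1} - 1 \bigr] \mathcal{D}(g)\,.
\]

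Finally, using $c^2 \le 1$ and choosing $\epsilon$ so as to make the coefficient of $\mathcal{D}(g)$ vanish (which is possible precisely under the smallness hypothesis on $\alpha \|V\|_\infty \mathrm{gap}(\mathfrak{L},\nu)^{-1}$) leaves the bound of the claimed form, $\alpha^2 (V, -\mathfrak{L}^{-1} V)_\nu$ divided by $1 - 2 \alpha \, \mathrm{gap}(\mathfrak{L},\nu)^{-1} \|V\|_\infty$. The main technical obstacle is not the variational step itself but the careful bookkeeping of constants: the factor $2$ in the denominator requires balancing the $\epsilon$--optimization simultaneously against the two sources of $\mathcal{D}(g)$ (the Poincar\'e term from $\nu(V g^2)$ and the Young term from $2c\nu(Vg)$), rather than optimizing each separately. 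The conceptual heart is the $H^{-1}$ bound producing $(V, -\mathfrak{L}^{-1} V)_\nu$; everything else is quadratic optimization and the Poincar\'e inequality encoded in $\mathrm{gap}(\mathfrak{L},\nu)$.
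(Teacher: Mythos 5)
Your argument is correct and is the standard Rayleigh--Ritz route for this perturbative spectral bound: decompose the normalized trial function as $f=c+g$ with $c=\nu(f)$ and $\nu(g)=0$, apply the Poincar\'e inequality to the diagonal term $\nu(Vg^2)$, use the $H^{-1}$-duality for the cross term $\nu(Vg)$, and optimize. This is essentially the proof that the paper delegates to \cite[App.~3, Thm.~1.1]{KL}.

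One slip is worth flagging, because it affects your reading of where the constant comes from. Once you choose $\epsilon$ so that $\alpha\epsilon+\alpha\,\mathrm{gap}(\mathfrak L,\nu)^{-1}\|V\|_\infty-1=0$, i.e.\ $\epsilon=\bigl(1-\alpha\,\mathrm{gap}(\mathfrak L,\nu)^{-1}\|V\|_\infty\bigr)/\alpha$, your chain of inequalities yields
\[
\sup \, \spec_{L^2(\nu)} \bigl\{\a V +\mathfrak{L}\bigr\}
\;\le\;
\frac{\alpha^2 c^2}{1-\alpha\,\mathrm{gap}(\mathfrak L,\nu)^{-1}\|V\|_\infty}\,(V,-\mathfrak L^{-1}V)_\nu
\;\le\;
\frac{\alpha^2 }{1-\alpha\,\mathrm{gap}(\mathfrak L,\nu)^{-1}\|V\|_\infty}\,(V,-\mathfrak L^{-1}V)_\nu,
\]
so the denominator is $1-\alpha\,\mathrm{gap}(\mathfrak L,\nu)^{-1}\|V\|_\infty$, not $1-2\alpha\,\mathrm{gap}(\mathfrak L,\nu)^{-1}\|V\|_\infty$. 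This is a \emph{sharper} bound than the one stated (and valid under the weaker hypothesis $\alpha\,\mathrm{gap}^{-1}\|V\|_\infty<1$), so the lemma certainly follows a fortiori; but the factor $2$ in the Kipnis--Landim formulation does not emerge from your quadratic optimization, and your closing remark about having to ``balance the $\epsilon$-optimization simultaneously against the two sources of $\mathcal D(g)$'' to produce that $2$ is therefore not accurate. Finally, you should record that $\alpha>0$ may be assumed without loss of generality (replace $V$ by $-V$, which preserves $\nu(V)=0$, $\|V\|_\infty$, and $(V,-\mathfrak L^{-1}V)_\nu$); this sign is used tacitly when you bound $\alpha\nu(Vg^2)$ by $\alpha\|V\|_\infty\mathrm{gap}^{-1}\mathcal D(g)$ and when distributing $\alpha$ across the Young inequality.
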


Since the operator $\mathfrak{L}$ is not injective, we need to specify
the meaning of $ ( V, -\mathfrak{L}^{-1} V)_\nu$. By ergodicity, the
kernel of $\mathfrak{L}$ is given by constant functions. In
particular, $f-g$ is a constant function for all $f,g \in
\mathfrak{L}^{-1} (V)$. Since $\nu(V)=0$, we conclude that $(V,f)_\nu$
does not depend on the special function $f \in\mathfrak{L}^{-1}(V)$
and this constant value is the precise meaning of $(V,
-\mathfrak{L}^{-1} V)_\nu$.

\subsection{Central limit theorem variance}
\label{giulioconiglio}

Given a function $f \in\mathcal{G}$, an integer $\ell$ so large that
$\D_f \subset\Lambda_{\ell_1}$ (recall $\ell_1= \ell-\sqrt{\ell}$)
and a canonical measure $\nu$ on $\L_\ell$, we define $V_\ell
(f;\nu)$ as
%
%
\begin{equation}
\label{danza}
V_\ell(f;\nu):=
(2\ell_1+1)^d \Bigl( \Av_{y \in\L_{\ell_1 }} \t_y f,
- L_{0, \L_{\ell} }^{-1} \Av_{y \in\L_{\ell_1}}\tau_y f
\Bigr)_\nu.
\end{equation}
The above $H_{-1}$-seminorm appears from the application of Lemma
\ref{morbillo} to get super-exponential estimates of the form
(\ref{porchetta}) for $h^N_k = \Av_{x} \tau_x f$ (there is no
dependence on $k$).

Given $\Lambda\in\bb F$ let $\mathcal{K}_\L$ be the
$\s$-algebra generated by the random variables $N_\L$ and
$\eta_x$, $x \in\bb Z^d \setminus\L$.
In~\cite{VY}, Section 8, it is proven that for any $\rho\in[0,1]$, the
limit
%
%
\begin{equation}
\label{simpson}
V_\rho(f):=
\mathop{\lim_{\ell\rightarrow\infty}}_{\rho'\rightarrow\rho}
\mu_{\rho'} [
V_\ell( f; \mu_{\rho'} ( \cdot| \mathcal{K}_{\L
_\ell} )
) ]
\end{equation}
exists and is finite. The above limit is called \textit{central limit
theorem variance} and in what follows will be briefly denoted as CLTV.
We recall below some results of~\cite{VY} concerning the CLTV.

On the space $\mathcal{G}$ the functional $V_\rho(\cdot)^{1/2}$ defines
a semi-norm and, by polarization, a pre-inner product $\langle \cdot,
\cdot\rangle _\rho$, that is, $V_\rho(f)= \langle f,f\rangle _\rho$.
The corresponding completion $\mathcal{H}_\rho$ of $\mathcal{G} /
\mathcal{N}_\rho$, where $\mathcal{N}_\rho:=\{ f \in\mathcal{G}\dvtx
V_\rho(f)=0\}$, is therefore an Hilbert space. In what follows, given a
local function $f\in\mathcal{G}$, we will denote again by $f$ the image
of $f$ under the projection plus the inclusion map $ \mathcal{G}
\rightarrow
\mathcal{G}/\mathcal{N}_\rho\hookrightarrow\mathcal{H}_\rho$. In
general, given an element $e$ of the canonical basis, $\nabla_e \eta=
\eta_e - \eta_0$ does not belong to $\mathcal{G}$,
but it is possible to show (\cite{VY}, page~656) that
%
%
\begin{equation}
\label{gelato}
h_{e,s}= \nabla_e \eta- \mu_\rho( \nabla_e \eta|
\mathcal{K}_{\L_s} )
\end{equation}
is a Cauchy sequence in $\mathcal{H}_\rho$ as $s \uparrow\infty$. As
in~\cite{VY}, with some abuse of notation we denote by $\nabla_e \eta$
the limiting point of $h_{e,s}$ in $\mathcal{H}_\rho$.\vadjust{\goodbreak}

We recall a table of computations in the Hilbert space $\mc H_\rho$.
Below $e,e'$ belong to the canonical basis, $j^0_{0,e} (\eta)=
c^0_{0,e}(\eta) (\eta_0-\eta_{e})$ is the current in the direction
$e$ and $g,h$ are generic local functions. Recall the notation
introduced in (\ref{2gvec}) and (\ref{2chi}).
%
%
\begin{eqnarray}
\label{pp1}
\langle j^0_{0,e}, j^0 _{0, e'}\rangle _\rho &=& \tfrac12
\delta_{e,e'}
\mu_\rho[ c_{0,e}^0(\eta) (\eta_e-\eta_0)^2 ],
\\
\label{pp2}
\langle  j^0_{0,e}, L_0 g \rangle _\rho&=& \tfrac{1}{2} \mu_\rho[
c_{0,e}^0(\eta) (\eta_0-\eta_e) \nabla_{0,e}\underbar{g}
],
\\
\label{pp3}
\langle j^0_{0,e}, \nabla_{e'} \eta\rangle _\rho &=& - \delta_{e,e'}
\chi(\rho),
\\
\label{pp4}
\langle \nabla_e \eta, L_0 g \rangle _\rho &=& 0,
\\
\label{pp5}
\langle  L_0 g, L_0 h \rangle _\rho &=& \frac{1}{2}
\sum_{i=1}^d \mu_\rho[ c_{0,e_i} ^0 (\eta)
\nabla_{0,e_i} \underbar g \nabla_{0,e_i}\underbar h ].
\end{eqnarray}
See, respectively, equations (8.7), (8.8), (8.13), (8.14) and the computations
after (8.6) in~\cite{VY}.
We stress that the signs in (\ref{pp2}) and (\ref{pp3}) differ from
the ones in~\cite{VY}. A simple check of the correctness of the above
statement, in the case (\ref{pp3}), is the following. When the
Hamiltonian is zero, the jump rates are constant and $j^0_{0,e}=
c(\eta_0-\eta_e) $, $c>0$. In particular, $\nabla_e\eta$ coincides
in $\mathcal{H}_\rho$ with the standard gradient and it holds
$\langle j^0_{0,e}, \nabla_{e} \eta\rangle _\rho= -c\langle \eta_0-\eta_e, \eta_0-
\eta_e\rangle _\rho$, which must be negative as in (\ref{pp3}).

Define the following linear subspaces of $\mc H_\rho$
\[
\mathcal{G}^{(0)} = \Biggl\{ \sum_{i=1}^d a_i \nabla_{e_i} \eta
, a \in\mathbb{R}^d\Biggr\},\qquad
L_0 \mathcal{G}= \{L_0 g, g \in\mathcal{G}\}.
\]
As follows from~\cite{VY} and the arguments in~\cite{KL}, Chapter 7, the
closure of $\{ L_0 g, g \mbox{ local function}\}$ in
$\mathcal{H}_\rho$ coincides with the closure of $L_0\mathcal{G}$.
Moreover, $\mathcal{H}_\rho$ admits the orthogonal decomposition
%
%
\begin{equation}
\label{orto}
\mathcal{H}_\rho= \mathcal{G}^{(0)} \oplus\overline{ L_0
\mathcal{G}}.
\end{equation}
Observe that orthogonality follows easily from (\ref{pp4}).

Recall the definitions (\ref{2sr}) and (\ref{2D}) of the mobility
$\sigma(\rho)$ and the diffusion coefficient $D(\rho)$. We can give a
simple geometric interpretation of $\s(\rho)$ and $D(\rho)$ referred to
the Hilbert space $\mathcal{H}_\rho$. Indeed, due
to the table of computations (\ref{pp1})--(\ref{pp5}), for each $v \in
\bb R^d$,
%
%
\begin{eqnarray}
\label{sirenetta}
&& \frac{1}{2} \mu_\rho\Biggl[
\sum_{i=1}^d c^0_{0,e_i} ( v_i [\eta_{e_i} - \eta_0]
+ \nabla_{0,e_i} \underbar f )^2 \Biggr]
\nonumber\\
&&\qquad
=V_\rho\Biggl(\sum_{i=1}^d v_i j^0_{0,e_i} \Biggr)
+ 2 \Biggl\langle  \sum_{i=1}^d v_i j^0_{0,e_i},L_0 f\Biggr\rangle _\rho
+ V_\rho( L_0 f)
\\
&&\qquad
= V_\rho\Biggl( \sum_{i=1}^d v_i j^0_{0,e_i} + L_0 f\Biggr).\nonumber
\end{eqnarray}
Let $ P\dvtx \mathcal{H}_\rho\rightarrow\mathcal{G}^{(0)}$ be the
orthogonal projection of $\mathcal{H}_\rho$ onto $\mathcal{G}^{(0)}$.
Then, in view of (\ref{sirenetta}), the variational formula (\ref{2sr})
simply reads
%
%
\begin{equation}
\label{vagabondo}
v\cdot\sigma(\rho) v =
V_\rho\Biggl( P \sum_{i=1}^d v_i j^0_{0,e_i} \Biggr).
\end{equation}
Equivalently,
%
%
\begin{equation}
\label{vagabondo1}\quad
\sigma_{i,k} (\rho)=
\langle  P j^0_{0,e_i}, P j^0_{0,e_k} \rangle _\rho
= \langle  P j^0_{0,e_i}, j^0_{0,e_k} \rangle _\rho,\qquad
i,k=1,\ldots,d.
\end{equation}
By writing $P j^0_{0,e_i}= - \sum_{k=1}^d a_{i,k}(\rho)
\nabla_{e_k}\eta$, from (\ref{pp3}) and (\ref{vagabondo1}) we deduce
$a_{i,k}(\rho) \chi(\rho) = \s_{i,k }(\rho)$.
This implies the key identity
%
%
\begin{equation}
\label{santostefano}
j^0_{0,e_i} = -
\sum_{k=1}^d D_{i,k}(\rho) \nabla_{e_k} \eta
+ (\bb I -P) j^0_{0,e_i}\qquad\mbox{in } \mathcal{H}_\rho.
\end{equation}
In the next lemma we give some additional characterization of the
entries of $\s(\rho)$, which will be used below. We omit the proof,
which easily follows from (\ref{pp1}) and (\ref{vagabondo1}).\vspace*{-2pt}
%
%
\begin{lemma}
\label{tporchetta}
For each $\rho\in[0,1]$ and $i,k=1,\ldots,d$ it holds
%
%
\begin{eqnarray}\qquad
\label{serio}
\sigma_{i,i}(\rho) &=&
\langle j^0_{0,e_i}, j^0_{0,e_i} \rangle _\rho-
\langle j^0_{0,e_i}, (\bb{I}-P) j^0_{0,e_i}\rangle _\rho,
\\[-2pt]
\label{vagabondo2}
\sigma_{i,k}(\rho)
&=& -\langle  j^0_{0,e_i}, (\bb I -P) j^0_{0,e_k} \rangle
_\rho\nonumber\\[-9pt]\\[-9pt]
&=& -\langle  (\bb{I}-P) j^0_{0,e_i}, j^0_{0, e_k}\rangle _\rho,\qquad
i \not= k.\nonumber\vspace*{-2pt}
\end{eqnarray}
\end{lemma}

By definition of $P$, for each $\rho\in[0,1]$ and $i=1,\ldots,d$
there exist local functions $g_\rho^{(i)}$ such that $-L_0
g^{(i)}_\rho$ approximates $(\bbI- P)j^0_{0,e_i}$ in
$\mathcal{H}_\rho$. Moreover, it is possible to choose the family of
approximating functions in such a way that some regularity is
achieved. More precisely, recalling Definition~\ref{buono},
(\ref{santostefano}) and~\cite{VY}, Corrolary 3.5, imply the following
statement.\vspace*{-2pt}
%
%
\begin{lemma}
\label{salsiccia}
For each $i=1,\ldots,d$ and $\delta>0$ there exists a good
function $g^{(i)}_\rho(\eta)\dvtx[0,1]\times\Omega\to\bb R$
such that, setting
\[
\phi^{(i)}_\rho:= j^0_{0,e_i}+ \sum_{k=1}^d D_{i,k}(\rho)
\nabla_{e_k}
\eta+L_0 g^{(i)}_\rho= (\bbI- P) j^0_{0,e_i} +L_0 g^{(i)}_\rho,
\]
we have
%
%
\begin{equation}
\label{salsicciab}
\sup_{\rho\in[0,1]}
V_\rho\bigl( \phi^{(i)}_\rho\bigr) \leq\delta.\vspace*{-2pt}
\end{equation}
\end{lemma}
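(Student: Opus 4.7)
My plan would be to derive the statement essentially as a consequence of the orthogonal decomposition \eqref{orto} together with a uniform (in $\rho$) approximation argument, which is what \cite[Cor.~3.5]{VY} is designed to provide. The starting point is the key identity \eqref{santostefano}: in $\mc H_\rho$ we have $(\bbI-P) j^0_{0,e_i} = j^0_{0,e_i} + \sum_k D_{i,k}(\rho) \nabla_{e_k}\eta$, so the quantity $\phi^{(i)}_\rho$ in the statement equals $(\bbI-P) j^0_{0,e_i} + L_0 g^{(i)}_\rho$. Since $(\bbI-P) j^0_{0,e_i}$ lies in the closed subspace $\overline{L_0 \mc G}$, which coincides with the closure of $\{L_0 g : g\text{ local}\}$, for \emph{each fixed} $\rho$ one can find a local function $g$ with $V_\rho\bigl((\bbI-P) j^0_{0,e_i} + L_0 g\bigr) \le \delta/2$. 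The task is to produce a single family $\rho \mapsto g^{(i)}_\rho$ achieving the bound uniformly in $\rho \in [0,1]$ while satisfying the two conditions of Definition~\ref{buono}.

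The first ingredient is continuity of the map $\rho \mapsto V_\rho\bigl((\bbI-P) j^0_{0,e_i} + L_0 g\bigr)$ for any fixed local $g$. This follows by writing the expression via the computations \eqref{pp1}--\eqref{pp5} and using that $D_{i,k}$ and $\sigma_{i,k}$ are continuous, that expectations under $\mu_\rho$ of fixed local observables depend continuously on $\rho$ (by uniform strong mixing), and that $V_\rho$ of a local element of $\mc G$ depends continuously on $\rho$ by the definition \eqref{simpson}. Given this continuity, compactness of $[0,1]$ yields a finite cover $\{I_j\}_{j=1}^m$ of $[0,1]$ by open intervals and local functions $g_j$, with a common support $\Delta_0 \in \bb F$ (by enlarging each support to the union), such that $V_\rho\bigl((\bbI-P) j^0_{0,e_i} + L_0 g_j\bigr) \le \delta/4$ for all $\rho \in I_j$.

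To enforce condition (i) of Definition~\ref{buono}, I would glue the $g_j$ by a smooth partition of unity $\{\chi_j(\rho)\}$ subordinate to $\{I_j\}$, setting $g^{(i)}_\rho := \sum_j \chi_j(\rho)\, g_j$. Since the $\chi_j$ are Lipschitz in $\rho$ and the $g_j$ are bounded local functions with support in $\Delta_0$, the resulting $g^{(i)}_\rho$ is a good function. The uniform bound $V_\rho(\phi^{(i)}_\rho) \le \delta$ follows from convexity of $V_\rho(\cdot)^{1/2}$ (it is a seminorm on $\mc G$, extended by continuity to $\mc H_\rho$) applied to the convex combination $L_0 g^{(i)}_\rho = \sum_j \chi_j(\rho) L_0 g_j$, keeping in mind $\sum_j \chi_j(\rho) = 1$.

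The main obstacle I expect is not the existence of local approximants at a single $\rho$ — this is handed to us by the decomposition \eqref{orto} — but the simultaneous enforcement of uniformity in $\rho$ \emph{and} the regularity conditions of a good function. The convex-combination/partition-of-unity construction addresses both at once, provided one has the continuity of the map $\rho \mapsto V_\rho(\cdot)$ on a fixed local subspace; this latter point ultimately rests on the regularity of $\mu_\rho$ and of the transport coefficients, which are consequences of the uniform strong mixing hypothesis and Assumption~\ref{t:regmob}.
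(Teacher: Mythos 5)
Your argument is correct, and it essentially reconstructs what the paper obtains by citation: in the text this lemma is stated as a consequence of \cite[Cor.~3.5]{VY} together with \eqref{santostefano}, so the paper gives no proof. Your compactness-plus-partition-of-unity construction is the natural way to turn a pointwise (in $\rho$) approximation into a uniform one with the regularity of Definition~\ref{buono}, and it is presumably close in spirit to the cited VY argument. Two small remarks on presentation. First, the convexity step should be applied directly to $\phi^{(i)}_\rho = \sum_j \chi_j(\rho)\bigl[(\bbI-P) j^0_{0,e_i} + L_0 g_j\bigr]$ rather than only to $L_0 g^{(i)}_\rho$; the identity $\sum_j\chi_j(\rho)=1$ is what lets you absorb the $\rho$-independent term $(\bbI-P)j^0_{0,e_i}$ into the convex combination, and then convexity of $V_\rho$ (equivalently subadditivity of $V_\rho(\cdot)^{1/2}$ plus Cauchy--Schwarz) gives $V_\rho(\phi^{(i)}_\rho)\le\sum_j\chi_j(\rho)V_\rho\bigl((\bbI-P)j^0_{0,e_i}+L_0 g_j\bigr)$. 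Second, the continuity claim is cleanest if you first simplify
\begin{equation*}
V_\rho\bigl((\bbI-P) j^0_{0,e_i} + L_0 g\bigr)
= \langle j^0_{0,e_i}, j^0_{0,e_i}\rangle_\rho - \sigma_{i,i}(\rho)
+ 2\langle j^0_{0,e_i}, L_0 g\rangle_\rho + \langle L_0 g, L_0 g\rangle_\rho\,,
\end{equation*}
using orthogonality of $\mc G^{(0)}$ and $\overline{L_0\mc G}$ and \eqref{serio}; each of the remaining terms is, by \eqref{pp1}, \eqref{pp2}, \eqref{pp5}, the $\mu_\rho$-expectation of a fixed local observable (plus $\sigma_{i,i}$), so continuity in $\rho$ is immediate from continuity of $\rho\mapsto\mu_\rho$ and of $\sigma$, without having to separately control the $D_{i,k}(\rho)\nabla_{e_k}\eta$ terms.
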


\subsection{Super-exponential estimates}
\label{s31}

We introduce some perturbations of the weakly asymmetric
dynamics.\vadjust{\goodbreak}
Given $\ell\ge1$, $H\in C^{1,2}([0,T]\times\bb T ^d)$
and a family of good functions $\mathbf{g}= \{ g^{(i)} (\eta,
\rho), i=1,\ldots,d\}$ we define the functions
$F\equiv F^N_{H,\ell,\mathbf{g}}$ and $\upbar{F} \equiv{\upbar{F}}^N_{H,
\ell,
\mathbf{g}}$ on $[0,T]\times\Omega_N$ by
%
%
\begin{eqnarray}
\label{alfa}
F(t,\eta) &:=& \frac{1}{2}\sum_{x} H_t\biggl( \frac xN\biggr) \eta_x
+ \upbar{F}(t, \eta),
\\
\label{beta}
\upbar F(t, \eta) &:=&
\frac{1}{2N} \sum_{x} \sum_{i=1}^d
\nabla^N_i H_t\biggl(\frac xN\biggr) g^{(i)}(\tau_x \eta, \upbar{\eta
}_{x,\ell}),
\end{eqnarray}
where the discrete gradient $\nabla^N_i$ is defined by $\nabla^N_i
f(r):= N[ f(r+ e_i/N)-f(r) ]$, $r\in\bb T^d$. We then
consider the time-inhomogeneous Markov chain on $\Omega_N$ with jump
rates $N^2 c_{x,y}^{E,H,\mathbf{g}}$, where $c_{x,y}^{E,H,\mathbf{g}}$
is defined at time $t$ by
%
%
\begin{eqnarray}
\label{delfinodino}
c_{x,y}^{E,H, \mathbf{g} }(\eta) :\!&=&
c_{x,y} ^E(\eta) \exp\{F(t, \eta^{x,y} )- F(t,\eta)\}
\nonumber\\[-8pt]\\[-8pt]
&=& c_{x,y}^{E+\nabla H_t} (\eta)
\exp\{ \upbar F(t, \eta^{x,y} )- \upbar F(t,\eta)\}\nonumber
\end{eqnarray}
in which the rate $c_{x,y}^{E+\nabla H_t}$ is defined as in
(\ref{tassi}) with the field $E$ replaced by $E+ \nabla H_t$.
We let $L^{E,H,\mathbf{g}}_{t,N}$ be the corresponding
time-inhomogeneous generator and denote by
$\bb P_{\eta^N}^{E, H, \mathbf{g},N}$ the law of the perturbed
chain with initial condition $\eta^N$.
We convey to write simply $ \bb P_{\eta^N}^{E,H,N}$ and
$L^{E,H}_{t,N}$ if $\mathbf{g}=0$.
Note that\vspace*{1pt} in this case, in view of the last identity in
(\ref{delfinodino}), the above dynamics coincides with the weakly
asymmetric Kawasaki dynamics with time-inhomogeneous external field
$E+\nabla H_t$.

We observe that there exists a constant $C>0$ depending
only on $H$ and the functions $g^{(i)}$ such that for any
$\{x,y\}\in\bb B_N$ it holds
%
%
\begin{eqnarray}
\label{enrico}
\sup_{0\leq t \leq T}\sup_{\eta\in\Omega_N}|\nabla_{x,y}
F(t,\eta)| &\leq&\frac{C }{N},\nonumber\\[-8pt]\\[-8pt] \sup_{0\leq t \leq
T}\sup_{\eta\in\Omega_N}|\nabla_{x,y} \bar F(t,\eta)| &\leq&
\frac{C }{N }.\nonumber
\end{eqnarray}

%
\begin{lemma}
\label{trn}
Fix $E\in C^1 (\bb T^d;\bb R^d)$, $H\in C^{1,2}([0,T]\times\bbT^d)$,
$\ell\ge1$, a family of good functions $\mathbf{g}$ and let
$\bb P^{E,H,\mathbf{g},N}_{\eta^N}$ as defined above.
For each $p\in[1,\infty)$ there exists a constant $C_0$
such that for any $N\ge1$, $T>0$, and any sequence $\{\eta^N\in
\Omega
_N\}$
\[
\limsup_{N\to\infty} \frac1{N^d}
\log\bb E^{0,N}_{\mu_N} \biggl( \biggl[
\frac{d\bb P^{E,H, \mathbf{g}, N}_{\eta^N}}{d\bb P^{0,N}_{\mu_N}}
\biggr]^p \biggr) \le C_0 (T +1).
\]
\end{lemma}
\begin{pf}
By the assumptions on the interaction (see Definition~\ref{tint}),
there exists a constant $C$ depending only on $\Phi$ such that for
any $\eta^N\in\Omega_N$ we have $\log\mu_N(\eta^N) \ge-C N^d$. It
is therefore enough to prove the lemma with $\bb P^{0,N}_{\mu_N}$
replaced by $\bb P^{0,N}_{\eta^N}$.

Given an ordered bond $(x,y) \in\wwidetilde{\bb B}_N$, $t\in
[0,T]$ and $\eta\in D([0,T];\Omega_N)$, denote by
$\mc N_{x,y}^\eta(t)$ the total number of particles that in the time
interval $[0,t]$ jumped from $x$ to $y$. Set also $J_{x,y}^\eta(t)
:= \mc N_{x,y}^\eta(t) - \mc N_{y,x}^\eta(t)$. By standard tools in
the theory of jump Markov processes (see, e.g.,~\cite{Br}, Section VI.2)
we can compute the Radon--Nikodym derivative as
\begin{eqnarray*}
&&\frac{d\bb P^{E,H,\mathbf{g},N}_{\eta^N}}{ d\bb
P^{0,N}_{\eta^N}} (\eta) \\
&&\qquad= \exp\biggl\{
\sum_{\{x,y\}\in{\bb B}_N}
\biggl[ E_N(x,y)J_{x,y}^\eta(T) +F(T,\eta_T)-F(0,\eta_0)
\\
&&\hspace*{61pt}\qquad\quad{} - N^2 \int_0^T dt\, c^0_{x,y} (\eta(t))\\
&&\qquad\hspace*{115.6pt}{}\times\bigl( e^{E_N(x,y) [\eta_x(t)- \eta_y(t)]
+\nabla_{x,y} F(t,\eta(t))} -1 \bigr)\biggr] \biggr\}.
\end{eqnarray*}
Note indeed that $ E_N(x,y)J_{x,y}^\eta(T)$ and $E_N(x,y)
[\eta_x(t)- \eta_y(t)]$ do not depend on the orientation of the bond
$(x,y)$; therefore they can be thought of, as in the above expression,
as functions of the unoriented bond $\{x,y\}$. The previous
expression yields
\begin{eqnarray*}
\hspace*{-4pt}&&\biggl[\frac{d \bb P^{E,H,\mathbf{g},N}_{\eta^N}}{d\bb P^{0,N}_{\eta
^N}}(\eta)
\biggr]^p\\
\hspace*{-4pt}&&\qquad= \frac{d\bb P^{p E, p H, \mathbf{g}, N}_{\eta^N}}
{d\bb P^{0,N}_{\eta^N}}(\eta)
\\
\hspace*{-4pt}&&\qquad\quad{} \times
\exp\biggl\{ N^2 \int_0^T dt\, \sum_{\{x,y\}\in\bb B_N}
c^0_{x,y} (\eta(t))\\
\hspace*{-4pt}&&\hspace*{112pt}\qquad\quad{}\times\bigl[ e^{p E_N(x,y) [\eta_x(t)- \eta_y(t)]+p\nabla_{x,y}F(t,\eta(t))}
- 1\\
\hspace*{-4pt}&&\hspace*{127pt}\qquad\quad{}
- p \bigl( e^{ E_N(x,y) [\eta_x(t) -
\eta_y(t)]+\nabla_{x,y}F(t,\eta(t))} - 1 \bigr) \bigr]
\biggr\}.
\end{eqnarray*}
By using (\ref{enrico}) and the bound $| E_N(x,y)| \le C
N^{-1}$ for some $C>0$ [see (\ref{2EN=})] we get that there exists a
constant $C'=C'(E,H,\mathbf{g},p)>0$ such that
\[
\bigl[ e^{p E_N(x,y) [\eta_x- \eta_y]+p\nabla_{x,y}F(t,\eta)} - 1
- p \bigl( e^{ E_N(x,y) [\eta_x - \eta_y]+\nabla_{x,y}F(t,\eta)} - 1
\bigr)\bigr]
\le\frac{C'} {N^2}\cdot
\]
The lemma follows readily.\vadjust{\goodbreak}
\end{pf}

The following simple consequence of the previous lemma will be
repeatedly used to deduce super-exponential estimates from those
obtained in~\cite{VY}.

%
\begin{remark}
\label{turn}
Consider a sequence of events $\{ B^N_k\}$ in
$D([0,T];\Omega_N)$ which have super-exponentially small
probability with respect to the stationary process $\bb P^{0,N}_{\mu_N}$,
that is, such that
\[
\limsup_{k\uparrow\infty, N\uparrow\infty}
\frac1{N^d} \log\bb P^{0,N}_{\mu_N} (B^N_k) = -\infty.
\]
In view of Lemma~\ref{trn}, an application of the H\"older
inequality shows that the previous estimate holds also for the
probability $\bb P^{E,H, \mathbf{g},N}_{\eta^N}$.
\end{remark}

As is well known, key points in the proof of the hydrodynamic limit are
the so-called one and two block estimates. By standard methods (see,
e.g.,~\cite{KL}, Chapter~10) one can prove the one block estimate at a
super-exponential level. The basic statement is given in the following
lemma; in the sequel we also use, without further mention, slight
variations of this result.
%
%
\begin{lemma}[(One block estimate)]
\label{t1b}
For each $\varphi\in C([0,T]\times\bb T^d)$, each local function
$h$ on $\Omega$ and each $\zeta>0$ it holds
\[
\limsup_{\ell\uparrow\infty, N\uparrow\infty}
\frac{1}{N^d} \log
\bb P^{0,N}_{\mu_N} \biggl( \biggl|
\int_0^T dt
\Av_x \varphi_t\biggl(\frac xN\biggr)
[ h (\tau_x \eta) - \mu_{\upbar{\eta}_{x, \ell}} (h) ]
\biggr|> \zeta\biggr) = -\infty.
\]
\end{lemma}

As explained in~\cite{VY}, as a byproduct of the spectral estimates
in Section~\ref{ocacaterina} and~\cite{VY}, Theorem 6.2, the two
blocks estimate holds in super-exponential sense with respect to $\bb
P_{\mu_N}^{0,N}$.
%
%
\begin{lemma}[(Two blocks estimate)]
\label{duedue}
For each local function $h$ on $\Omega$ and each $\zeta>0$, it holds
\begin{eqnarray*}
&&\limsup_{\ell\uparrow\infty, a\downarrow0, N\uparrow\infty}
\frac{1}{N^d} \log
\bb P^{0,N}_{\mu_N} \biggl(
{\int_0^T dt \Av_x
\Av_{y\dvtx|y-x|\leq aN}}
| h(\upbar{\eta}_{x,\ell}(t))
- h(\upbar{\eta}_{y,\ell})(t) |> \zeta\biggr)
\\
&&\qquad
= -\infty,\\
&&\limsup_{\ell\uparrow\infty, a\downarrow0, N\uparrow\infty}
\frac{1}{N^d} \log
\bb P^{0,N}_{\mu_N} \biggl(
{\int_0^T dt \Av_x} | h(\upbar{\eta} _{x,aN }(t))
- h(\upbar{\eta}_{x, \ell})(t) |>\zeta\biggr) = -\infty.
\end{eqnarray*}
\end{lemma}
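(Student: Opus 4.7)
We sketch the first inequality; the second is obtained by the same scheme with $\upbar\eta_{y,\ell}$ replaced by $\upbar\eta_{x,aN}$ and without the inner average. Setting
$$
V^N_{\ell,a}(\eta) := \Av_x \Av_{y:|y-x|\le aN}\bigl|h(\upbar\eta_{x,\ell}) - h(\upbar\eta_{y,\ell})\bigr|,
$$
the Feynman--Kac/Rayleigh--Ritz reduction of Section~\ref{ocacaterina} allows us to restrict attention to the bound, for every $\gamma > 0$,
$$
\limsup_{\ell\uparrow\infty,a\downarrow 0,N\uparrow\infty}\,\sup_{\sigma=\pm 1}\,\sup\spec_{L^2(\mu_N)}\bigl\{\sigma V^N_{\ell,a}+\gamma^{-1}N^{2-d}L_0\bigr\}\le 0.
$$
By reversibility of $L_0$ under $\mu_N$, the top of the spectrum equals $\sup_{f:\,\mu_N(f^2)=1}\{\pm\mu_N(V^N_{\ell,a}f^2) - \gamma^{-1}N^{2-d}\mathcal{D}_N(f)\}$ where $\mathcal{D}_N(f):=\tfrac12\sum_{b\in\bb B_N}\mu_N(c^0_b(\nabla_b f)^2)$, and translation invariance of $\mu_N$ allows us to replace $V^N_{\ell,a}$ by its term centered at $x=0$, namely $W_0:= \Av_{y:|y|\le aN}|h(\upbar\eta_{0,\ell})-h(\upbar\eta_{y,\ell})|$.

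We next localize to a reference box $B_{a,\ell}:=\Lambda_{aN+\ell}$ containing $\Lambda_{0,\ell}$ and every $\Lambda_{y,\ell}$ with $|y|\le aN$, discarding the contribution of bonds external to $\bb B_{B_{a,\ell}}$. Conditioning $\mu_N$ on the $\sigma$-algebra $\mathcal{K}_{B_{a,\ell}}$, the DLR relation \eqref{2:dlr} yields a canonical Gibbs measure $\nu^\sigma_{B_{a,\ell},K}$ in the sense of \eqref{misure_canoniche}. Since both $W_0$ and $L_{0,B_{a,\ell}}$ are measurable with respect to this conditioning, the problem reduces to bounding, uniformly in $K \in \{0,\ldots,|B_{a,\ell}|\}$ and in the boundary configuration $\sigma$,
$$
\sup\spec_{L^2(\nu^\sigma_{B_{a,\ell},K})}\bigl\{\pm W_0 + c_\gamma N^{2-d}L_{0,B_{a,\ell}}\bigr\}
$$
for an appropriate $c_\gamma>0$. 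The canonical spectral gap of \cite{CM,LY,Y}, valid under our uniform strong mixing hypothesis, guarantees that $-L_{0,B_{a,\ell}}$ on $L^2(\nu^\sigma_{B_{a,\ell},K})$ has gap at least $\kappa(aN)^{-2}$, uniformly in $K,\sigma,N$. Writing $W_0 = \bar W_0 + (W_0 - \bar W_0)$ with $\bar W_0:= \nu^\sigma_{B_{a,\ell},K}(W_0)$, the mean is handled by equivalence of ensembles: $\bar W_0$ converges as $N\to\infty$ to $\mu_\rho(|h(\upbar\eta_{0,\ell})-h(\upbar\eta_{y,\ell})|)$ with $\rho = \lim K/|B_{a,\ell}|$, which in turn vanishes as $\ell\to\infty$ uniformly in $\rho$ and $y$ by the exponential decay of correlations of $\mu_\rho$ and dominated convergence.

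The main obstacle is the centered piece $V := W_0 - \bar W_0$. A direct application of Lemma~\ref{morbillo} would require $2\|V\|_\infty\,\mathrm{gap}^{-1} < 1$, which fails since $\|V\|_\infty = O(1)$ while the gap of $c_\gamma N^{2-d}L_{0,B_{a,\ell}}$ is only of order $c_\gamma N^{-d}a^{-2}$. This is precisely the obstruction resolved in \cite[Thm.~6.2]{VY}: one replaces the sup-norm smallness by a direct estimate of the $H_{-1}$-seminorm $\langle V,(-L_{0,B_{a,\ell}})^{-1}V\rangle_{\nu^\sigma}$ through a moving-particle (test-flow) construction inside $B_{a,\ell}$. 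Exploiting the telescoping identity $h(\upbar\eta_{0,\ell})-h(\upbar\eta_{y,\ell}) = \sum_{z\in\pi_y}[h(\upbar\eta_{z,\ell})-h(\upbar\eta_{z+e,\ell})]$ along a nearest-neighbor path $\pi_y$ connecting $0$ to $y$, each summand is dominated by an elementary local Dirichlet form; averaging over $y$ with $|y|\le aN$ produces an $H_{-1}$-bound of order $(aN)^2\ell^{-d}$, which multiplied by the factor $c_\gamma^{-1}N^{d-2}$ yields a spectral contribution of order $c_\gamma^{-1}\ell^{-d}$, vanishing in the iterated limit. Since the construction uses only the symmetric reversible Dirichlet form and the canonical spectral gap---both available in our non-gradient setting under the strong mixing assumption---the argument of \cite[Thm.~6.2]{VY} carries over verbatim.
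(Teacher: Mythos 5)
Your proposal takes essentially the same route as the paper: the paper disposes of this lemma by noting that it follows from the spectral machinery of Section~\ref{ocacaterina} together with \cite[Thm.~6.2]{VY}, and you spell out precisely that reduction. The Feynman--Kac/Rayleigh--Ritz reduction, the use of translation invariance of $\mu_N$ to drop the outer average $\Av_x$, the discarding of bonds outside a reference box to pass to the canonical ensemble, the splitting $W_0 = \bar W_0 + V$, the treatment of the mean by equivalence of ensembles, and, crucially, the observation that Lemma~\ref{morbillo} cannot be applied directly to the centered piece because the relevant spectral gap is $O(N^{-d}a^{-2})$ while $\|V\|_\infty = O(1)$ --- all of this is exactly the content the paper compresses into a one-sentence reference. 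Correctly identifying that the needed replacement is the $H_{-1}$-seminorm estimate via the moving-particle construction in \cite[Thm.~6.2]{VY} is the heart of the matter, and you get it right.

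The one place where the writeup is loose is the last scaling computation: you claim an $H_{-1}$-bound of order $(aN)^2\ell^{-d}$ which, multiplied by $c_\gamma^{-1}N^{d-2}$, ``yields a spectral contribution of order $c_\gamma^{-1}\ell^{-d}$''. As written, $(aN)^2\ell^{-d}\cdot c_\gamma^{-1}N^{d-2}=c_\gamma^{-1}a^2N^d\ell^{-d}$, which does not vanish in the iterated limit. The actual bookkeeping in \cite[Thm.~6.2]{VY} is subtler (the moving-particle argument is not applied via Lemma~\ref{morbillo}, and the gain over the naive variance$\times$gap$^{-1}$ bound comes from the structure of the telescoping sum, not from a straightforward product of those two factors). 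This arithmetic slip does not affect the correctness of your strategy, since the step is delegated to VY in any case --- which is exactly the level of detail the paper itself provides --- but if you wanted a self-contained argument you would need to track those exponents more carefully.
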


As in~\cite{VY}, Theorem 3.9, given $c>0$, $i=1,\ldots,d$ and a site
$x$, we define the density gradient in the direction $e_i$ as
%
%
\begin{equation}
\label{marina}
\Psi^{(i)} _{x,N,c}(\eta):=
\frac{ \eta_{x+cNe_i} -\eta_{x -c N e_i}}{2 c N}.
\end{equation}

In Proposition~\ref{cile} below we collect super-exponential bounds
for suitable events. Such events appear naturally\vadjust{\goodbreak} in the proof of
the hydrodynamic limit and the dynamical large deviation principle.
To introduce these events, we first fix some notation: in the
following definitions $\varphi\equiv\varphi_t (r)$ and $H\equiv
H_t(r)$ are functions in $C^{1,2}( [0,T]\times\mathbb{T}^d)$, while
$\mathbf{g}$ and $\hat{\mathbf{g}}$ are families of good functions
$\mathbf{g}= \{ g^{(i)} (\eta, \rho), i=1,\ldots,d\}$,
$\hat{\mathbf{g}}= \{ \hat{g}^{(i)} (\eta, \rho),
i=1,\ldots,d\}$ [the function $\upbar{F}^N_{H,\ell,\mathbf{g}}$ has
been defined in (\ref{beta})] and recalling the notation
(\ref{iacopo}) for the smooth convolution we shorthand
${\widetilde{\pi^N(\eta)}}{}^{\k,\epsilon}$ with
$\tilde{\pi}^N(\eta)^{\k,\epsilon}$. In addition, we set
\begin{eqnarray*}
T_1 (t, \eta)
:\!&=& N \Av_x \sum_{i=1}^d \nabla^N_i
\varphi_t\biggl(\frac xN\biggr) j^0_{x,x+e_i},
\\
T_2 (t, \eta) :\!&=& N \Av_x \sum_{i=1}^d \nabla^N_i
\varphi_t\biggl(\frac xN\biggr) \Av_{y\dvtx |y-x|\leq\ell_1} j^0_{y,y+e_i},
\\
T_3(t, \eta) :\!&=& \frac{1}{2} \Av_x \sum_{i=1}^d \nabla^N_i
\varphi_t \biggl(\frac{x}{N}\biggr) c^0_{x,x+e_i}(\eta) (\eta_x-
\eta_{x+e_i})^2 E_i({x}/{N}),
\\
T_{4 } (t, \eta) :\!&=& \frac{1}{2} \Av_x \sum_{i=1}^d \nabla^N_i
\varphi_t\biggl(\frac{x}{N}\biggr) c^0_{x,x+e_i}(\eta) (\eta_x-
\eta_{x+e_i})^2 \partial_i H_t({x}/{N}),
\\[-0.3pt]
T_{5, \mathbf{g} } (t, \eta)
:\!&=& N \Av_x \sum_{i=1}^d
\nabla^N_i \varphi_t\biggl(\frac{x}{N}\biggr) c^0_{x,x+e_i}(\eta) (\eta_x-
\eta_{x+e_i})
\nabla_{x,x+e_i} \upbar{F}^N_{H,\ell,\mathbf{g}}(t,\eta)
\\[-0.3pt]
&=& \frac{1}{2} \Av_x\sum_z \sum_{i=1}^d \sum_{j=1}^d
\nabla^N_i \varphi_t \biggl(\frac xN\biggr) \nabla^N_j H_t \biggl(\frac{z}{N}\biggr)
c_{x,x+e_i}^0(\eta) (\eta_x- \eta_{x+e_i})
\\[-0.3pt]
&&\hspace*{69pt}{} \times
\nabla_{x,x+e_i} g^{(j)} (\tau_z \eta, \upbar{\eta}_{z, \ell}),
\\[-0.3pt]
T_{6, \mathbf{g}} (t, \eta)
:\!&=& \frac{1}{2} \Av_x \sum_z
\sum_{i=1}^d \sum_{j=1}^d \nabla^N_i \varphi_t\biggl(\frac{x}{N}\biggr)
\nabla^N_j H_t \biggl(\frac{z}{N}\biggr)
\\[-0.3pt]
&&\hspace*{69pt}{}
\times c_{x,x+e_i}^0(\eta) (\eta_x- \eta_{x+e_i})
\nabla^1_{x,x+e_i} g^{(j)} ( \t_z\eta, \upbar{\eta}_{z, \ell} ),
\\[-0.3pt]
T_{7, \mathbf{ g}, \hat{\mathbf{g}} } (t, \eta)
:\!&=& N \Av_x
\sum_{i=1}^d \nabla^N_i \varphi_t \biggl(\frac{x}{N}\biggr) \Av
_{y\dvtx|y-x|\leq
\ell_1 } L^{E, H, \mathbf{g}, 1}_{t,N} \hat{g}^{(i)} ( \tau_y
\eta, {\upbar{\eta}}_{x,\ell} ),
\\[-0.3pt]
T_{8, \mathbf{g},\hat{\mathbf{g}}} (t,\eta)
:\!&=&
N \Av_x \sum_{i=1}^d
\nabla^N_i \varphi_t \biggl(\frac{x}{N}\biggr)
\Av_{y\dvtx|y-x|\leq\ell_1} L^{E, H, \mathbf{g}}_{t,N}
\hat{g}^{(i)} ( \tau_y \eta,{\upbar{\eta}}_{x,\ell} ),
\\[-0.3pt]
T_{9, \mathbf{g}} (t,\eta) :\!&=& N \Av_x \sum_{i=1}^d
\nabla^N_i \varphi_t\biggl(\frac{x}{N}\biggr)
\Av_{y\dvtx|y-x|\leq\ell_1} L_0
g^{(i)} ( \tau_y \eta, {\upbar{\eta}}_{y,\ell}),
\\[-0.3pt]
T_{10, \mathbf{g}} (t, \eta) :\!&=& N \Av_x \sum_{i=1}^d
\nabla^N_i \varphi_t\biggl(\frac{x}{N}\biggr)
\Av_{y\dvtx|y-x|\leq\ell_1}
L_0 g^{(i)} ( \tau_y\eta, {\upbar{\eta}}_{x,\ell})
, \\[-0.3pt]
T_{11, \mathbf{g}} (t,\eta) :\!&=& N \Av_x \sum_{i=1}^d
\nabla^N_i \varphi_t\biggl(\frac{x}{N}\biggr)
\Av_{y\dvtx|y-x|\leq\ell_1 }
L_0^1 g^{(i)} ( \tau_y \eta, {\upbar{\eta}}_{x,\ell} )
,\\[-0.3pt]
T_{12}(t,\eta) :\!&=& N \Av_x \sum_{i=1}^d \sum_{j=1}^d
\nabla^N_i \varphi_t\biggl(\frac{x}{N}\biggr)
D_{i,j}( \upbar{\eta}_{x, a N})
\Av_{y\dvtx|y-x|\leq\ell_1} \Psi^{(j)}_{y,N,c} (\eta)
\\[-0.3pt]
&=& \Av_x \sum_{i=1}^d \sum_{j=1}^d
\nabla^N_i \varphi_t \biggl(\frac{x}{N}\biggr)
D_{i,j}(\upbar{\eta}_{x, a N})
\frac{\upbar{\eta}_{x+cN e_j,\ell_1}-\upbar{\eta}_{x-c N
e_j,\ell_1}}
{2c},
\\[-0.3pt]
T_{13, \mathbf{g}} (t,\eta) :\!&=& \frac12 \Av_x\sum_z \sum_{i=1}^d
\sum_{j=1}^d
\nabla_i^N\varphi_t\biggl(\frac{x}{N}\biggr) c^0_{z,z+e_j}(\eta)
(\eta_z - \eta_{z+e_j})[E_j+ \partial_j H_t]
\\[-0.3pt]
&&\hspace*{69pt}{}
\times\biggl(\frac{z}{N}\biggr)
\Av_{y\dvtx|y-x|\leq\ell_1} \nabla_{z,z+e_j}^1
g^{(i)}(\tau_y \eta, \upbar{\eta}_{x,\ell} )
,\\
T_{14,\mathbf{g},\hat{\mathbf{g}}}(t,\eta) :\!&=& \frac{1}{2}
\Av_x\sum_z \sum_v \sum_{i=1}^d \sum_{j=1}^d \sum_{k=1}^d
\nabla_i^N \varphi_t\biggl(\frac{x}{N}\biggr)
\nabla^N_k H_t\biggl(\frac{v}{N}\biggr)
c^0 _{z,z+e_j}(\eta)
\\
&&\hspace*{101pt}{}
\times
\nabla_{z, z+e_j} g^{(k)}(\tau_v\eta,\upbar{\eta}_{v,\ell
})\\
&&\hspace*{101pt}{}
\times
\Av_{y\dvtx|y-x|\leq\ell_1 }
\nabla^1_{z,z+e_j} \hat{g}^{(i)}
( \tau_y \eta,\upbar{\eta}_{x,\ell})
,\\
T_{15,\mathbf{g},\hat{\mathbf{g}}}(t, \eta) :\!&=& \frac{1}{2}
\Av_x \sum_z \sum_v\sum_{i=1}^d \sum_{j=1}^d \sum_{k=1}^d
\nabla_i^N\varphi_t\biggl(\frac{x}{N}\biggr) \nabla^N_k H_t\biggl(\frac{v}{N}\biggr)
c^0 _{z,z+e_j}(\eta)
\\
&&\hspace*{101pt}{}
\times\nabla^1_{z,z+e_j} g^{(k)}( \t_v \eta,\upbar{\eta
}_{v,\ell
})\\
&&\hspace*{101pt}{}
\times
\Av_{y\dvtx|y-x|\leq\ell_1 } \nabla^1 _{z,z+e_j}
\hat{g}^{(i)} ( \tau_y \eta,\upbar{\eta}_{x,\ell})
,\\
T_{16} (t, \eta) :\!&=& \sum_{i=1}^d \sum_{j=1}^d
\int_{\bb T^d} dr\,
\partial_i \varphi_t (r)
D_{i,j}( \tilde{\pi}^N(\eta)^{\k, a} (r) )\,
\partial_j \tilde{\pi}^N(\eta)^{\k', \epsilon} (r).
\end{eqnarray*}
Moreover, recalling (\ref{pesca}) and introducing $\z$ as variable
of integration on $\O$, we also define
\begin{eqnarray*}
K_1 (t, \eta) &:=& \frac{1}{2} \Av_x \sum_{i=1}^d
\partial_i\varphi_t \biggl(\frac{x}{N}\biggr)
[E_i+\partial_i H_t]\biggl(\frac{x}{N}\biggr)
\mu_{\upbar{\eta}_{ x,\ell} }[
c^0_{0,e_i}(\z) (\z_0 -\z_{e_i})^2 ],
\\
K_{2, \mathbf{g}} (t, \eta)&:=&\frac{1}{2} \Av_x\sum_{i=1}^d \sum_{j=1}^d
\partial_i \varphi_t\biggl(\frac{x}{N}\biggr)
[E_j+ \partial_j H_t] \biggl(\frac{x}{N}\biggr)
\\
&&\hspace*{53pt}{}
\times
\mu_{\upbar{\eta}_{x, \ell} }
\bigl[c^0_{0,e_j}(\z) (\z_0 - \z_{e_j})
\nabla^1_{0,e_j} \underbar{g}^{(i)}( \z, \upbar{\eta}_{x, \ell} )
\bigr],
\\
K_{3, \mathbf{g},\hat{\mathbf{g}} } (t, \eta) &:=& \frac{1}{2}
\Av_x \sum_{i=1}^d\sum_{j=1}^d \sum_{k=1}^d
\partial_i \varphi_t \biggl(\frac{x}{N}\biggr)\,
\partial_k H_t\biggl(\frac{x}{N}\biggr)
\\
&&\hspace*{69pt}{}
\times\mu_{\upbar{\eta}_{x,\ell}} \bigl[ c^0_{0,e_j}(\z)
\nabla^1_{0,e_j} \underbar{g}^{(k)}(\z,\upbar{\eta}_{x,\ell})\\
&&\hspace*{131pt}{}\times
\nabla^1_{0,e_j}\underbar{\hat{g}}^{(i)}(\z,
\upbar{\eta}_{x,\ell})\bigr]
, \\
K_{4,\mathbf{g}}(t, \eta) &:=& \frac{1}{2}\Av_x \sum_{i=1}^d \sum_{j=1}^d
\partial_i \varphi_t \biggl( \frac{x}{N}\biggr)\,
\partial_j H_t \biggl(\frac{x}{N}\biggr)
\\
&&\hspace*{53pt}{}
\times
\mu_{\upbar{\eta}_{x, \ell}} \bigl[ c_{0,e_i}^0(\z)
(\z_0- \z_{e_i}) \nabla^1 _{0,e_i}
\underbar g^{(j)} (\z, \upbar{\eta}_{x, \ell})\bigr],
\\
K_5(t, \eta)&:=& \Av_x \sum_{i=1}^d \sum_{k=1}^d
\partial_i \varphi_t \biggl(\frac{x}{N}\biggr)
\sigma_{i,k} (\upbar{\eta}_{x, \ell})
[E_k+\partial_k H_t] \biggl(\frac{x}{N}\biggr).
\end{eqnarray*}

In the above definitions, instead of a generic family of good
functions, we will sometimes take the family of good functions
provided by Lemma~\ref{salsiccia}, which we denote by $\mathbf{g}
[\delta]$.
In this case, we will add the dependence on $\delta$ in the notation.
For instance, $T_{5, \mathbf{g}[\delta]}(t, \eta)$ denotes the
function $T_{5,\mathbf{g}} (t,\eta)$ when the family $\mathbf{g}$
is chosen so that the bound (\ref{salsicciab}) holds.
%
%
\begin{proposition}
\label{cile}
Let $\varphi,H\in C^{1,2}( [0,T]\times\mathbb{T}^d)$, and let
$\mathbf{g}$, $\hat{\mathbf{g}}$ be families of good functions.
Then for each $\zeta>0$ the expressions $T_1,\ldots, T_{16}$,
$K_1,\ldots, K_5$ defined above satisfy the following
super-exponential estimates:
%
%
\begin{eqnarray}
\label{grazia1}
\limsup_{N \uparrow\infty}
\frac{1}{N^d} \log\bb P^{0,N}_{\mu_N}\biggl(
\biggl| \int_0^T dt\,
[T_1- T_2] (t,\eta(t)) \biggr|>\zeta\biggr)
&=& -\infty,\\
\label{grazia3}
\limsup_{\delta\downarrow0, \ell\uparrow\infty, a\downarrow0,
c\downarrow0, N \uparrow\infty}
\frac{1}{N^d}\hspace*{131.6pt}\qquad&& \nonumber\\[-8pt]\\[-8pt]
{}\times\log
\bb P^{0,N}_{\mu_N}\biggl( \biggl| \int_0^T dt
\,\bigl[ T_2 + T_{11,\mathbf{g}[\delta]} + T_{12} \bigr]
(t, \eta(t)) \biggr|>\zeta\biggr)
&=& -\infty,\nonumber\\
\label{grazia4}
\limsup_{\ell\uparrow\infty, N\uparrow\infty}
\frac{1}{N^d} \log
\bb P^{0,N}_{\mu_N} \biggl( \biggl|
\int_0^T dt\,
[T_3 +T_4- K_1 ] (t, \eta(t)) \biggr|> \z\biggr)&=&-\infty,
\\
\label{grazia5}
\limsup_{\ell\uparrow\infty, N\uparrow\infty}
\frac{1}{N^d} \log\bb P^{0,N}_{\mu_N} \biggl( \biggl|
\int_0^T dt\,
[T_{6,\mathbf{g}} - K_{4,\mathbf{g}} ]
(t,\eta(t))\biggr|> \zeta\biggr)&=& -\infty,
\\
\label{bella1}
\limsup_{\ell\uparrow\infty, N\uparrow\infty}
\frac{1}{N^d} \log\bb P^{0,N}_{\mu_N} \biggl( \biggl|
\int_0^T dt\, [ T_{13, \mathbf{g}}- K_{2, \mathbf{g}} ]
(t,\eta(t))\biggr|>\zeta\biggr)&=& -\infty,
\\[-3pt]
\label{bella2}
\limsup_{\ell\uparrow\infty, N\uparrow\infty}
\frac{1}{N^d} \log\bb P^{0,N}_{\mu_N} \biggl( \biggl|
\int_0^T dt\,
[T_{15,\mathbf{g},\hat{\mathbf g}}- K_{3, \mathbf{g},\hat
{\mathbf
g}}]
(t,\eta(t))\biggr|>\zeta\biggr)&=& -\infty,
\\[-3pt]
\label{annarella1}
\limsup_{\ell\uparrow\infty, N \uparrow\infty}
\frac{1}{N^d} \log\bb P^{0,N}_{\mu_N}\biggl( \biggl|
\int_0^T dt\, [T_{5, \mathbf{g} }- T_{6, \mathbf{g}}]
(t, \eta(t)) \biggr|> \zeta\biggr)&=& -\infty,
\\[-3pt]
\label{annarella2}
\limsup_{\ell\uparrow\infty, N \uparrow\infty}
\frac{1}{N^d} \log\bb P^{0,N}_{\mu_N} \biggl( \biggl|
\int_0^T dt\, [
T_{7, \mathbf{ g}, \hat{\mathbf{g}}}- T_{8, \mathbf{ g},\hat
{\mathbf{g}}}
] (t, \eta(t)) \biggr|>\zeta\biggr) &=& -\infty,
\\[-3pt]
\label{annarella3}
\limsup_{\ell\uparrow\infty, N \uparrow\infty}
\frac{1}{N^d} \log\bb P^{0,N}_{\mu_N} \biggl( \biggl|
\int_0^T dt\, [T_{9, \mathbf{g} }- T_{10, \mathbf{g}} ]
(t, \eta(t)) \biggr| > \zeta\biggr)&=& -\infty,
\\[-3pt]
\label{annarella4}
\limsup_{\ell\uparrow\infty, N \uparrow\infty}
\frac{1}{N^d} \log\bb P^{0,N}_{\mu_N}\biggl( \biggl|
\int_0^T dt\, [T_{10, \mathbf{g}}- T_{11, \mathbf{g}} ]
(t, \eta(t)) \biggr| > \zeta\biggr)&=& -\infty,
\\[-3pt]
\label{gaetano}\hspace*{35pt}
\limsup_{\ell\uparrow\infty, N \uparrow\infty}
\frac{1}{N^d} \log\bb P^{0,N}_{\mu_N}\biggl(\biggl|
\int_0^T dt\, [T_{14, \mathbf{g}, \hat{\mathbf{g}}}- T_{15,
\mathbf{g},
\hat{\mathbf{g}}} ] (t, \eta(t))\biggr|> \zeta\biggr)
&=&-\infty,\vspace*{-16pt}
\end{eqnarray}
\begin{eqnarray}
\label{barbalalla1}
&&\limsup_{\delta\downarrow0, \ell\uparrow\infty, N\uparrow
\infty}
\frac{1}{N^d} \log\bb P^{0,N}_{\mu_N}\biggl(
\biggl| \int_0^T dt\, \bigl[K_{3, \mathbf{g}, \mathbf{g}[\delta
]} +K_{4,
\mathbf{g}} \bigr] (t, \eta(t))\biggr| >\z\biggr)\nonumber\\[-9pt]\\[-9pt]
&&\qquad=-\infty,\nonumber
\\[-3pt]
\label{barbalalla2}\qquad
&&\limsup_{\delta\downarrow0, \ell\uparrow\infty, N\uparrow
\infty}
\frac{1}{N^d} \log\bb P^{0,N}_{\mu_N} \biggl( \biggl|
\int_0^T dt\, \bigl[K_1+K_{2, \mathbf{g}[\delta]} -K_5 \bigr]
(t, \eta(t)) \biggr| >\z\biggr)\nonumber\\[-9pt]\\[-9pt]
&&\qquad= -\infty,\nonumber
\\[-3pt]
\label{miracolo}
&&
\limsup_{\k\downarrow0, \ell\uparrow\infty, a
\downarrow0, \k' \downarrow0, \e\downarrow0, c
\downarrow0, N \uparrow\infty}
\frac{1}{N^d} \nonumber\\[-3pt]
&&\quad{}\times\log\bb P^{0,N}_{\mu_N} \biggl( \biggl|
\int_0^T dt\, [T_{12}- T_{16}]
(t,\eta(t)) \biggr| > \zeta\biggr)\\[-3pt]
&&\qquad= -\infty.\nonumber
\end{eqnarray}
\end{proposition}
\begin{pf} We prove the stated super-exponential bounds one after
the other. We denote by $C$ a generic constant, independent of the
parameters we are taking the limit, whose numerical value
can change from line to line.\vspace*{8pt}

\textit{The estimate} (\ref{grazia1}).
Summing by parts we get
\[
T_1(t, \eta)- T_2 (t, \eta)= N \Av_x \sum_{i=1}^d
j^0_{x,x+e_i}
\Av_{y\dvtx |y-x|\leq\ell_1} \biggl[
\nabla^N_i\varphi_t\biggl(\frac{x}{N}\biggr)
-\nabla^N_i\varphi_t\biggl(\frac{y}{N}\biggr)\biggr].
\]
The term inside the square brackets, after taking average, gives a
contribution of the order $\ell^2/N^2$. Hence, $T_1-T_2$ is of the
order $\ell^2/N$.\vadjust{\goodbreak}

\textit{The estimate} (\ref{grazia3}).
This is the core of~\cite{VY} and follows from
\cite{VY}, Theorem~3.9, the arguments presented in Section
\ref{ocacaterina} and the definition of $\mathbf{g}[\delta]$ (look
also at~\cite{VY}, Step 3, page 637).\vspace*{8pt}

\textit{The estimate} (\ref{grazia4}).
It is an immediate consequence of the one block estimate.\vspace*{8pt}

\textit{The estimate} (\ref{grazia5}).
Let us define $T_{6,\mathbf{g} }^{(1)} (t, \eta)$ as the expression
obtained from $T_{6,\mathbf{g}}$ by replacing the term $g^{(j)}(\t_z
\eta,\upbar{\eta}_{z,\ell})$ with $g^{(j)}(\t_z
\eta,\upbar{\eta}_{x, \ell} )$. We observe that, due to the
definitions of good functions and of the gradient $\nabla^1$, both
in $T_{6,\mathbf{g}}$ and $T_{6,\mathbf{g}}^{(1)}$ we can\vspace*{1pt} restrict
the sum over $z$ to the sites $z$ such that $|x-z|\leq C$. In view
of the Lipschitz property of good functions, we thus have
\[
\bigl| T_{6,\mathbf{g} } (t, \eta)- T_{6,\mathbf{g}}^{(1)} (t, \eta)
\bigr|
\leq\frac{C}{N^d} \sum_x \sum_{z\dvtx |z-x|\leq C} | \bar{\eta}_{x,
\ell} - \bar{\eta}_{z, \ell} | \leq\frac{C}{ \ell} \cdot
\]
Using again the above sum restriction and due to the smoothness of
$H$, in $T_{6,\mathbf{g}}^{(1)}$ we can afterward replace $\nabla^N
_j H_t
(z/N)$ with $\nabla^N _j H_t (x/N)$ with an error $O(1/N)$.
Finally, we can remove the sum restriction over $z$. At the end we
get
\begin{eqnarray*}
T_{6,\mathbf{g}} (t, \eta) & = & \frac{1}{2} \Av_x\sum_{i=1}^d \sum_{j=1}^d
\nabla^N_i \varphi_t \biggl( \frac{x}{N} \biggr)
\nabla^N _j H_t \biggl(\frac{x}{N}\biggr) c_{x,x+e_i}^0(\eta)
(\eta_x- \eta_{x+e_i})
\\
&&\hspace*{53pt}{}
\times\nabla^1_{x,x+e_i} \sum_z g^{(j)}(\t_z\eta,\upbar{\eta
}_{x,\ell})
+ O\biggl(\frac1N\biggr)+O\biggl( \frac1\ell\biggr)
\end{eqnarray*}
and (\ref{grazia5}) follows from the one block estimate.\vspace*{8pt}

\textit{The estimate} (\ref{bella1}).
Recalling the definition of $\nabla^1$, we observe again that we can
restrict the sum over $z$ to the sum over $z\dvtx|z-y|\leq C$. As a
consequence, $|z-x|\leq C+\ell_1$. Hence, by an error of order
$O(\ell/N)$, we can replace $\nabla^N_i \varphi_t (x/N)$ with
$\nabla^N_i \varphi_t (z/N)$. We call $ T_{13,\mathbf{g}}^{(1)}$
the resulting expression.
Let us now define $T_{13, \mathbf{g}}^{(2)}$ as
$T_{13,\mathbf{g}}^{(1)}$ with $g^{(i)} (\tau_y \eta, \bar
\eta_{x,\ell} )$ replaced by $g^{(i)} (\tau_y \eta,
\bar
\eta_{x,a N} )$. By the Lipschitz property
of good functions, we can estimate
\[
\bigl| T_{13,\mathbf{g} }^{(1)}-T_{13, \mathbf{g} }^{(2)}\bigr|
(t, \eta) \leq {C \Av_x}
| \upbar{\eta}_{x, \ell}- \upbar{\eta}_{x, a N } |.
\]
By the two blocks estimate (see Lemma~\ref{duedue}), we conclude that
%
%
\begin{eqnarray}
\label{investigatore}
&&\limsup_{\ell\uparrow\infty, a \downarrow0, N \uparrow
\infty}
\frac{1}{N^d} \ln\bbP_{\mu_N} ^{0,N} \biggl( \biggl|\int_0^T dt\,
\bigl[T^{(1)}_{13,\mathbf{g} }-T^{(2)}_{13, \mathbf{g} }\bigr](t,
\eta(t))\,dt \biggr|> \z\biggr)
\nonumber\\[-8pt]\\[-8pt]
&&\qquad
=-\infty.\nonumber
\end{eqnarray}
We next\vspace*{2pt} define $T^{(3)}_{13, \mathbf{g}}$ as
$T^{(2)}_{13,\mathbf{g}}$ with $g^{(i)} (\tau_y \eta, \bar
\eta_{x,a N} )$ replaced by $g^{(i)} (\tau_y \eta,
\bar\eta_{z, a N } )$. Since $|x-z| \leq C + \ell_1$, by the
Lipschitz property of good functions we get
\[
\bigl|T^{(2)}_{13,\mathbf{g} }-T^{(3)}_{13, \mathbf{g} } \bigr|(t, \eta)
\leq {C \Av_x }| \upbar{\eta}_{x, a N }- \upbar{\eta}_{z, a N }
| \leq C \frac{C+\ell}{a N}\cdot
\]
At this point we define $T^{(4)}_{13,\mathbf{g}}$ as
$T^{(3)}_{13,\mathbf{g} }$ with the term $g^{(i)} (\tau_y \eta
, \bar\eta_{z,a N } )$ replaced by $g^{(i)} (\tau_y
\eta, \bar\eta_{z, \ell} )$.
As in\vspace*{2pt} (\ref{investigatore}), we obtain that the event $\{
|\int_0^T [T^{(3)}_{13, \mathbf{g}} -T^{(4)}_{13, \mathbf{g}}]
(t$, $\eta_t)\,dt |> \z\}$ has\vspace*{1pt} super-exponentially small
probability.
In order to prove (\ref{bella1}) we can therefore replace
$T_{13, \mathbf{g} }$ with $T^{(4)}_{13, \mathbf{g}} $,
\begin{eqnarray*}
T^{(4)}_{13, \mathbf{g}} (t,\eta)
&:=& \frac12 \Av_z \sum_{i=1}^d \sum_{j=1}^d
\nabla_i^N \varphi_t \biggl(\frac{z}{N}\biggr)
[E_j+ \partial_jH_t]\biggl(\frac{z}{N}\biggr)
c^0_{z,z+e_j}(\eta) (\eta_z - \eta_{z+e_j})
\\
&&\hspace*{53pt}{}
\times\nabla_{z,z+e_j}^1
\sum_y g^{(i)}(\tau_y \eta, \bar\eta_{z,\ell} ).
\end{eqnarray*}
The thesis now follows from the one block estimate.\vspace*{8pt}

\textit{The estimate} (\ref{bella2}).
The proof of this bound follows by the same ideas used in the
proof of (\ref{bella1}), apart the fact that now there are more
indexes. Anyway, in $T_{15, \mathbf{g}, \hat{\mathbf g}} $ one can
sum over $z \in\bbT^d_N$, $y\dvtx |y-z| \leq C$, $x\dvtx |x-y| \leq\ell_1$
and $v\dvtx |v-z| \leq C+\ell$. Then one has to use the two blocks estimate
and, at the end, the one block estimate.\vspace*{8pt}

\textit{The estimate} (\ref{annarella1}).
If
%
%
\begin{equation}
\label{elezionilazio}
\nabla_{x,x+e_i} g^{(j)} ( \t_z \eta, \bar{\eta} _{z, \ell})
\not=
\nabla^1 _{x,x+e_i} g^{(j)} (\t_z\eta,\bar{\eta}_{z,\ell} ),
\end{equation}
then the bond $\{x, x+e_i\}$ must intersect both $\L_{z, \ell}$ and
its complement. In particular, given $z$ the number of sites $x$
leading to the inequality (\ref{elezionilazio}) are of order $O(
\ell^{d-1})$. In addition, since $g^{(j)} (\eta, \rho)$ is Lipschitz
in $\rho$ uniformly in $\eta$, setting $\omega= \eta^{x,x+e_i}$ with
$\{x,x+e_i\}$ intersecting both $\L_{z,\ell}$ and its complement,
we get
\begin{eqnarray*}
&& \bigl| \nabla_{x,x+e_i} g^{(j)} ( \t_z \eta, \bar{\eta} _{z,
\ell} ) - \nabla^1 _{x,x+e_i} g^{(j)} ( \t_z \eta, \bar{\eta}
_{z, \ell} )\bigr|
\\
&&\qquad
= \bigl| g^{(j)} ( \t_z \omega, \bar{\omega}_{z, \ell}
)- g^{(j)} ( \t_z \omega, \bar{\eta}_{z, \ell} ) \bigr|
\leq C |\bar{\omega}_{z, \ell}-\bar{\eta}_{z,\ell}|
\leq C \frac1{\ell^d}.
\end{eqnarray*}
The above observations imply that
$|T_{5,\mathbf{g}}-T_{6,\mathbf{g}}| \leq C / \ell$,
which trivially implies (\ref{annarella1}).\vspace*{8pt}

\textit{The estimate} (\ref{annarella2}).
We define
\begin{eqnarray*}
T_{7, \hat{\mathbf{g} } } ^{(1)} (t, \eta)&:=& N
\Av_x\sum_{i=1}^d
\nabla^N_i \varphi_t \biggl(\frac{x}{N}\biggr)
\Av_{y\dvtx|y-x|\leq\ell_1 } L^1 _0
\hat{g}^{(i)} ( \tau_y \eta, {\bar\eta}_{x,\ell}),
\\
T^{(1)}_{8, \hat{\mathbf{g} } } (t, \eta)&:=& N
\Av_x \sum_{i=1}^d
\nabla^N_i \varphi_t \biggl(\frac{x}{N}\biggr)
\Av_{y\dvtx|y-x|\leq\ell_1 } L_0 \hat{g}^{(i)} ( \tau_y \eta,
{\bar\eta}_{x,\ell} ).
\end{eqnarray*}
By Taylor expansion of the perturbed jump rates [see
(\ref{taylor-rate}) below together with (\ref{enrico})], we can write
$T_{7,\mathbf{g},\hat{\mathbf{g}}} = T^{(1)}_{7,\hat{\mathbf{g} } }
+ V$ and $T_{8,\mathbf{g}, \hat{\mathbf{g} } } =
T^{(1)}_{8,\hat{\mathbf{g} } } + W$, where $V $ and $W $ are
uniformly bounded functions of $t, \eta$. One can then prove that
$\| V - W\|_\infty\leq C/ \ell$ by the same arguments used in the
proof of
(\ref{annarella1}). Finally,\vspace*{-1pt} the event $\{ | T^{(1)}_{7,\hat{\mathbf
{g} } }-
T^{(1)}_{8,\hat{\mathbf{g} } } | >\z\}$ has super-exponentially
small probability as proved in~\cite{VY}, between Lem\-ma~3.8 and
Theorem 3.9 there.\vspace*{8pt}

\textit{The estimate} (\ref{annarella3}).
In view of (\ref{cocacola}), we only need to prove that for each $\g>0$
%
%
\begin{eqnarray}\label{cantano}\quad
&&
\limsup_{\ell\uparrow\infty, N \uparrow\infty}
\sup_{t\in[0,T]} \mathop{\sup\spec}_{L^2 (\mu_N)} \{ \pm
(T_{9,\mathbf{g}
}-T_{10,\mathbf{g} } )(t, \eta)+ \g^{-1} N^{2-d} L_0 \}
\nonumber\\[-8pt]\\[-8pt]
&&\qquad
\leq0.\nonumber
\end{eqnarray}
We point out three facts. (i) It holds
$N^{2-d} L_0 \leq c(d) \Av_x N^2 \ell^{-d} L_{0, \L_{x, 10 \ell} }$
in the operator sense.
(ii) Since for self-adjoint operators $W$ the quantity
$\sup\spec_{L^2(\mu_N)} \{W\}$ equals the supremum of
$(f, Wf)_{\mu_N}$ among the functions $f \in L^2 (\mu_N)$ satisfying
$(f,f)_{\mu_N}=1$, the map $W\to\sup\spec_{L^2(\mu_N)} \{W\}$ is
subadditive. (iii) Both in $T_{9,\mathbf{g} }$ and
$T_{10,\mathbf{g}}$ we can replace $L_0$ with
$L_{0, \L_{x,10 \ell}}$ if $\ell$ large.
Combining (i), (ii) and (iii) we deduce
%
%
\begin{eqnarray}
\label{ballano}
&&
\mathop{\sup\spec}_{L^2 (\mu_N)}
\{ \pm(T_{9,\mathbf{g} }-T_{10,\mathbf{g} } )
+ \g^{-1} N^{2-d} L_0 \}
\nonumber\\[-2pt]
&&\qquad
\leq C \Av_x \sup_{\nu} \mathop{\sup\spec}_{L^2 (\nu)}
\biggl\{ \pm\nabla^N_i \varphi_t\biggl(\frac xN\biggr) N L_{0, \L_{x, 10
\ell} }
R( \t_x\eta)
\\[-2pt]
&&\hspace*{102pt}\qquad\quad{}
+ c(d)\g^{-1} N^2 \ell^{-d} L_{0, \L_{x, 10 \ell}}
\biggr\},\nonumber
\end{eqnarray}
where $\nu$ varies among all canonical Gibbs measures on $\L
_{x,10\ell}$
and
\[
R(\eta):= \Av_{y\dvtx|y|\leq
\ell_1} \bigl[ g^{(i)} (\t_y \eta, \bar{\eta}_{y, \ell}
)-g^{(i) }
(\t_y \eta, \bar{\eta}_{ \ell} ) \bigr].
\]

By the uniform strong mixing assumption on interaction, there exists
a constant $C>0$ such that $\operatorname{gap}( L_{0,\L_{x,10\ell} }
) \geq
\ell^{-2}/C$ (see~\cite{CM,LY,Y}). Applying Lemma~\ref{morbillo} with
$\mathfrak{L}= c(d) \g^{-1} N^2 \ell^{-d} L_{0,\L_{x,10\ell}} $,
using translation invariance and the expression of the Dirichlet
form for reversible processes, we can then bound the right-hand side
of (\ref{ballano}) by
%
%
\begin{eqnarray}
\label{urlano}
&&C \ell^d \sup_{\nu}
( R, -L_{0, \L_{10\ell} } R )_\nu
\nonumber\\[-9pt]\\[-9pt]
&&\qquad
\leq
C \sup_\nu\Biggl\{ \Av_{x \in\L_{10\ell} } \sum_{j=1}^d
\bb I_{\{ \{x,x+e_j\} \subset\L_{10 \ell} \}}
\nu[ (\nabla_{x,x+e_j} \ell^d R
)^2]\Biggr\},\nonumber
\end{eqnarray}
where $\bb I$ denotes the indicator function. By the same
arguments used in the proof of (\ref{annarella1}), in (\ref{urlano})
we can replace $\nabla_{x,x+e_j}$ by $\nabla_{x,x+e_j}^1$ with an
error of order $O( \ell^{-1})$. On the other hand, due to the
definition of good function, there exists a constant $K >0$ such
that $g^{(i)}(\cdot, \rho)$ has support in $\L_K$ for all $ \rho\in
[0,1]$. We take $\ell\gg K$. Then, using the Lipschitz property of
good functions, we can bound the right-hand side of (\ref{urlano})
by
\begin{eqnarray*}
&& C \Av_{x \in\L_{10\ell} } \sum_{j=1}^d
\bb I_{\{\{x,x+e_j\} \subset\L_{10 \ell}\}}
\\[-2pt]
&&\hspace*{40pt}\quad{} \times
\nu\biggl[\biggl(\nabla^1_{x,x+e_j}
\sum_{y\in\L_{\ell_1}\dvtx |y-x|\leq2K }
\bigl[ g^{(i)}( \t_y \eta, \bar{\eta}_{y,\ell})-
g^{(i)} (\t_y \eta, \bar{\eta}_\ell) \bigr] \biggr)^2\biggr]\\[-2pt]
&&\quad{}+ O\biggl(\frac1\ell\biggr)
\\[-2pt]
&&\qquad \leq
C \Av_{x \in\L_{10\ell} }\nu\biggl[\biggl(\sum_{y\in
\L_{\ell_1}\dvtx |y-x|\leq2K } | \bar{\eta}_{y, \ell}-
\bar{\eta}_\ell| \biggr)^2\biggr]
+ O\biggl(\frac1\ell\biggr).
\end{eqnarray*}
The proof is now concluded observing that the last bound above
vanishes uniformly in $\nu$ as $\ell\to\infty$ by the equivalence
of ensembles.\vspace*{8pt}

\textit{The estimates} (\ref{annarella4}) \textit{and} (\ref{gaetano}).
The proof is similar to the proof of
(\ref{annarella1}).\vspace*{8pt}

\textit{The estimate} (\ref{barbalalla1}).
Due to (\ref{pp5}) and (\ref{pp2}) we can write
\begin{eqnarray*}
K_{3,\mathbf{g}, \mathbf{g}[\delta]} (t, \eta)
&=& \Av_x \sum_{i=1}^d \sum_{k=1}^d
\partial_i \varphi_t\biggl(\frac xN\biggr)
\,\partial_k H_t \biggl(\frac xN\biggr)
\bigl\langle  L_0 {g}^{(k)} _\rho, L_0 g^{(i)}_\rho[\delta] \bigr\rangle _{\rho=\bar
{\eta
}_{x,\ell}},
\\[-2pt]
K_{4, \mathbf{g} } (t, \eta) &=& \Av_x \sum_{i=1}^d \sum_{j=1}^d
\partial_i \varphi_t \biggl( \frac{x}{N}\biggr)
\,\partial_j H_t \biggl(\frac{x}{N}\biggr)
\bigl\langle j_{0,e_i} ^0, L_0 g^{(j)} _\rho\bigr\rangle _{\rho= \bar{\eta}_{x,\ell}
}\cdot
\end{eqnarray*}
Hence,
\begin{eqnarray*}
&&\bigl[ K_{3,\mathbf{g}, \mathbf{g}[\delta]} + K_{4, \mathbf
{g}}\bigr](t,
\eta)
\\
&&\qquad =
\Av_x \sum_{i=1}^d \sum_{k=1}^d
\partial_i \varphi_t\biggl(\frac xN\biggr) \,\partial_k H_t\biggl(\frac xN\biggr)
\bigl\langle L_0 {g}^{(k)} _\rho,j_{0,e_i} ^0
+ L_0 g^{(i)}_\rho[\delta]\bigr\rangle _{\rho= \bar{\eta}_{x,\ell} }.
\end{eqnarray*}
Due to Lemma~\ref{salsiccia}, the orthogonal decomposition
(\ref{orto}) and the definition of the orthogonal projection $P$ we
can write for all $\rho\in[0,1]$
\[
\bigl\langle L_0 {g}^{(k)} _\rho, j_{0,e_i} ^0 + L_0 g^{(i)}[ \delta]\bigr\rangle  _\rho= \bigl\langle L_0
{g}^{(k)} _\rho, P j_{0,e_i} ^0 \bigr\rangle +o(1)=o(1),
\]
where the error term $o(1)$ goes to zero uniformly in $\rho\in
[0,1]$ as $\delta$ goes to zero. The thesis follows.

\textit{The estimate} (\ref{barbalalla2}).
Using (\ref{pp1}), (\ref{pp2}) and Lemma~\ref{salsiccia} we can
write
\begin{eqnarray*}
K_1 (t, \eta) &=& \Av_x \sum_{i=1}^d
\partial_i \varphi_t \biggl(\frac{x}{N}\biggr)
[E_i+\partial_i H_t] \biggl(\frac{x}{N}\biggr)
\langle  j^0 _{0,e_i},j^0_{0, e_i} \rangle  _{\rho=\bar{\eta}_{x, \ell} },\\
K_{2, \mathbf{g}[\delta]} (t,\eta) &=& \Av_x \sum_{i=1}^d\sum_{j=1}^d
\partial_i \varphi_t \biggl(\frac{x}{N}\biggr)
[E_j+ \partial_j H_t]\biggl(\frac{x}{N}\biggr)
\bigl\langle  j^0 _{0,e_j }, L_0 g^{(i) }_\rho[\delta] \bigr\rangle  _{\rho=\bar{\eta}_{x,
\ell}}
\\
&=& -\Av_x \sum_{i=1}^d \sum_{j=1}^d
\partial_i \varphi_t \biggl(\frac{x}{N}\biggr)
[E_j+ \partial_j H_t]\biggl(\frac{x}{N}\biggr)
\langle  j^0 _{0,e_j }, (\bbI-P) j^0_{0,e_i}\rangle _{\rho=\bar{\eta}_{x,
\ell}}\\
&&{} + o(1).
\end{eqnarray*}
We apply Lemma~\ref{tporchetta} in order to rewrite the
above terms $K_1,K_{2, \mathbf{g}[\delta]}$ in terms of the matrix
$\s$.
By (\ref{serio}) and (\ref{vagabondo2}), respectively, we can write
\begin{eqnarray*}
K_1 (t, \eta) &=& \Av_x \sum_{i=1}^d
\partial_i \varphi_t \biggl(\frac{x}{N}\biggr)
[E_i+\partial_i H_t] \biggl(\frac{x}{N}\biggr)
\s_{i,i}(\bar{\eta}_{x, \ell} )+ E(t, \eta)
,\\
K_{2, \mathbf{g}[\delta]} (t, \eta) &=& \Av_x \sum_{i=1}^d
\mathop{\sum_{ j\dvtx1\leq j \leq d}}_{ j \not=i }
\partial_i \varphi_t \biggl(\frac{x}{N}\biggr)
[E_j+ \partial_j H_t] \biggl(\frac{x}{N}\biggr)
\s_{i,j} (\bar{\eta}_{x, \ell} )
\\
&&{}
-E(t, \eta) + o(1),
\end{eqnarray*}
where $E(t, \eta):=\Av_x \sum_{i=1}^d
\partial_i \varphi_t (\frac{x}{N})
[E_i+\partial_i H_t](\frac{x}{N} )
\langle  j^0 _{0,e_i}, [\bbI-P]j^0_{0, e_i} \rangle _{\rho= \bar{\eta}_{x,
\ell
} }$.
Comparing with $K_5(t,\eta)$, the above identities trivially imply
the thesis.\vspace*{8pt}

\textit{The estimate} (\ref{miracolo}).
Given $x \in\bb T^d_N$ and $s>0$, denote by $\mc K_{x,s}$ the
$\s$-algebra generated by the observables $\eta_y$, $y \in\bb T^d_N
\setminus\L_{x,s}$, and by $\bar{\eta}_{x,s}$. In the proof of
Theorem 3.9 in~\cite{VY}, page 649, it is shown that in $T_{12}$ one
can replace $D_{i,j}( \upbar{\eta}_{x, a N})
\Av_{y\dvtx|y-x|\leq\ell_1} \Psi^{(j)}_{y,N,c} (\eta)$ with $
D_{i,j}( \upbar{\eta}_{x, \ell}) \Av_{y\dvtx|y-x|\leq\ell_1}
(\eta_{y+e_j}-\eta_y)$. We call $T_{12}'$ the resulting expression.
As the proof is based on the two blocks estimate and
\cite{VY}, Theorem 5.3, it needs the property that the function
$D_{i,j}( \bar{\eta}_{x, a N })$ is $\mc K_{x, A N}$ for some $A$
(take $A= a$). The same property holds indeed also for
$D_{i,j}(\tilde{\pi}^N(\eta)^{\k,a} (x/N))$ with $A= (1-\k)a $. In
view of Assumption~\ref{tregmob}, it holds
\[
\bigl| D_{i,j }\bigl(\tilde{\pi}^N(\eta)^{\k,a} (x/N)\bigr)
- D_{i,j}(\bar{\eta}_{x, a N })\bigr| \leq C\k,
\]
which allows us to apply the two blocks estimate as in
\cite{VY}, page 650. As a consequence, the expression
$T^{(1)}_{12}$, obtained from
$T_{12}$ by replacing
$D_{i,j}( \bar{\eta}_{x, a N })$ with
$D_{i,j}(\tilde{\pi}^N(\eta)^{\k, a} (x/N))$,
is equivalent to
$T_{12}'$ and therefore to $T_{12}$,
\[
\limsup_{\k\downarrow0, \ell\uparrow\infty, a
\downarrow0, c \downarrow0, N \uparrow\infty}
\frac{1}{N^d}\log\bb P^{0,N}_{\mu_N}\biggl( \biggl|
\int_0^T dt\, \bigl[T_{12} - T_{12}^{(1)}\bigr]
(t, \eta(t)) \biggr| >\z\biggr)= -\infty.
\]

By replacing $\nabla^N_i\varphi_t$ with $\partial_i\varphi_t$ and
summing by parts, we can write
\begin{eqnarray*}
&&
T^{(1)} _{12} (t,\eta)\\
&&\qquad=
- \Av_x \sum_{i=1}^d \sum_{j=1}^d \upbar{\eta}_{x, \ell_1}
\frac1{2c} \biggl[
\partial_i\varphi_t \biggl(\frac{x}{N}+c e_j\biggr)
D_{i,j}\biggl( \tilde{\pi}^N(\eta)^{\k, a} \biggl(\frac xN +c e_j\biggr)
\biggr)
\\
&&\qquad\quad\hspace*{95pt}{}
-\partial_i\varphi_t \biggl(\frac{x}{N}- c e_j\biggr)
D_{i,j}\biggl(\tilde{\pi}^N(\eta)^{\k, a} \biggl(\frac xN -c e_j\biggr) \biggr)
\biggr]\\
&&\qquad\quad{}+ o(1).
\end{eqnarray*}
Observe that $\tilde{\pi}^N(\eta)^{\k, a}$ belongs to
$C^\infty(\bb T^d)$. Moreover, fixed $a,\k$, we can bound its derivatives
by a constant depending only on $a,\k$. Hence, by Taylor expansion,
\begin{eqnarray*}
&& \biggl| \frac1{2c} \biggl[
\partial_i\varphi_t \biggl(\frac{x}{N}+c e_j\biggr)
D_{i,j}\biggl( \tilde{\pi}^N(\eta)^{\k, a} \biggl(\frac xN +c e_j\biggr)\biggr)
\\
&&\qquad{}
-\partial_i\varphi_t \biggl(\frac{x}{N}- c e_j\biggr)
D_{i,j}\biggl(\tilde{\pi}^N(\eta)^{\k, a} \biggl(\frac xN -c e_j\biggr) \biggr)
\biggr]
\\
&&\hspace*{73.1pt}\qquad{}
- \partial_j [
\partial_i\varphi_t D_{i,j}( \tilde{\pi}^N(\eta)^{\k, a}
)
] \biggl(\frac xN\biggr) \biggr|
\le C c,
\end{eqnarray*}
where $C=C(\k,a)$ and $c$ is the scale parameter. Up to now we have
proved that $T_{12}$ is equivalent, in the super-exponential sense
stated in (\ref{miracolo}), to $T_{12}^{(1)}$, which, by the above
observations, is equivalent to
\[
T^{(2)} _{12}(t, \eta) = - \Av_x \sum_{i=1}^d
\sum_{j=1}^d \upbar{\eta}_{x,\ell_1}\,
\partial_j [
\partial_i\varphi_t D_{i,j}( \tilde{\pi}^N(\eta)^{\k, a}
)] \biggl(\frac xN\biggr).
\]
Note that the scale parameter $c$ does not appear anymore. By the
same argument used in the proof of equation (\ref{grazia1}), in
$T^{(2)}_{12}$ we can replace the local density
$\upbar{\eta}_{x,\ell_1}$ with $\eta_x$ paying an error bounded by
$C(a,\k) (\ell/N)^2$, and therefore negligible. We call the new
expression $T^{(3)}_{12}$.
By the same argument used to derive (\ref{grazia1}) we can replace
$\eta_x$ by $\upbar{\eta}_{x, \e N}$ with an error bounded by
$C(a,\k) \e^2$, therefore negligible. By this replacement we get
$T^{(4)}_{12}$. Since
$|\bar\eta_{x,\e N}-\tilde{\pi}^N(\eta)^{\k',\epsilon
}(x/N)|
\leq C( \k'+1/N\e) $
and the limits $N\uparrow\infty,\k'\downarrow0$ and $\e\downarrow
0$ are taken
before the limit $a\downarrow0$, by a uniform estimate we can
replace $\upbar{\eta}_{x,\e N}$ with $\tilde{\pi}^N(\eta)^{\k
',\epsilon}(x/N)$
getting
\[
T^{(5)}_{12} (t, \eta):= - \Av_x \sum_{i=1}^d
\sum_{j=1}^d
\tilde{\pi}^N(\eta)^{\k',\epsilon}\biggl(\frac xN\biggr)
\,\partial_j[ \partial_i\varphi_t
D_{i,j}(\tilde{\pi}^N(\eta)^{\k,a}) ]
\biggl(\frac xN\biggr).
\]
With an error negligible as $N\uparrow\infty$, in $T^{(5)}_{12}$ we
can replace the average $\Av_x$ with the integral over $\bb T^d$. By
an integration by parts, the resulting expression is indeed~$T_{16}$.
\end{pf}

\section{Hydrodynamic limit}
\label{shl}

In this section we prove the hydrodynamic scaling limit for the weakly
asymmetric Kawasaki dynamics. In order to prove the dynamical large
deviation principle, we need a more general version of
Theorem~\ref{thl} that is stated below. Recall that $\bb
P_{\eta^N}^{E, H, \mathbf{g},N}$ is the law of the process with the
perturbed rates defined in (\ref{delfinodino}) and observe that by
setting $H=0$ and $\mathbf{g}=0$ we recover the law $\bb
P_{\eta^N}^{E, N}$ of the original weakly asymmetric Kawasaki dynamics
as defined in (\ref{2LEN}).
%
%
\begin{theorem}
\label{dentini}
Fix $T>0$, functions $E\in C^1(\bb T^d; \bb R^d)$, $H\in
C^{1,2}([0,T]\times\bb T ^d)$, a profile $\gamma\in M$, a
sequence $\{\eta^N \in\Omega_N\}$ associated to $\gamma$ and a
family $\mathbf{g}=\{ g^{(i)}\dvtx 1\leq i \leq d\}$ of good
functions. The sequence of probability measures
$\{\bb P^{E,H,\mathbf{g},N}_{\eta^N} \circ(\pi^N)^{-1} \}
_{N\ge1}$
on $\mc M_{[0,T]}$ converges weakly to $\delta_{u}$,
where $u$ is the unique element of $\mc M_{[0,T]}$ satisfying the two
following conditions.
\begin{longlist}[(ii)]
\item[(i)] \textup{Energy estimate.} The weak gradient of $u$
is in $L^2([0,T]\times\bb T^d, dt\,dr;\bb R^d)$,
%
%
\begin{equation}
\label{4ee}
\int_0^T dt\, \langle\nabla u_t, \nabla u_t\rangle<
+ \infty.
\end{equation}
\item[(ii)] \textup{Hydrodynamic equation.} The function $u$ is a weak
solution to
%
%
\begin{eqnarray}\label{4he}\quad
\partial_t u + \nabla\cdot[ \sigma(u) ( E+\nabla H_t)
] &=&
\nabla\cdot[ D(u) \nabla u],\qquad
(t,r) \in(0,T)\times\bb T^d,
\nonumber\\[-8pt]\\[-8pt]
u_0(r)&=&\gamma(r),\qquad r \in\bb T^d.\nonumber
\end{eqnarray}
\end{longlist}
\end{theorem}

To prove this result, we shall first discuss the tightness of
the sequence
$\{\bb P^{E,H,\mathbf{g},N}_{\eta^N} \circ(\pi^N)^{-1}
\}_{N\ge1}$ and prove the energy estimate. Since these results
are also relevant for the large deviation principle, they will be
proven at the super-exponential level. We then discuss a
microscopic characterization of the hydrodynamic equation and
conclude the proof of the hydrodynamic limit.

\subsection*{Exponential tightness}
Recall that a sequence of probability measures $\{P_n\}$ on a Polish
space $\mc X$ is \textit{exponentially tight} iff there exists a
sequence $\{\mc K_\ell\}$ of compact subsets of $\mc X$ such that
%
%
\begin{equation}
\label{5exptigh}
\limsup_{\ell\uparrow\infty, n\uparrow\infty}
\frac1n \log P_n(\mc K_\ell^\complement) = -\infty.
\end{equation}
%
%
\begin{lemma}
\label{tmodcont}
Under the same hypotheses of Theorem~\ref{dentini}, for each
$\varphi\in C^{2}(\bb T^d)$ and each $\zeta>0$, it
holds
%
%
\begin{eqnarray}
\label{campanile1}
&&\limsup_{\tau\downarrow0, N\uparrow\infty}
\frac1{N^d} \log\bb P^{E,H, \mathbf{g}, N}_{\eta^N}
\Bigl(
\sup_{{s,t\in[0,T]\dvtx |s-t|\le\t}}
| \langle\pi^N_t,\varphi\rangle-\langle\pi^N_s,\varphi
\rangle
|
>\zeta\Bigr)
\nonumber\\[-8pt]\\[-8pt]
&&\qquad
=-\infty.\nonumber
\end{eqnarray}
\end{lemma}
\begin{pf}
The bound (\ref{campanile1}) is proven in~\cite{VY}, Section 4, for the
reversible process $\bb P^{0,N}_{\mu^N}$. Therefore, by
Remark~\ref{turn}, it holds also for $\bb P^{E,H,\mathbf{g},N}_{\eta^N}$.
\end{pf}

Since $M$ is compact, by definition of the weak* topology on $M$ and
standard characterizations of compacts in the Skorohod space,
the above lemma implies that the sequence $\{\bb
P^{E,H,\mathbf{g},N}_{\eta^N} \circ(\pi^N)^{-1}\}_{N\ge1}$ is
exponentially tight. We also observe that, since in
(\ref{campanile1}) we used the modulus of continuity on the set of
continuous path and not the one in the Skorohod space,
Lemma~\ref{tmodcont} also implies that any limit point of the
sequence $\{\bb P^{E,H, \mathbf{g},N}_{\eta^N}\}_{N\ge1}$ is
supported by $C([0,T];M)$.

\subsection*{Energy estimate}
Let $\mc Q\dvtx \mc M_{[0,T]} \to[0,+\infty]$ be the functional defined
by
%
%
\begin{equation}
\label{2QT}
\mc Q (\pi):= \sup\{ \mc Q_{F} (\pi),
F\in C^{1} ([0,T]\times\bb T^d; \bb R^d ) \},
\end{equation}
where, given $F\in C^{1}([0,T]\times\bb T^d;\bb R^d)$,
%
%
\begin{equation}
\label{2QTF}
\mc Q_{F} (\pi):= - 2 \int_0^T dt\, \langle\pi_t,
\nabla\cdot F_t\rangle- \int_0^T dt\, \langle F_t, F_t
\rangle.
\end{equation}
Observe that $\mc Q$ is convex and lower semicontinuous.
Moreover, by a standard argument,
$\mc Q(\p)=\sup_F ( \int_0^T dt\, \langle\p_t,F_t
\rangle)^2/\int_0^T dt\, \langle F_t, F_t\rangle$. Hence,
Riesz representation theorem implies that $\mc Q(\pi) < +\infty$ iff
the weak gradient of $\pi$ belongs to $L_2([0,T]\times\bb T^d,
dt \,dr;\bb R^d)$. If this is the case, we also have $\mc Q (\pi) =
\int_0^T dt\, \langle\nabla\pi_t,\nabla\pi_t\rangle$. In view of
Remark~\ref{turn}, the energy estimate proven in~\cite{VY}, Section 5,
implies the following bound.
%
%
\begin{lemma}\label{tee}
Under the same hypotheses of Theorem~\ref{dentini}, it holds
\[
\lim_{\alpha\uparrow\infty}
\sup_{F\in C^{\infty}([0,T]\times\bb T^d; \bb R^d)}
\limsup_{N\uparrow\infty}
\frac{1}{N^d} \log\bb P^{E,H, \mathbf{g},N}_{\eta^N}
\bigl(\mc Q_{F}(\pi^{N}) > \alpha\bigr) =-\infty.
\]
\end{lemma}

Fix a countable family $\{F_k\}\subset C^\infty([0,T]\times\bb
T^d; \bb R^d)$ of smooth vector fields dense in
$C^1([0,T]\times\bb T^d;\bb R^d)$.
Given $n\in\bb N$ and $\alpha\in\bb R_+$, set
%
%
\begin{equation}
\label{4man}
\mc M^{\alpha,n}:= \Bigl\{ \pi\in\mc M_{[0,T]}\dvtx
\max_{k\in\{1,\ldots,n\}} \mc Q_{F_k} (\pi) \le\alpha
\Bigr\},
\end{equation}
so that $\mc M^{\alpha}:= \{\pi\in\mc M_{[0,T]}\dvtx \mc Q (\pi
)\le
\alpha\} = \bigcap_n \mc M^{\alpha,n}$. The following statement is
then an immediate corollary of Lemma~\ref{tee}.
%
%
\begin{corollary}
\label{tcee}
Under the same hypotheses of Theorem~\ref{dentini}, it holds
\[
\limsup_{\alpha\uparrow\infty, n\uparrow\infty, N\uparrow
\infty}
\frac{1}{N^d} \log\bb P^{E,H, \mathbf{g}, N}_{\eta^N}
( \pi^{N} \notin\mc M^{\alpha,n} ) = -\infty.
\]
\end{corollary}

\subsection*{Identification of the hydrodynamic equation}

The following result will allow us to characterize the limit points of
$\{\bb P^{E,H,\mathbf{g},N}_{\eta^N} \circ(\pi^N)^{-1}
\}_{N\ge1}$. Recall the notation for the smooth convolution
introduced in (\ref{iacopo}).
%
%
\begin{proposition}
\label{asinelli}
Given $\varphi\in C^1([ 0,T]\times\bbT^d )$ and a path $\pi\in\mc
M_{[0,T]}$, set
\begin{eqnarray*}
W_T(\pi) & := & \langle\pi_T,\varphi_T\rangle
- \langle\pi_0, \varphi_0\rangle
-\int_0^T dt\, \langle\pi_t, \partial_t \varphi_t \rangle
\\
&&{} + \int_0^T dt\,
\langle\nabla\varphi_t,
\sigma(\tilde{\pi}^{\k, a}_t) [E +\nabla H_t]
- D(\tilde{\pi}^{\k, a}_t)
\nabla\tilde{\pi}^{\k', \epsilon}_t\rangle.
\end{eqnarray*}
Then, under the same hypotheses of Theorem~\ref{dentini}, for each
$\z>0$ it holds
\[
\limsup_{\k\downarrow0, a \downarrow0, \k'\downarrow0,
\e\downarrow0, N \uparrow\infty}
\mathbb{P} ^{E, H, \mathbf{g},N}_{\eta^N}\bigl(
|W_T(\p^N)|>\z\bigr) =0.
\]
\end{proposition}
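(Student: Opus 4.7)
My plan is to apply Dynkin's formula to a corrected version of $\langle \pi^N,\varphi_t\rangle$ and then close the computation by invoking the super-exponential replacement chain of Proposition~\ref{cile}. With $\delta>0$ to be sent to zero at the end and $\mathbf{g}[\delta]=\{g^{(i)}[\delta]\}$ the family of good functions produced by Lemma~\ref{salsiccia}, I introduce
\begin{equation*}
\tilde f_t(\eta) \;:=\; \langle \pi^N(\eta),\varphi_t\rangle \;+\; \frac{1}{N}\,\Av_x \sum_{i=1}^d \nabla_i^N \varphi_t(\tfrac{x}{N}) \, g^{(i)}[\delta]\bigl(\tau_x \eta,\bar\eta_{x,\ell}\bigr)\,,
\end{equation*}
noting that the correction is uniformly $O(1/N)$ in $\eta$ and hence negligible. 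Under $\bb P^{E,H,\mathbf{g},N}_{\eta^N}$ the process $M_T^N := \tilde f_T - \tilde f_0 - \int_0^T [\partial_t + L^{E,H,\mathbf{g}}_{t,N}]\tilde f_t\,dt$ is a mean zero martingale; a direct bond-by-bond computation shows that its quadratic variation is $O(T/N^d)$, and the exponential Chebyshev inequality then yields super-exponential, hence in probability, vanishing of $M_T^N$.

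Next, a first order Taylor expansion of the exponentials defining the perturbed rates in~\eqref{delfinodino}, combined with the uniform bound~\eqref{enrico}, produces (up to a uniform $O(1/N)$ error)
\begin{equation*}
L^{E,H,\mathbf{g}}_{t,N}\langle \pi^N,\varphi_t\rangle \;=\; T_1 + T_3 + T_4 + T_{5,\mathbf{g}}\,,
\end{equation*}
whose summands correspond to the symmetric current, the $E$-drift, the $\nabla H_t$-drift, and the $\nabla\bar F$-contribution. The analogous expansion for the correction, obtained by splitting $L^{E,H,\mathbf{g}}_{t,N}$ into its $L_{0,N}$, $E+\nabla H_t$ and $\nabla\bar F$ pieces, generates (again up to $O(1/N)$) the further terms $T_{9,\mathbf{g}[\delta]}$, $T_{13,\mathbf{g}[\delta]}$, and $T_{14,\mathbf{g},\mathbf{g}[\delta]}$ respectively; the first requires a trivial additional averaging over $y$ with $|y-x|\leq\ell_1$ that costs only a negligible error by the slow variation of $\nabla_i^N\varphi_t$.

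At this point the estimates of Proposition~\ref{cile} close the argument. Applying \eqref{grazia1}, \eqref{grazia3}, and \eqref{miracolo} gives $T_1 \sim -T_{11,\mathbf{g}[\delta]} - T_{16}$; \eqref{grazia4}, \eqref{annarella1}, and \eqref{grazia5} give $T_3+T_4\sim K_1$ and $T_{5,\mathbf{g}}\sim K_{4,\mathbf{g}}$; and \eqref{annarella3}, \eqref{annarella4}, \eqref{bella1}, \eqref{gaetano}, \eqref{bella2} give $T_{9,\mathbf{g}[\delta]}\sim T_{11,\mathbf{g}[\delta]}$, $T_{13,\mathbf{g}[\delta]}\sim K_{2,\mathbf{g}[\delta]}$, $T_{14,\mathbf{g},\mathbf{g}[\delta]}\sim K_{3,\mathbf{g},\mathbf{g}[\delta]}$. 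The $\pm T_{11,\mathbf{g}[\delta]}$ terms cancel, and the final estimates \eqref{barbalalla1}--\eqref{barbalalla2} yield $K_{3,\mathbf{g},\mathbf{g}[\delta]} + K_{4,\mathbf{g}}\sim 0$ together with $K_1 + K_{2,\mathbf{g}[\delta]}\sim K_5$. One concludes that $\int_0^T L^{E,H,\mathbf{g}}_{t,N}\tilde f_t\,dt \sim \int_0^T [K_5 - T_{16}]\,dt$. By definition $T_{16}$ equals $\langle\nabla\varphi_t, D(\tilde\pi^{\k,a}_t)\nabla\tilde\pi^{\k',\e}_t\rangle$, and a standard one-block/two-block argument (using the uniform continuity of $\sigma$ granted by Assumption~\ref{t:regmob}) identifies $K_5$ with $\langle\nabla\varphi_t, \sigma(\tilde\pi^{\k,a}_t)[E+\nabla H_t]\rangle$ in the joint limit. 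Substituting back into the Dynkin identity gives $W_T(\pi^N)\to 0$ in probability, as required.

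The main obstacle is the central non-gradient estimate \eqref{grazia3}, which encodes the Varadhan-Yau replacement of the local current by $-D\nabla\rho$ up to the orthogonal correction $L_0 g^{(i)}[\delta]$; here this is used as a black box from Proposition~\ref{cile}. The delicate practical point is the bookkeeping: one must verify that the expansion of $L^{E,H,\mathbf{g}}_{t,N}$ applied to the correction term produces precisely the terms $T_{9,\mathbf{g}[\delta]}$, $T_{13,\mathbf{g}[\delta]}$, $T_{14,\mathbf{g},\mathbf{g}[\delta]}$ needed to activate the cancellations, and that the joint limit in the parameters $(\delta,\ell,a,\k,\k',\e,c,N)$ can be taken in an order compatible with every estimate invoked.
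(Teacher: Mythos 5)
Your proposal is correct and follows essentially the same chain of super-exponential replacements as the paper's proof: Taylor expansion of the rates, the one/two block estimates, the spectral gap bounds, and the algebraic cancellations organized through the terms $T_i,K_i$ of Proposition~\ref{cile}. The only structural difference is where the non-gradient correction lives. The paper computes the Dynkin identity for the bare observable $\langle\pi^N,\varphi_t\rangle_*$, which produces $T_1+T_3+T_4+T_{5,\mathbf g}$, and then \emph{adds} the term $T_{8,\mathbf g,\mathbf g[\delta]}$ a posteriori, after showing by a separate martingale estimate that $\int_0^T T_8\,dt$ is negligible in probability. You instead fold the non-gradient correction directly into the observable, setting $\tilde f_t = \langle\pi^N,\varphi_t\rangle + N^{-1}\Av_x\sum_i\nabla^N_i\varphi_t(\tfrac xN)\,g^{(i)}[\delta](\tau_x\eta,\bar\eta_{x,\ell})$, so that the analogue of $T_8$ appears automatically in the generator term $L^{E,H,\mathbf g}_{t,N}\tilde f$. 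These are equivalent: the additional martingale estimate the paper invokes is exactly the statement that Dynkin applied to the $O(1/N)$ correction $\bar f$ alone has vanishing contribution, which is what you absorb into the Dynkin identity for $\tilde f$. Your route is arguably a touch cleaner conceptually (it makes the role of $\mathbf g[\delta]$ visible from the start and parallels the perturbation $\bar F$ of \eqref{beta}), at the cost that your correction uses a single $\tau_x$ rather than the inner average $\Av_{y:|y-x|\leq\ell_1}\tau_y$ appearing in $T_8$; as you note, restoring that average is a replacement of the $T_1\sim T_2$ type, relying on the smoothness of $\varphi$ and (for the $\bar\eta_{x,\ell}\leftrightarrow\bar\eta_{y,\ell}$ shift) on the Lipschitz property of good functions, so it is handled by the same toolbox. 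Two small caveats worth making explicit if you wrote this out fully: (i) all of the estimates of Proposition~\ref{cile} are stated under $\bb P^{0,N}_{\mu_N}$, so each $\sim$ needs Remark~\ref{t:urn} to transfer them to $\bb P^{E,H,\mathbf g,N}_{\eta^N}$; (ii) the quadratic variation of your martingale from the $\bar f$-piece is $O(\ell^{2d}/N^d)$, not uniformly small in $\ell$, so the limit $N\uparrow\infty$ must be taken before $\ell\uparrow\infty$ — which is consistent with the nested limit order used throughout the paper.
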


The proof of the above result will be based on standard
martingale estimates, the super-exponential bounds in
Proposition~\ref{cile}, the Taylor expansion of the
rates
%
%
\begin{eqnarray}
\label{taylor-rate}\quad
c_{x,x+e_i}^{E, H, \mathbf{g}}(\eta)
& = & c_{x,x+e_i}^0(\eta)
+ \frac1{2N} c_{x,x+e_i}^0(\eta) ( \eta_x- \eta_{x+e_i} )
[E_i + \partial_i H_t] \biggl( \frac{x}{N}\biggr)
\nonumber\\[-8pt]\\[-8pt]
&&{} + c_{x,x+e_i}^0(\eta) \nabla_{x,x+e_i} \upbar{F} (t,
\eta)
+ O\biggl(\frac{1}{N^2}\biggr)\nonumber
\end{eqnarray}
and of the currents
%
%
\begin{eqnarray}
\label{taylor-current}\qquad
j_{x,x+e_i} ^{E,H, \mathbf{g}}(\eta)
& = & j_{x,x+e_i}^0 (\eta)
+\frac1{2N} c^0_{x,x+e_i} (\eta) (\eta_x- \eta_{x+e_i})^2
[E_i + \partial_i H_t] \biggl(\frac{x}{N}\biggr)
\nonumber\\[-8pt]\\[-8pt]
&&{} + c^0_{x,x+e_i}(\eta)
( \eta_x- \eta_{x+e_i} ) \nabla_{x,x+e_i} \upbar{F}(t,\eta)
+ O\biggl( \frac{1}{N^2}\biggr),\nonumber
\end{eqnarray}
where the function $\upbar F\equiv\upbar F ^N_{H,\ell, \mathbf{g} }
$ is
the one defined in (\ref{beta}).
\begin{pf}
Given $\psi_1,\psi_2\dvtx\bb T^d\to\bb R$, set
$\langle\psi_1,\psi_2\rangle_*:= \Av_x \psi_1(x/N) \psi_2(x/N)$
and observe that for any $\varphi\in C([ 0,T]\times\bbT^d )$ it
holds
\[
\lim_{N\to\infty} \langle\pi^N(\eta), \varphi_t\rangle_* =
\langle\p^N(\eta), \varphi_t \rangle
\]
uniformly for $t \in[0,T]$ and $\eta\in\Omega_N$. Hence, it is
enough to prove the statement with $\langle\cdot,\cdot\rangle$
replaced by $\langle\cdot,\cdot\rangle_*$.

By standard\vspace*{1pt} martingale estimates (see~\cite{KL}) and recalling the definition
of $L^{E,H, \mathbf{g}} _{t,N}$ given after (\ref{delfinodino}),
we get
%
%
\begin{eqnarray}\label{passo0}
&&
\lim_{N\uparrow\infty} \mathbb{P}^{E,H, \mathbf{g},N}_{\eta^N}
\biggl( \biggl|
\langle\pi^N(\eta(T)), \varphi_T\rangle_* -
\langle\pi^N(\eta(0)), \varphi_0\rangle_*\nonumber\\[-3pt]
&&\qquad\hspace*{49.5pt}{}-
\int_0^T dt\, \langle\pi^N(\eta(t)),
\partial_t \varphi_t\rangle_*
\\[-3pt]
&&\hspace*{17.6pt}\hspace*{49.5pt}\qquad{}
- \int_0^T dt\,
L^{E,H, \mathbf{g}} _{t,N} \langle\pi^N(\eta(t)), \varphi_t\rangle_*
\biggr| > \z\biggr)=0.\nonumber
\end{eqnarray}

We next introduce the microscopic scale parameters $\ell,c$ and the
family of good functions provided by Lemma~\ref{salsiccia} which, as in
the previous section, is denoted by~$\mathbf{g}[\delta]$.
All approximations below have to be understood with respect to the
limits $N \uparrow\infty$, $c \downarrow0$, $\e\downarrow0$,
$\k'\downarrow0$, $a \downarrow0$, $\ell\uparrow\infty$, $\k
\downarrow0$ and finally $\delta\downarrow0$. We use the functions
$T_1,\ldots, T_{16}$ and $K_1, \ldots, K_5$ introduced in Section~\ref{s31}.
Below we frequently use Remark~\ref{turn} without
explicit mention.

Since\vspace*{-2pt}
\[
L_{t,N}^{E,H, \mathbf{g}} \eta_x
= N^2 \sum_{i=1}^d j^{E,H,\mathbf{g}}_{x-e_i,x} (\eta)
- N^2 \sum_{i=1}^d j^{E,H, \mathbf{g}}_{x,x+e_i}(\eta),\vspace*{-2pt}
\]
summing by parts and using the Taylor expansion
(\ref{taylor-current}) we deduce
\begin{eqnarray*}
L^{E,H, \mathbf{g}}_{t,N} \langle\pi^N (\eta), \varphi_t\rangle_*
&=& N \sum_{i=1}^d \Av_x
\nabla^N_i \varphi_t \biggl(\frac{x}{N}\biggr)
j^{E,H, \mathbf{g}}_{x,x+e_i} (\eta)
\\[-2pt]
&=& [ T_1 + T_3 + T_4 + T_{5, \mathbf{g}} ] (t,\eta)
+ o(1).
\end{eqnarray*}
In particular, inside (\ref{passo0}) we can replace the last
integrand by
$[T_1 + T_3+ T_4 + T_{5,\mathbf{g}}](t,\eta(t))$.
By (\ref{grazia1}) and (\ref{grazia3}), we can replace $T_1$ by
$T_2$ and then $T_2$ by $- T_{11, \mathbf{g}[\delta]}-T_{12}$.
By (\ref{grazia4}), we can replace $T_3+T_4$ by $K_1$. By
(\ref{annarella1}) and (\ref{grazia5}), we
can replace $T_{5, \mathbf{g}} $ by $T_{6, \mathbf{g}}$ and then
$T_{6, \mathbf{g}}$ by $K_{4,\mathbf{g}} $.
In conclusion, inside (\ref{passo0}) we can replace the last
integrand by
%
%
\begin{equation}
\label{rotto1}
\bigl[ K_1+K_{4, \mathbf{g}} - T_{11, \mathbf{g}[\delta]}-T_{12}\bigr]
(t, \eta(t)).
\end{equation}
By a standard martingale estimate (see the paragraph before Lemma
3.6 in~\cite{VY}), it holds
\[
\limsup_{\ell\uparrow\infty, N\uparrow\infty}
\mathbb{P}^{E,H, \mathbf{g},N}_{\eta^N} \biggl( \biggl|
\int_0^T dt\,
T_{8, \mathbf{g}, \mathbf{g}[\delta]} (t, \eta(t))
\biggr| > \z\biggr)=0.
\]
In particular, in (\ref{rotto1}) we can add
$T_{8, \mathbf{g}, \mathbf{g}[\delta]}(t, \eta(t) )$.
By (\ref{annarella2}), this last expression is equivalent to $T_{7,
\mathbf{g}, \mathbf{g}[\delta]}(t, \eta(t))$. On the other hand, by
the Taylor expansion (\ref{taylor-rate}) we can write
\[
T_{7, \mathbf{g}, \mathbf{g}[\delta]} (t, \eta)= \bigl[T_{11, \mathbf
{g}[\delta]}+T_{13,
\mathbf{g}[\delta]}+T_{14,\mathbf{g},
\mathbf{g}[\delta]}\bigr](t, \eta)+o(1).
\]
By (\ref{bella1}), we can replace
$T_{13,\mathbf{g}[\delta]}$ by $K_{2, \mathbf{g}[\delta]}$, while by
(\ref{gaetano}) and (\ref{bella2}) we can replace $T_{14,
\mathbf{g}, \mathbf{g}[\delta]}$ by $T_{15,\mathbf{g},
\mathbf{g}[\delta]}$ and this by $K_{3, \mathbf{g}, \mathbf
{g}[\delta]}$.\vadjust{\goodbreak}

Let us stop and see where we are: up to now we have showed that
inside (\ref{passo0}) we can replace the last integrand by
\[
\bigl[ K_1+K_{2, \mathbf{g}[\delta] }+ K_{3,
\mathbf{g},\mathbf{g}[\delta] }+K_{4, \mathbf{g} }
-T_{12}\bigr](t, \eta(t)).
\]
In view of the estimates (\ref{barbalalla1}) and (\ref{barbalalla2}),
the above expression can be replaced by $[K_5-T_{12}] (t, \eta(t) )$.
Finally, using the two blocks estimate in Lem\-ma~\ref{duedue}, we can
replace in $K_5$ the microscopic scale $\ell$ with the mesoscopic one
$a N$ getting a new expression $[K_5'-T_{12}] (t, \eta(t) )$. Given
$\p\in\mc M$, we define $\p^a (r):= \p\ast\psi(r) $ where
$\psi(r):=(2a)^{-d} \bb I ( |r| \leq a)$. Due to (\ref{miracolo}) we
can replace $T_{12}$ with $T_{16}$ and, using the regularity of $\s$,
we can replace $\int_0^T dt\, K_5'(t, \eta(t) ) $ by $\int_0^T dt\,
\langle\nabla\varphi_t, \sigma( \pi^N(\eta(t))^a ) [E+\nabla
H_t]\rangle$. In addition, since $|\pi^N(\eta)^{a}
-\tilde{\pi}^N(\eta)^{\k',a}|_\infty \leq C\k'$, we can replace
$\pi^N(\eta)^{a}$ with $\tilde{\pi}^N(\eta)^{\k',a}$. Comparing with
the definition of $W_T$, the proof is complete.
\end{pf}

We can now conclude the proof of the hydrodynamic limit.
\begin{pf*}{Proof of Theorem~\ref{dentini}}
Set $\mc P_N^{E,H}:=\bb P^{E,H,\mathbf{g},N}_{\eta^N}
\circ(\pi^N)^{-1}$. As proven before, the sequence
$\{\mc{P}^{E,H}_N\}$ is relatively compact. We therefore only need
to show that any limit point $\mc P$ equals $\delta_u$. By taking a
subsequence, we can assume that $\mc P_N^{E,H}$ converges weakly to
$\mc P$. By the continuity of $\mc Q_{F}$ and Portmanteau
theorem $\mc P(\mc{M}^{\alpha,n}) \ge\limsup_N \mc
P^{E,H}_N(\mc{M}^{\alpha,n})$. Corollary~\ref{tcee}
then yields $\lim_{\alpha\to\infty} \mc P( \mc{M}^\alpha)
=1$. Hence, $\mc P$ almost surely, the weak gradient $\nabla\pi$
belongs to $L^2 ([0,T]\times\bb T^d, dt \,dr;\bb R^d)$.

We write the function $W_T$ defined in Proposition~\ref{asinelli} as
$ W_T (\tilde{\p}^{\k,a}, \tilde{\p}^{\k',\e} )$. Moreover, given
$\pi\in\mc M_T$ satisfying the energy estimate, we let
$W_T(\pi,\pi)$ be the same expression with $\tilde{\p}^{\k,a}$ and
$\tilde{\p} ^{\k',\e}$ both replaced by $\pi$. By Schwarz
inequality and the regularity of $D$ and $\s$, there exists a
constant $C$ not depending on the scale parameters such that
\[
| W_T (\tilde{\p}^{\k,a}, \tilde{\p} ^{\k',\e} )
-W_T (\p, \p) |
\leq C ( \| \tilde\p^{\k,a} -\p\|_2
+ \| \nabla\tilde{\p}^{\k', \e}- \nabla\p\|_2 ),
\]
where $\|\cdot\|_2$ is the norm in $L^2 ([0,T]\times\bb T^d,
dt \,dr)$. Since $\|\pi_t \|_\infty\le1 $ and $\mc P$ almost surely
$\nabla\pi$ belongs to $L^2( [0,T]\times\bb T^d, dt \,dr;\bb R^d)$
by standard properties of convolutions we deduce that for each
$\zeta>0$
\[
\limsup_{\k\downarrow0, \epsilon\downarrow0,\k'\downarrow0,
a\downarrow0}
\mc P \bigl( | W_T (\tilde{\p}^{\k,a}, \tilde{\p} ^{\k',\e} )
-W_T (\p, \p) | > \zeta\bigr)=0.
\]
On the other hand, Proposition~\ref{asinelli} and Portmanteau
theorem imply that for each $\zeta>0$
\begin{eqnarray*}
&&
\limsup_{\k\downarrow0, \epsilon\downarrow0,\k'\downarrow0,
a\downarrow0}
\mc P\bigl(
|W_T(\tilde{\p}^{\k,a},\tilde{\p}^{\k',\e})|>\zeta
\bigr)
\\
&&\qquad \le
\limsup_{\k\downarrow0, \epsilon\downarrow0,\k'\downarrow0,
a\downarrow0, N\uparrow\infty}
\mc P_N^{E,H} \bigl(
|W_T(\tilde{\p}^{\k,a},\tilde{\p}^{\k',\e})|>\zeta
\bigr) =0.
\end{eqnarray*}

The above results readily imply that the identity $W_T(\pi,\pi)=0$
holds $\mc P$ almost surely. Since by hypothesis\vadjust{\goodbreak} the sequence
$\{\eta^N\}$ is associated to the profile $\g$, this amounts to say
that $\pi$ is $\mc P$ almost surely a weak solution to (\ref{4he}).
By the uniqueness of such solution we conclude $\mc
P=\delta_u$.
\end{pf*}

\section{Dynamical large deviation principle}

In this section we prove Theorem~\ref{tdldp}. Since the driving
field $E$ and the time $T$ are here kept fixed, we drop them from
most of the notation. In particular, the space $\mc M_{[0,T]}$ is
denoted by $\mc M$ and the rate function defined in (\ref{2Ig}) by
$I(\cdot|\gamma)$. Recall that $ \mc P^{E,N}_{\eta^N}:=\bb
P^{E,N}_{\eta^N}
\circ(\pi^N)^{-1}$.

\subsection{Upper bound}
\label{s4}

We first outline the basic strategy, which is the classical Varadhan's
one~\cite{Vld} for Markov processes applied to the context of
interacting particle systems in the diffusive scaling limit
\cite{BLM,KL,KOV,Q,QRV}.
In view of the exponential tightness already proven, it is enough to
show the upper bound (\ref{2ub}) for compact sets. Moreover,
Corollary~\ref{tcee} implies that the probability of paths $\pi$ not
satisfying the energy estimate is super-exponential small as $N$
diverges; more precisely, that the large deviations rate function is
infinite if the weak gradient of $\pi$ does not belong to
$L^2([0,T]\times\bb T^d,dt \,dr;\bb R^d)$, that is, the second line in
(\ref{2Ig}).
By constructing a suitable family of exponential martingales for the
probability measures $\bb P^{E,N}_{\eta^N}$, we then essentially prove
that for any measurable set $\mc B$ in $\mc M$ and any function
$H\in C^{1,2}([0,T]\times\bb T^d)$
%
%
\begin{equation}
\label{5ubb}
\limsup_{N\to\infty} \frac1{N^d} \log
\mc P^{E,N}_{\eta^N}(\mc B) \le- \inf_{\pi\in\mc B}
J_{H,\gamma}(\pi),
\end{equation}
where, recalling (\ref{2lpH}), if $\pi\in\mc M$ satisfies the
energy estimate $J_{H,\gamma}(\pi)$ is given by
%
%
\begin{eqnarray}
\label{5JH=}\quad
J_{H,\gamma}(\pi) &=& \ell_{\gamma,\pi}(H)
- \int_0^T dt\, \langle\nabla H_t,\sigma(\pi_t)\nabla H_t\rangle
\nonumber\\
&=& \langle\pi_T, H_T \rangle- \langle\gamma, H_0
\rangle
\\
&&{} -
\int_0^{T} dt\, \bigl[ \langle\pi_t,\partial_t
H_t \rangle+\langle\sigma(\pi_t) [E+\nabla H_t] -
D(\pi_t)\nabla\pi_t,\nabla H_t\rangle\bigr].\nonumber
\end{eqnarray}
This is clearly the main step of the proof; the exponential
martingales are constructed from the microscopic dynamics and are
not a function of the empirical density. However, the
super-exponential bounds proven in Proposition~\ref{cile} imply that
such exponential martingales can be approximated by functions of the
empirical density with probability super-exponentially close to one
as $N$ diverges. In view of the variational definition (\ref{2Ig})
of the rate function $I(\cdot|\gamma)$, the upper bound (\ref{2ub})
for compact sets then follows from (\ref{5ubb}) and
(\ref{5JH=})
by an application of a min--max lemma. As stated before, while for
gradient models the exponential martingales are constructed simply
by changing the driving field, for nongradient models, the
correction provided by Lemma~\ref{salsiccia} is needed.\vadjust{\goodbreak}

\subsubsection*{Exponential martingales}

Fix $E \in C^1( \bbT^d; \bbR^d)$,
$H\in C^{1,2} ([0,T]\times\bb T^d)$ and a family
of good functions $\mathbf{g}=\{g^{(i)}\dvtx 1\leq i \leq d \}$.
Given $\ell\ge1$, recall the definition of the function
$F\equiv F^N_{H,\ell, \mathbf{g}}$ given in (\ref{alfa}) and
consider the
exponential martingale $\mc E \equiv\mc E ^N_{H, \ell, \mathbf{g} }$
associated to the function $2F$, that is,
%
%
\begin{eqnarray}
\label{5expM}
\mc E (t) &:=& \exp\biggl\{ 2 F (t, \eta(t)) - 2 F(0,\eta(0))
\nonumber\\[-8pt]\\[-8pt]
&&\hspace*{20pt}{}
-\int_0^t ds\, \bigl[
e^{-2F (s,\eta(s))} (\partial_s+L_{E,N}) e^{2F(s,\eta(s))}
\bigr]\biggr\}.\nonumber
\end{eqnarray}
By, for example,~\cite{KL}, Appendix 1.7, $\mc E(t)$ is indeed a mean one positive
martingale with respect to the measure $\bb P^{E,N}_{\eta^N}$. We
next\vspace*{1pt}
show that, as $N$ diverges, $ \mc E$ is super-exponentially close to a
function of the empirical density. The first step, stated below, comes
directly from a Taylor expansion of the exponential and
(\ref{taylor-rate}); we therefore omit the proof.
%
%
\begin{lemma}
\label{texpE}
Set $\mc{J}^N_{H,\ell, \mathbf{g} }(\eta):= N^{-d} \log\mc
E^N_{H,\ell,\mathbf{g}}(T)$, $\eta\in D([0,T];\Omega_N)$. Then
\begin{eqnarray*}
\mc{J}^N_{H, \ell, \mathbf{g} }(\eta) &=& \langle\pi^N(\eta
(T)),H_T\rangle
-\langle\pi^N(\eta(0)),H_0\rangle
\\
&&{}
- \int_0^T dt\, [ \langle\pi^N(\eta(t)),\partial_t H_t\rangle
+ J_1 (t, \eta(t)) + J_2 (t, \eta(t))\\
&&\hspace*{125.1pt}{} + J_3 (t, \eta(t))
+ R (t,\eta(t))],
\end{eqnarray*}
where, for $\eta\in\Omega_N$,
\begin{eqnarray*}
J_1(t,\eta) &=& J^{N}_{1, H, \ell, \mathbf{g} } (t, \eta)\\
:\!&=&N \Av_x\sum_{i=1}^d c^0_{x,x+e_i}(\eta)
\Biggl[\nabla^N_i H_t\biggl(\frac xN\biggr) (\eta_x-\eta_{x+e_i})\\
&&\hspace*{90.5pt}{} +
\nabla_{x,x+e_i} \sum_{z }\sum_{j=1}^d
\nabla^N_j H_t\biggl(\frac zN\biggr)
g^{(j)} (\tau_z\eta,\bar{\eta}_{z,\ell}) \Biggr]
,\\
J_2(t,\eta) &=& J^{N}_{2,H, \ell, \mathbf{g}} (t, \eta)\\
:\!&=& \frac12 \Av_x \sum_{i=1}^d
c^0_{x,x+e_i}(\eta) E_i\biggl( \frac xN\biggr) (\eta_x- \eta_{x+e_i})\\
&&\hspace*{36.5pt}{} \times
\Biggl[ \nabla^N_i H_t\biggl(\frac xN\biggr) (\eta_{x}-\eta_{x+e_i})\\
&&\hspace*{54.1pt}{} + \nabla_{x,x+e_i} \sum_{z }
\sum_{j=1}^d \nabla^N_j H_t\biggl(\frac zN\biggr)
g^{(j)}(\tau_z\eta,\bar{\eta}_{z,\ell})\Biggr],
\\
J_3(t,\eta) &=& J^{N}_{3,H, \ell, \mathbf{g} }(t, \eta)\\[-2pt]
:\!&=& \frac12
\Av_x\sum_{i=1}^d c^0_{x,x+e_i}(\eta)
\Biggl[ \nabla^N_i H_t\biggl(\frac xN\biggr) (\eta_{x}-\eta_{x+e_i})\\[-2pt]
&&\hspace*{89pt}{} + \nabla_{x,x+e_i} \sum_{z}\sum_{j=1}^d
\nabla^N_j H_t\biggl(\frac zN\biggr)
g^{(j)}(\tau_z\eta,\bar{\eta}_{z,\ell}) \Biggr]^2,
\end{eqnarray*}
while the error term $R= R^N_{H, \ell, \mathbf{g} }$ satisfies
\[
\sup_{t\in[0,T]} \sup_{\eta\in\Omega_N}
| R(t, \eta) | \le\frac C{N}
\]
for some constant $C>0$ depending on $T,H,\ell, \mathbf{g}$.
\end{lemma}

We next choose the family $\mathbf{g}$ as the one provided by
Lemma~\ref{salsiccia}; as usual, we denote it by $ \mathbf{g}[\delta]$.
Then the super-exponential estimates in Proposition~\ref{cile}
together with Remark~\ref{turn} imply the following key result.
%
%
\begin{proposition}
\label{tJNJ}
Fix $T>0$, $E \in C^1(\bb T^d;\bbR^d)$, $H\in C^{1,2}
([0,T]\times\bb T^d)$, a~profile $\gamma\in M$, a sequence
$\{\eta^N\in\Omega_N\}$ associated to $\gamma$, and let
$\mc{J}^N_{H,\ell, \mathbf{g} }$ be defined as in
Lemma~\ref{texpE}.
Then, for each $\z>0$ it holds
\begin{eqnarray*}
&&\limsup_{\delta\downarrow0, \k\downarrow0, \ell\uparrow
\infty,
a\downarrow0, \k'\downarrow0, \epsilon\downarrow0, N
\uparrow\infty}
\frac1{N^d}
\log\bb P^{E,N}_{\eta^N} \bigl(
\bigl| \mc{J}^N_{H,\ell, \mathbf{g}[\delta]}(\eta)
- \widehat{J}_{H,\gamma}( \p^N(\eta) ) \bigr|>\zeta\bigr)
\\[-2pt]
&&\qquad
= -\infty,
\end{eqnarray*}
where, for $\pi\in\mc M$,
%
%
\begin{eqnarray}
\label{hatJ}
\widehat{J}_{H,\gamma}(\pi)
&=& \langle\pi_T, H_T \rangle- \langle\gamma, H_0 \rangle
\nonumber\\[-2pt]
&&{} -
\int_0^{T} dt\, \bigl[ \langle\pi_t, \partial_t H_t\rangle\\[-2pt]
&&\hspace*{45.2pt}{}
+\langle\sigma(\tilde{\pi}^{\k,a}_t) [E+\nabla H_t] - D
(\tilde{\pi}^{\k,a}_t) \nabla\tilde{\pi}^{\k',\e} _t,
\nabla H_t\rangle\bigr].\nonumber
\end{eqnarray}
\end{proposition}
\begin{pf}
In what follows we write $\mathbf{g}$ instead $\mathbf{g}[\delta]$,
understanding the dependence on $\delta$. In order to have compact
formulae below, it is also convenient to introduce the following
notation.
Given functions $F_1, F_2$ on $[0,T]\times\Omega_N$ depending also on the
parameters $\delta, \k, \ell, a, \k',\e,c,N$, we write $F_1\sim
F_2$ if
for any $\z>0$ it holds
\begin{eqnarray*}
&&\limsup_{\delta\downarrow0, \k\downarrow0, \ell\uparrow
\infty,
a\downarrow0, \k'\downarrow0, \e\downarrow0,
c\downarrow0, N \uparrow\infty}
\frac1{N^d} \\[-2pt]
&&\quad{}\times\log\bb P^{E,N}_{\eta^N} \biggl(
\biggl| \int_0^T dt\,
[F_1(t, \eta(t) )
- F_2(t, \eta(t) )] \biggr|>\zeta\biggr)\\[-2pt]
&&\qquad= -\infty.
\end{eqnarray*}

We use Lemma~\ref{texpE} and analyze separately the terms $J_1,
J_2, J_3$. We start by $J_1$, which can be rewritten as
\[
J_1(t, \eta)= N \Av_x \sum_{i=1}^d
\nabla^N_i H_t \biggl(\frac xN\biggr)
\bigl[ j^0 _{x,x+e_i}(\eta) +
L_0 g^{(i) } (\tau_x \eta,\upbar{\eta}_{x,\ell}) \bigr].
\]
Consider the expressions $T_1,\ldots, T_{16}, K_1,\ldots, K_5$ defined
in Section~\ref{s31}, where now the function $\varphi$ entering
in their definition has to be replaced by $H$. By the same
arguments used to derive (\ref{grazia1}), it holds $J_1 \sim T_2 +
T_{9, \mathbf{g} }$. Due to (\ref{annarella3}) and
(\ref{annarella4}) we then get
$T_{9, \mathbf{g}} \sim T_{10,\mathbf{g} } \sim T_{11, \mathbf{g}}$.
Hence, we get that $J_1 \sim T_2+ T_{11, \mathbf{g} }$. Finally,
by (\ref{grazia3}) and (\ref{miracolo}), we get
%
%
\begin{equation}
\label{carbonara1}
J_1 (t, \eta) \sim- T_{12 }(t, \eta)
\sim- T_{16}(t, \eta).
\end{equation}

We now analyze the term $J_2$. Due to Definition~\ref{buono} of good
function, in the expression of $J_2$ given in Lemma~\ref{texpE} we
can restrict the sum over $z$ to the set $\{z\dvtx |z-x|\leq
(C+\ell)\}$, where the constant $C>0$ is such that the functions
$g^{(i)} (\cdot, \rho) $ have support inside $\L_C$ for all $i=1,
\ldots, d$ and $\rho\in[0,1]$. As a consequence, in $J_2$ we can
first replace discrete gradients by partial derivatives; afterward
we can replace $\partial_j H_t(z/N)$ by $\partial_j H_t(x/N)$ with
an error $O(\ell/N)$. Moreover, similar to (\ref{annarella1}), we can
replace $\nabla_{x,x+e_i}$ with $\nabla^1_{x,x+e_i}$. At this point,
by the one block estimate and (\ref{bella1}), we get
\begin{eqnarray*}
J_2 (t, \eta)
&\sim&\frac{1}{2} \Av_x \sum_{i=1}^d
E_i\biggl(\frac xN\biggr) \,\partial_i H_t \biggl(\frac xN\biggr)
\mu_{\upbar{\eta}_{x, \ell} }
[ c_{0,e_i }^0(\z) ( \z_0 - \z_{e_i})^2]
\\
&&{} +\frac{1}{2} \Av_x \sum_{i=1}^d \sum_{j=1}^d
E_i\biggl( \frac xN\biggr) \,\partial_j H_t \biggl(\frac xN\biggr)\\
&&\hspace*{66pt}{}\times
\mu_{\upbar{\eta}_{x, \ell} }
\bigl[ c_{0,e_i}^0(\z) (\z_0-\z_i)
\nabla_{0, e_i}^1 \underbar{g}^{(j)}(\z,\bar{\eta}_{x,\ell})\bigr].
\end{eqnarray*}
Recall the discussion of the CLTV in Section~\ref{giulioconiglio},
in particular the definitions of the inner product
$\langle \cdot,\cdot\rangle _\rho$ and of the orthogonal projector $P$.
By (\ref{pp1}), (\ref{pp2}) and Lemma~\ref{salsiccia} we then get
\begin{eqnarray*}
J_2 (t, \eta) & \sim &
\Av_x \Biggl\langle  \sum_{i=1}^d E_i \biggl(\frac xN\biggr) j_{0,e_i}
, \sum_{i=1}^d \partial_i H_t\biggl(\frac xN\biggr)
\bigl[ j_{0,e_i} + L^1_0 \underbar{g}^{(i)}(\cdot,
\rho) \bigr]
\Biggr\rangle _{\rho=\bar{\eta}_{x, \ell} }
\\
&\sim&\Av_x \Biggl\langle
\sum_{i=1}^d E_i \biggl(\frac xN\biggr) j_{0,e_i},
\sum_{i=1}^d \partial_i H_t\biggl(\frac xN\biggr)
P j_{0,e_i} \Biggr\rangle _{\rho= \bar{\eta}_{x, \ell}
}.
\end{eqnarray*}
In view of (\ref{vagabondo1}), we deduce that
$J_2 (t, \eta) \sim{\Av} _x E (x/N) \cdot\s( \bar{\eta
}_{x,\ell}
) \nabla H_t (x/N)$. Applying the two blocks estimate and
afterward making a uniform estimate, we conclude that
%
%
\begin{eqnarray}
\label{carbonara2}
J_2 (t, \eta) &\sim&\Av_x
E \biggl(\frac xN\biggr) \cdot
\sigma\biggl( \tilde{\p}^{N}(\eta)^{\k,a} \biggl(\frac xN\biggr)
\biggr) \nabla H_t \biggl(\frac xN\biggr)\nonumber\\[-8pt]\\[-8pt]
&\sim&\langle E,
\sigma( \tilde{\p}^{N}(\eta)^{\k,a} ) \nabla
H_t\rangle.\nonumber
\end{eqnarray}

We finally consider $J_3$. As done for $J_2$, we can replace
discrete gradients by partial derivatives; afterward we can replace
$\partial_j H_t(z/N)$ by $\partial_j H_t(x/N)$ and finally $\nabla
_{x, x+e_i}$ by $\nabla^1 _{x,x+e_i}$. Then, by the one block
estimate together with (\ref{bella1}) and (\ref{bella2}), we can
write
\begin{eqnarray*}
J_3(t,\eta) &\sim& \frac{1}{2} \Av_x \sum_{i=1}^d
\partial_i H_t\biggl(\frac xN\biggr)^2
\mu_{\bar{\eta}_{x,\ell}}
[ c_{0, e_i}^0(\z) (\z_0-\z_{e_i})^2]
\\
&&{} + \Av_x \sum_{i=1}^d \sum_{j=1}^d
\partial_i H_t\biggl(\frac xN\biggr) \,\partial_j H_t\biggl(\frac xN\biggr)
\mu_{\bar\eta_{x, \ell} }\\
&&\qquad\quad\hspace*{24pt}{}\times\bigl[ c_{0,e_i}^0(\z) ( \z_0-\z_{e_i})
\nabla_{0, e_i}^{1}
\underbar{g}^{(j)}(\z,\bar{\eta}_{x,\ell})\bigr]
\\
&&{} +\frac{1}{2} \Av_x \sum_{i=1}^d \sum_{j=1}^d \sum_{k=1}^d
\partial_j H_t\biggl(\frac xN\biggr) \,\partial_k H_t \biggl(\frac xN\biggr)
\\
&&\hspace*{83pt}{} \times
\mu_{ \bar{\eta}_{x, \ell}}
\bigl[ c_{0,e_i}^0(\z)
\nabla_{0, e_i}^{1 } \underbar{g}^{(j)} (\z,\bar{\eta}_{x,\ell})\\
&&\qquad\quad\hspace*{111pt}{}\times\nabla_{0,e_i}^{1}\underbar{g}^{(k)}(\z,\bar{\eta}_{x,\ell})
\bigr].
\end{eqnarray*}
Recalling that $\langle f,f\rangle _\rho= V_\rho(f)$,
from the identities (\ref{pp1}), (\ref{pp2}), (\ref{pp5}) and
Lemma~\ref{salsiccia} we deduce
%
%
\begin{eqnarray}
\label{carbonara25}
J_3 (t,\eta) & \sim & \Av_x
V_{\rho=\bar{\eta}_{x, \ell}}
\Biggl( \sum_{i=1}^d \partial_i H_t \biggl(\frac xN\biggr)
\bigl[ j_{0,e_i}^0 +
L^1_0 g^{(i)}(\cdot,\rho)\bigr] \Biggr)
\nonumber\\[-8pt]\\[-8pt]
& \sim & \Av_x
V_{\rho=\bar{\eta}_{x, \ell}}
\Biggl( \sum_{i=1}^d
\partial_i H_t \biggl(\frac xN\biggr) P j_{0,e_i}^0 \Biggr).\nonumber
\end{eqnarray}
Then, by (\ref{vagabondo1}), we get $J_3 (t, \eta) \sim
\Av_x \nabla H_t (x/N) \cdot
\s( \bar{\eta}_{x, \ell}) \nabla H_t(x/N)$.
As in the derivation (\ref{carbonara2}) we then conclude
%
%
\begin{eqnarray}
\label{carbonara3}
J_3 (t, \eta) &\sim& \Av_x
\nabla H_t \biggl(\frac xN\biggr) \cdot
\s\bigl(\tilde{\p}^{N}(\eta)^{\k, a} (x/N) \bigr)
\nabla H_t \biggl(\frac xN\biggr)
\nonumber\\[-8pt]\\[-8pt]
&\sim& \langle\nabla H_t,
\s( \tilde{\p}^{N}(\eta)^{\k,a} ) \nabla
H_t\rangle.\nonumber
\end{eqnarray}
The thesis now follows combining Lemma~\ref{texpE},
(\ref{carbonara1}), (\ref{carbonara2}) and (\ref{carbonara3}).
\end{pf}

\subsubsection*{Conclusion}

Recall the definitions of the set $\mc{M}^{\alpha,n}$ in (\ref{4man})
and of the functional $\widehat{J}_{H,\gamma}$ in (\ref{hatJ}).
Let $J_{H,\gamma}^{\alpha,n}\dvtx \mc M \to[0,+\infty]$ be
the functional defined by
%
%
\begin{equation}
\label{5JHan}
J_{H,\gamma}^{\alpha,n}(\pi):=
\cases{
\widehat{J}_{H,\gamma}(\pi), &\quad if $\pi\in\mc M^{\alpha,n}$,\cr
+\infty, &\quad otherwise.}
\end{equation}
Note that, even if not explicitly indicated in the notation, the
functional $ J_{H,\gamma}^{\alpha,n}$ depends also on the parameters
$\k,a,\k',\epsilon$.
%
%
\begin{lemma}
\label{tubB}
Fix $T>0$, a vector field $E\in C^1(\bb T^d;\bb R^d)$, a profile
$\gamma\in M$ and a sequence $\{\eta^N \in\Omega_N\}$
associated to $\gamma$. For each $H\in C^{1,2}([0,T]\times\bb
T^d)$ and each measurable set $\mc B\subset\mc M$, it holds
\[
\limsup_{N\to\infty}
\frac1{N^d} \log\mc P^{E,N}_{\eta^N} (\mc B )
\le\Bigl[ -\inf_{\pi\in\mc B} J_{H,\gamma}^{\alpha,n}(\pi)\Bigr]
\vee R^{\alpha,n}_{\k, \ell, a,\k',\epsilon},
\]
where
\[
\limsup_{\alpha\uparrow\infty, n\uparrow\infty,
\k\downarrow0, \ell\uparrow\infty, a\downarrow0,
\k'\downarrow0, \epsilon\downarrow0}
R^{\alpha,n}_{\k,\ell, a,\k',\epsilon} = -\infty.
\]
\end{lemma}
\begin{pf}
Recall Proposition~\ref{tJNJ} and, given $\zeta>0$, let
$\mc G^N_{H} (\zeta)$ be the subset of $D([0,T];\Omega_N)$ defined
by
\[
\mc G^N_{H} (\zeta):=
\bigl\{\eta\in D([0,T];\Omega_N)\dvtx
\bigl| \mc{J}^N_{H,\ell, \mathbf{g}[\delta]} (\eta)
- \widehat{J}_{H,\gamma}(\pi^{N}(\eta)) \bigr|
\le\zeta\bigr\}.
\]
Given the measurable set $\mc B \subset\mc M$, set also
\[
B^{N}_{H} (\zeta):=
\{\eta\in D([0,T];\Omega_N)\dvtx
\pi^N(\eta) \in\mc B \cap\mc M^{\alpha,n}\}
\cap\mc G^N_{H} (\zeta).
\]
Then, by Proposition~\ref{tJNJ} and Corollary~\ref{tcee}, for each
$\zeta>0$
\[
\limsup_{\alpha\uparrow\infty, n\uparrow\infty,
\delta\downarrow0,
\k\downarrow0, \ell\uparrow\infty, a\downarrow0,
\k'\downarrow0, \epsilon\downarrow0, N\uparrow
\infty}
\frac1{N^d} \log
\bb P^{E,N}_{\eta^N} ( B^{N}_{H} (\zeta)^\complement)
=-\infty.
\]
On the other hand, recalling $\mc E(t)$ in (\ref{5expM}) is a
positive mean one martingale with respect to the probability $\bb
P^{E,N}_{\eta^N}$ and $\mc E(T)=\exp\{N^d \mc J^N_{H,\ell,
\mathbf{g}}\}$,
\begin{eqnarray*}
\bb P^{E,N}_{\eta^N}( B^{N}_{H} (\zeta))
&=& \bb E^{E,N}_{\eta^N}\bigl( \mc E(T)
\exp\bigl\{-N^d \mc J^N_{H,\ell, \mathbf{g}[\delta]}\bigr\}
\bb I_{B^{N}_{H} (\zeta)} \bigr)
\\
&\le& \sup_{\pi\in\mc B}
\exp\{ - N^d [ J^{\alpha,n}_{H,\gamma}(\pi) - \zeta
]\}.
\end{eqnarray*}
The statement is a straightforward consequence of the above bounds.
\end{pf}
\begin{pf*}{Proof of Theorem~\ref{tdldp} the upper bound}
In view of the exponential tightness of the sequence $\{\mc
P^{E,N}_{\eta^N}\}$, it is enough to prove the bound
(\ref{2ub}) for compact sets. Observe\vspace*{1pt} that, for each $H\in
C^{1,2}([0,T]\times\bb T^d)$, the functional
$J_{H,\gamma}^{\alpha,n}$ is lower semicontinuous\vadjust{\goodbreak} on $\mc M$. From
Lemma~\ref{tubB} and the min--max lemma in~\cite{KL}, Appendix 2,
Lemma 3.3, we deduce that for each compact $\mc K\subset
\mc M$
\[
\limsup_{N\to\infty}
\frac1{N^d} \log\mc{P}^{E,N}_{\eta^N} (\mc K )
\le- \inf_{\pi\in\mc K}
\sup_{H,\alpha, n,\k, \ell, a, \k', \epsilon}
\{ J_{H,\gamma}^{\alpha,n}(\pi)
\wedge(- R^{\alpha,n}_{\k,\ell, a,\k',\epsilon} )
\}.
\]
In view of Lemma~\ref{tubB} and the variational definition
(\ref{2Ig}) of the rate function, the proof of (\ref{2ub}) is now
completed by taking the limits $\epsilon\downarrow0,
\k'\downarrow0, a\downarrow0, \ell\uparrow\infty, \k\downarrow0,
n\uparrow\infty, \alpha\uparrow\infty$, and finally optimizing
over $H$ (see~\cite{BLM}, Section~3.3, for more details).
\end{pf*}

\subsection{Lower bound}
\label{s5}

The following is a general result concerning the large deviation
lower bound. Its proof is elementary (see~\cite{Je}, Proposition 4.1).
Given two probability measures $P$ and $Q$ we denote by
$\operatorname{Ent}(Q|P)=\int dQ \log\frac{dQ}{dP}$ the \textit{relative
entropy} of $Q$ with respect to $P$.
%
%
\begin{lemma}
\label{tglb}
Let $\{P_n\}$ be a sequence of probability measures on a Polish
space $\mc X$ and $\mc X^\circ\subset\mc X$. Assume that for each
$x\in\mc X^\circ$ there exists a sequence of probability measures
$\{Q^x_n\}$ which converges weakly to $\delta_x$ and such that
%
%
\begin{equation}
\label{5lent}
\limsup_{n} \frac1n \operatorname{Ent} (Q^x_n | P_n ) \le I^\circ(x)
\end{equation}
for some function $I^\circ\dvtx \mc X^\circ\to[0,+\infty]$. Then $\{
P_n\}$
satisfies the large deviation lower bound with rate function
$I\dvtx\mc X \to[0,+\infty]$ given by
%
%
\begin{equation}
\label{5rilas}
I(x) = \sup_{\mc O \in\mc N_x} \inf_{y \in\mc O \cap\mc X^0}
I^\circ(y),
\end{equation}
where $\mc N_x$ denotes the collection of open neighborhoods of $x$.
\end{lemma}

Let $\tilde I\dvtx \mc X\to[0,+\infty]$ be the functional defined by
\[
\tilde I (x):=
\cases{
I^\circ(x), &\quad if $x\in\mc X^\circ$, \cr
+\infty, &\quad otherwise.}
\]
Then the functional $I$ in (\ref{5rilas}) is the lower
semicontinuous envelope of $\tilde I$, that is, the largest lower
semicontinuous functional below $\tilde I$. As is simple to show, the
condition that a large deviation rate function is lower
semicontinuous is not restrictive. More precisely, if a sequence of
probabilities satisfies the large deviation lower bound for some
rate function $\tilde I$, then the lower bound still holds with the
lower semicontinuous envelope of $\tilde I$. The previous lemma is
therefore stating that the entropy bound (\ref{5lent}) implies the
large deviation lower bound.

We are going to use Lemma~\ref{tglb} with $\mc X^\circ$ given by the
collection of some ``nice'' paths in $\mc M$. For such paths we can
prove the bound (\ref{5lent}) with $I^\circ$ given by the restriction
of the functional $I(\cdot|\gamma)$ defined in (\ref{2Ig}). To
conclude the proof of the lower bound (\ref{2lb}) we then need to
show the functional $I$ in (\ref{5rilas}) coincides with the
functional $I(\cdot|\gamma)$ on the whole space $\mc M$. We start by
defining precisely what we mean by ``nice'' paths. We basically
require that $\pi$ is a smooth function bounded away from zero and
one. However, as $I(\pi|\gamma)< +\infty$ implies $\pi_0=\gamma$
and $\gamma\in M$ is not necessary smooth and bounded away from
zero and one, we shall require that $\pi$ solves the hydrodynamic
equation (\ref{2he}) in some time interval $[0,\tau)$ and $\pi$ is
smooth only on $[\tau,T]\times\bb T^d$.
%
%
\begin{definition}
\label{tnice}
Given $T>0$ and $\gamma\in M$, let $\mc M_{\gamma}^\circ$ be
the collection of the paths $\pi\in\mc M$,
called \textit{nice} paths, satisfying the following conditions:
\begin{longlist}[(iii)]
\item[(i)]the map $(0,T]\times\bb T^d \ni(t,r)\mapsto\pi_t(r)$
is continuous;
\item[(ii)]for each $\delta\in(0,T]$ there exists $\epsilon>0$
such that $\epsilon\le\pi\le1-\epsilon$ in\break $[\delta,T]\times
\bb T^d$;
\item[(iii)]there exists $\tau=\tau_\pi\in(0,T]$ such that, in the
time interval $[0,\tau)$, the path $\pi$ satisfies the energy
estimate and solves (\ref{2he}) while in the time interval
$[\tau,T]$, the map $(t,r) \mapsto\pi_t(r)$ is in
$C^{1,2}([\tau,T]\times\bb T^d)$.
\end{longlist}
\end{definition}

Observe that if $\pi$ belongs to $\mc M_{\gamma}^\circ$, then
$\pi_t\to\gamma$ in $M$ as $t\downarrow0$. Moreover, nice paths
trivially satisfy the energy estimate (\ref{2ee}).

\subsubsection*{Lower bound for nice paths}

Fix $\gamma\in M$, a sequence $\{\eta^N\in\Omega_N\}$ associated
to $\gamma$ and a nice path $\pi\in\mc M_{\gamma}^\circ$. Given
$t\in[\tau_\pi,T]$, regard the first equation in (\ref{2hepsi}) as
a Poisson equation for $\Psi_{\gamma,\pi}$.
In view of Assumption~\ref{tregmob}, item (ii) in
Definition~\ref{tnice} and the bounds (\ref{2bkr}), (\ref{2bsr}),
the symmetric matrix $\sigma(\pi)$ is uniformly elliptic and
continuously differentiable.
Since $\pi$ belongs to $C^{1,2}([\tau_\pi,T]\times\bb T^d)$, by
elliptic regularity, we can solve this equation and get a function,
denoted by $H=H_\pi$, which belongs to $C^{1,2}([\tau_\pi,T]\times
\bb
T^d)$. We understand that for $t=\tau_\pi$ the time derivative
$\partial_t \pi$ stands for the right derivative. We finally extend
$H$ to a piecewise smooth function on $[0,T]\times\bb T^d$ by setting
$H=0$ on $[0,\tau_\pi)\times\bb T^d$. We remark that $H$ can be
discontinuous at $\tau_\pi$.
In any case, $H$ belongs to $\mc H^1(\sigma(\pi))$ and therefore, by
(\ref{2rI}),
%
%
\begin{equation}
\label{6I=}
I(\pi|\gamma)
= \int_{\tau}^T dt\,
\langle\nabla H_t, \sigma(\pi_t) \nabla H_t \rangle
.
\end{equation}

Recall the exponential martingale introduced in (\ref{5expM}) and
let, for the function $H=H_\pi$ constructed above, $\bb
P^{N,E,\pi}_{\eta^N}$ be the probability measures on
$D([0,T];\Omega_N)$ defined by
%
%
\begin{equation}
\label{5cm}
d \bb P^{N,E,\pi}_{\eta^N} = \mc E^N_{H_\pi,\ell, \mathbf
{g}[\delta]} (T)
\,d \bb P^{N,E}_{\eta^N},
\end{equation}
where $\mathbf{g}[\delta]$ is the family of good functions provided
by Lemma~\ref{salsiccia}. Observe that the measures $ \bb
P^{N,E,\pi}_{\eta^N}$ and $\bb P^{N,E}_{\eta^N}$ are equal if
restricted to the time interval $[0,\tau_\p)$.

As we next show, the sequence $\{\bb P^{N,E,\pi}_{\eta^N}\circ
(\pi^N)^{-1}\}$ fulfils the requirements in Lemma~\ref{tglb}.
By, for example,~\cite{KL}, Appendix 1, Proposition 7.3, the probability
$\bb
P^{N,E,\pi}_{\eta^N}$ restricted to the time interval $[\tau,T]$ is
the distribution of the perturbed Kawasaki dynamics with rates
$c_{x,y}^{E,2H,\mathbf{g}[\delta]}$ [see (\ref{delfinodino})].
The construction of the function $H$ and the hydrodynamic limit of the
perturbed Kawasaki dynamics stated in Theorem~\ref{dentini}
(applied with $H=0$, $\mathbf{g}=0$ in the time interval $[0,\tau_\pi)$
and with $H=H_\pi$, $\mathbf{g}= \mathbf{g}[\delta]$ in the time
interval $[\tau_\pi,T]$) then imply
that the sequence $\{\bb P^{N,E,\pi}_{\eta^N} \circ(\pi^N)^{-1} \}$
converges weakly to $\delta_\pi$. The entropic bound (\ref{5lent}) is
an immediate consequence of the next statement.\looseness=1
%
%
\begin{proposition}
\label{tentb}
Fix $T>0$, a vector field $E \in C^1(\bb T^d;\bbR^d)$, a profile
\mbox{$\gamma\in M$}, a sequence $\{\eta^N\in\Omega_N\}$, a nice path
$\pi\in\mc M_{\gamma}^\circ$ and let $\bb P^{N,E,\pi}_{\eta^N}$
be the probability measures on $D([0,T];\Omega_N)$ constructed
above. Then
\[
\limsup_{\delta\downarrow0, \ell\uparrow\infty, N \uparrow
\infty}
\frac1{N^d}
\operatorname{Ent}( \bb P^{N,E,\pi}_{\eta^N} | \bb P^{N,E}_{\eta
^N})
\le I(\pi|\gamma).
\]
\end{proposition}

We premise an elementary lemma on perturbations of Markov chains.
%
%
\begin{lemma}
\label{tpmc}
Let $X$ be a continuous time Markov chain on a finite state space
$E$ with generator $L f(i)=\sum_{j} c_{i,j} [f(j)-f(i)]$ and, given
$T>0$, denote by $\bb P_{i}$ its law in the time interval $[0,T]$
starting from $i\in E$. Fix a function $F\dvtx[0,T]\times E\to
\bb R$, consider the time inhomogeneous Markov chain with
generator
$L_t^F f(i)= \sum_{j} c_{i,j} \exp\{ F(t,j)-F(t,i)\}
[f(j)-f(i)]$ and denote by $\bb P^F_{i}$ its law in the
time interval $[0,T]$ starting from $i\in E$. Then
\[
\operatorname{Ent}({\bb P}^F_{i}| \bb P_{i} )
= {\bb E}^F_{i} \int_0^T dt\, S(t,X(t)),
\]
where ${\bb E}^F_{i}$ is the expectation with respect to
${\bb P}^F_{i}$ and
\[
S(t,i) = \sum_{j} c_{i,j} e^{ F(t,j)-F(t,i)}
\bigl\{ e^{- [ F(t,j)-F(t,i)]} - 1 + F(t,j) -F(t,i)\bigr\}.
\]
\end{lemma}
\begin{pf}
From the explicit expression of the Radon--Nikodym derivative
in~\cite{KL}, Appendix 1, Proposition 7.3, we deduce
\begin{eqnarray*}
&&\operatorname{Ent}({\bb P}^F_{i}| \bb P_{i} )
\\
&&\qquad =
{\bb E}^F_{i} \biggl[ F(T,X(T)) - F(0,X(0))
-\int_0^T dt\, e^{-F(t,X(t))} (\partial_t +L) e^{F(t,X(t))}
\biggr].
\end{eqnarray*}
By using that
$F(t,X(t)) - F(0,X(0)) -\int_0^t ds\, (\partial_s +L_s^F) F(s,X(s))$
is a ${\bb P}^F_{i}$ martingale, straightforward computations yield
the result.
\end{pf}
\begin{pf*}{Proof of Proposition~\ref{tentb}} Set $\t:=\t_\p$.
By definition (\ref{5cm}) [see also (\ref{5expM})] and
Lemma~\ref{tpmc}, a Taylor expansion of the exponential yields
\[
\limsup_{N \uparrow\infty}
\frac1{N^d}
\operatorname{Ent}( \bb P^{N,E,\pi}_{\eta^N} | \bb P^{N,E}_{\eta
^N})
=\limsup_{N \uparrow\infty}
\bb E^{N,E,\pi}_{\eta^N} \int_\tau^T dt\, J_3(t,\eta(t)),
\]
where $J_3$ is defined in Lemma~\ref{texpE}.
In the sequel we shall make use of the super-exponential estimates
in Proposition~\ref{cile} together with Remark~\ref{turn} keeping
the family $\mathbf{g}[\delta]$ fixed. In particular, the first
super-exponential equivalence in (\ref{carbonara25}) holds also with
respect to the measure $\bb P^{N,E,\pi}_{\eta^N}$. Since the function
$J_3$ is bounded uniformly in $N$ and $\ell$, we deduce
\begin{eqnarray*}
&& \limsup_{\ell\uparrow\infty, N \uparrow\infty}
\frac1{N^d}
\operatorname{Ent}( \bb P^{N,E,\pi}_{\eta^N} | \bb P^{N,E}_{\eta
^N})
\\
&&\qquad
=\limsup_{\ell\uparrow\infty, N \uparrow\infty}
\bb E^{N,E,\pi}_{\eta^N} \int_\tau^T dt
\Av_x
V_{\rho=\bar{\eta}_{x, \ell}(t)}\\
&&\hspace*{91pt}\qquad\quad{}\times\Biggl( \sum_{i=1}^d \partial_i H_t \biggl(\frac xN\biggr)
\\
&&\hspace*{124pt}\qquad\quad{} \times
\bigl[ j_{0,e_i}^0 +
L^1_0 g^{(i)}[\delta](\cdot,\upbar{\eta}_{x,\ell}(t))\bigr] \Biggr).
\end{eqnarray*}
In view of the two blocks estimate in Lemma~\ref{duedue},
we can replace above
$\bar{\eta}_{x, \ell}$ with $\tilde{\p}^{N}(\eta)^{\k, a} (x/N)$.
Recalling that the family $\mathbf{g}[\delta]$ is still kept fixed,
the hydrodynamic limit in Theorem~\ref{dentini} yields
\begin{eqnarray*}
&& \limsup_{\ell\uparrow\infty, N \uparrow\infty}
\frac1{N^d}
\operatorname{Ent}( \bb P^{N,E,\pi}_{\eta^N} | \bb P^{N,E}_{\eta
^N})
\\
&&\qquad
= \int_\tau^T dt \int_{\bb T^d} dr\,
V_{\rho=\tilde{\pi}^{\k, a}_t(r)}
\Biggl( \sum_{i=1}^d \partial_i H_t (r)
\bigl[ j_{0,e_i}^0 +
L^1_0 g^{(i)}[\delta](\cdot,\tilde{\pi}^{\k, a}_t(r))\bigr] \Biggr)\\
&&\qquad\quad{}+ \zeta_{\k,a},
\end{eqnarray*}
where $\limsup_{\k\downarrow0, a\downarrow0} \zeta_{\k,a}
=0$.
In view of the identities (\ref{pp1}), (\ref{pp2}), (\ref{pp5}) and
Lemma~\ref{salsiccia}, by taking the limits $a\downarrow0$,
$\k\downarrow0$ and finally $\delta\downarrow0$ we get
\[
\limsup_{\delta\downarrow0, \ell\uparrow\infty, N \uparrow
\infty}
\frac1{N^d}
\operatorname{Ent}( \bb P^{N,E,\pi}_{\eta^N} | \bb P^{N,E}_{\eta
^N})
= \int_\tau^T dt\,
\langle\nabla H_t, \sigma(\pi_t)\nabla H_t\rangle,
\]
which, recalling (\ref{6I=}), concludes the proof.
\end{pf*}

\subsubsection*{Conclusion}

We here conclude the proof of the lower bound (\ref{2lb}) by showing
how to approximate arbitrary paths in $\mc M$ by nice ones. To this end
we need a suitable a priori estimate. Let $\chi_0\dvtx[0,1]\to \bb R_+$
be defined\vspace*{1pt} by $\chi_0(\rho)=\rho(1-\rho)$ and recall the\vadjust{\goodbreak}
bound (\ref{2bchi}). Let $\wwidetilde{\mc Q}\dvtx \mc M \to[0,+\infty]$
be the functional defined by
\[
\wwidetilde{\mc Q} (\pi):= \sup\{\wwidetilde{ \mc Q}_{F}
(\pi), F\in C^{1} ([0,T]\times\bb T^d;\bb R^d) \},
\]
where, given $F\in C^{1}([0,T]\times\bb T^d;\bb R^d)$,
\[
\wwidetilde{\mc Q}_{F} (\pi):=
- 2 \int_0^T dt\, \langle\pi_t, \nabla\cdot F_t
\rangle
- \int_0^T dt\, \langle F_t,\chi_0(\pi_t) F_t \rangle.
\]
By the concavity of $\chi_0$, the functional $\wwidetilde{\mc Q}$ is
lower semicontinuous. Recalling (\ref{2QTF}), we note that $\mc{Q}
(\pi) \le\wwidetilde{\mc Q} (\pi)$.
We next show that the $ \wwidetilde{\mc Q}$ can be bounded by the rate
function $I(\cdot|\gamma)$.
%
%
\begin{lemma}
\label{tQ<I}
Fix $T>0$ and a vector field $E \in C^{1}([0,T]\times\bb
T^d;\bb R^d)$. There exists a constant $C_0=C_0(T,E)$ such that
for any $\gamma\in M$ and $\pi\in\mc M $
\[
\wwidetilde{\mc Q} (\pi) \le C_0 [ I (\pi|\gamma) + 1].
\]
\end{lemma}
\begin{pf}
We can assume $I(\pi|\gamma)<+\infty$. We first observe that in
such a case the linear functional $\ell_{\gamma,\pi}$ in (\ref{2lpH})
can be extended to a linear functional on $\mc H^1(\sigma(\pi))$ and
the supremum in (\ref{2Ig}) can be taken over all $H\in
\mc H^1(\sigma(\pi))$.
Pick a positive function $\phi\in C^2(\bb R)$ uniformly convex and
such that for any $\rho\in[0,1]$ we have $\phi''(\rho) \le
(1/2) \chi(\rho)^{-1}$. Since $\pi$ satisfies the energy estimate,
the function $H=\phi'(\pi)$ is a legal test function in
(\ref{2Ig}). We deduce
\begin{eqnarray*}
I(\pi|\gamma) & \ge & \ell_{\gamma,\pi}( \phi'(\pi)
)
- \int_0^T dt\,
\langle\nabla\phi'(\pi_t),\sigma(\pi_t)\nabla\phi'(\pi
_t)\rangle
\\
& = &\int dr\, [ \phi(\pi_T(r))- \phi(\pi_0(r)) ]
\\
&&{}
-\int_0^T dt\, [
\langle\sigma(\pi_t) E - D(\pi_t) \nabla\pi_t,
\phi''(\pi_t) \nabla\pi_t \rangle
\\
&&\hspace*{52.6pt}{}
+ \langle\phi''(\pi_t) \nabla\pi_t, \sigma(\pi_t)
\phi''(\pi_t) \nabla\pi_t\rangle].
\end{eqnarray*}
Whence, recalling $D=\sigma\chi^{-1} $ and the bounds
(\ref{2bchi}), (\ref{2bkr}), by Schwarz inequality we deduce there exists
$\alpha>0$ and a real $C_\alpha$ such that
\[
\alpha\int_0^T dt\, \langle\nabla\pi_t, \phi''(\pi_t) \nabla
\pi
_t\rangle
\le I (\pi|\gamma) + \int dr\, \phi(\pi_0(r))
+ C_\alpha\int_0^T dt\, \langle E, \sigma(\pi_t) E\rangle.
\]
Since
$\wwidetilde{\mc Q} (\pi)=\int_0^T dt\, \langle\chi_0(\p_t)
\nabla
\p_t,\nabla\p_t \rangle$, the proof is now completed optimizing over
$\phi$.
\end{pf}

In view of Lemma~\ref{tQ<I}, the following proposition can be proven
by adapting the arguments given in~\cite{QRV}, Section 6, or in
\cite{BLM}, Section 5.
%
%
\begin{proposition}
\label{tdens}
Fix $T>0$, a vector field $E \in C^1(\bb T^d;\bbR^d)$ and
$\gamma\in M$. The functional $I(\cdot|\gamma)\dvtx\mc
M \to[0,+\infty]$ has compact level sets, in particular is lower
semicontinuous.\vadjust{\goodbreak}
Moreover, for each $\pi\in\mc M $ such that $I(\pi|\gamma)<
+\infty$ there exists a sequence of nice paths $\{\pi^n\}\subset\mc
M_{\gamma}^\circ$ such that $\pi^n\to\pi$ in $\mc M$ and
$I(\pi^n|\gamma) \to I(\pi|\gamma)$.
\end{proposition}
\begin{pf*}{Proof of Theorem~\ref{tdldp} the lower bound}
Apply Lemma~\ref{tglb}
with $\mc X^\circ$ given by $\mc M_{\gamma}^\circ$ and choose the
perturbation as discussed above. In view of
Proposition~\ref{tentb}, the bound (\ref{5lent}) holds with
$I^\circ$ given by the restriction to $\mc M_{\gamma}^\circ$ of
$I(\cdot|\gamma)$. Finally, Proposition~\ref{tdens} implies that
the functional in (\ref{5rilas}) coincides with $I(\cdot|\gamma)$.
\end{pf*}

\section{The quasi-potential}
\label{s6}

In this section we analyze the variational problems (\ref{2qp}) and
(\ref{genqpdav}) defining the quasi-potential and prove
Theorem~\ref{tqp=f}. Throughout this section we assume that the
vector field $E$ is orthogonally decomposable (recall
Definition~\ref{tbello}) without further mention. We shall only
discuss the case in which assumption (iii) in Theorem~\ref{tqp=f}
holds; the other two cases are actually simpler and the corresponding
details are omitted. We will first consider the problem
(\ref{genqpdav}) and show that it admits a unique minimizer which is
explicitly characterized. From this we then deduce $\widehat
V{}^E_{\upbar{\rho}}=\mathcal F^U_{\upbar{\rho}}$. Finally, we prove the
identity $V^E_{\upbar{\rho}} =\widehat{V}{}^E_{\upbar{\rho}}$.
The characterization of the minimizer will be achieved by exploiting a
time reversal duality analogous to the one in~\cite{FW},
Theorem 4.3.1, and the
convergence, as $t\to+\infty$, of the solution to (\ref{2he*}) to a
stationary solution $\gamma_{\upbar{\rho}}$, $\upbar{\rho}\in[0,1]$.

\subsection*{Time reversal duality}

Given $T\in(0,+\infty]$, we introduce the time reversal $\theta\dvtx\mc
M_{[-T,0]}\to\mc M_{[0,T]}$ as follows. For $\pi\in\mathcal M_{[-T,0]}$
we set $(\theta\pi)_t:=\pi_{-t}$ for any $t\in[0,T]$ such that $-t$ is
a continuity point of $\pi$. This defines the values of $\theta\pi$
apart a countable subset of $[0,T]$ where the values of $\theta\pi$ are
determined by imposing that $\theta\pi\in \mathcal M_{[0,T]}$. For
the\vspace*{1pt} next result, recall (\ref{infdav}) and
(\ref{funinfdav}). Moreover, for $\p\in\mc M_{[0,+\infty)}$ set
$I^E_{[0,+\infty)}(\p):=\lim_{T\to+\infty} I^E_{[0,T]}(\p)$.
%
%
\begin{theorem}
\label{tonssiminf}
Fix $\upbar{\rho}\in[0,1]$.
For each $\pi\in\mathcal M_{(-\infty,0]}(\upbar{\rho})$ it holds
%
%
\begin{equation}
\label{onssiminf}
I^{-\nabla U+\wwidetilde{E}}_{(-\infty,0]}(\pi)=\mathcal F^U_{\upbar
{\rho}}(\pi_0)+I^{-\nabla
U-\wwidetilde{E}}_{[0,+\infty)}(\theta\pi).
\end{equation}
\end{theorem}

Of course, the identity (\ref{onssiminf}) means that either both sides
are infinite or both sides are finite and the respective values
coincide. In order\vspace*{1pt} to prove this result, we need to
introduce some more notation. The norms in $L^2(\bb T^d,dr)$ and in the
standard Sobolev space $W^{1,2}(\bb T^d,dr)$ are, respectively, denoted
by $\|\cdot\|_{L^2}$ and $\|\cdot\|_{W^{1,2}}$. Fix $T_1<T_2$. By
choosing a test function independent on the space variable, we easily
deduce that $I^E_{[T_1,T_2]}(\pi)<+\infty$ implies that the total mass
$\int dr\, \pi_t(r)$ is constant in time. Given $\upbar{\rho}\in[0,1]$,
we then set $\mc M_{[T_1,T_2]}(\upbar{\rho }):=
D([T_1,T_2];M(\upbar{\rho}))$ [recall (\ref{7Mbr})]. Also let $M^\circ
_{[T_1,T_2]}(\upbar{\rho})\subset\mathcal M_{[T_1,T_2]}(\upbar{\rho})$
be the collection\vspace*{1pt} of paths $\pi\in \mathcal M_{[T_1,T_2]}$ satisfying
the following conditions: (i) there exists $\epsilon>0$ such that
$\epsilon\leq\pi\leq1-\epsilon$, (ii) the map $[T_1,T_2]\times\mathbb
T^d\to\pi_t(r)$ belongs to $C^{1,2}([T_1,T_2]\times\bb T^d)$. Note that
if $\pi\in\mathcal M_{[-T,0]}^\circ(\upbar{\rho})$ then
$\theta\pi\in\mathcal M^\circ_{[0,T]}(\upbar{\rho})$. Given\vspace*{2pt} $\gamma\in
M$, we denote by $\mathcal M^\circ_{[T_1,T_2],\gamma}$ the collection
of nice paths, as in Definition~\ref{tnice}, in $\mathcal
M_{[T_1,T_2]}$. We observe\vspace*{1pt} that if $\pi$ belongs to $\mathcal
M_{[T_1,T_2]}^\circ(\upbar{\rho})$ for some $\upbar{\rho}\in[0,1]$ then
the linear functional $\ell _\pi$ in (\ref{ldav}) can be rewritten as
%
%
\begin{equation}\label{libia}
\ell^E_{\pi}(H)=\int_{T_1}^{T_2} dt\,
\langle\partial_t\pi_t+\nabla\cdot
[ \sigma(\pi_t) E - D(\pi_t) \nabla\pi_t ],
H_t\rangle,
\end{equation}
where we also included in the notation the dependence on the driving
field~$E$.

The next elementary result will be the key point in the proof of
Theorem~\ref{tonssiminf}. Recall (\ref{2frU}) and, given
$\upbar{\rho}\in(0,1)$, let $g_{\upbar{\rho}}\dvtx\bb T^d\times
[0,1]\to\bb
R$ be the function defined by
%
%
\begin{equation}
\label{7grr}
g_{\upbar{\rho}}(r,\rho):=
\frac{\partial}{\partial\rho} f^U_{\upbar{\rho}}(r,\rho)
= f'(\rho) - f'(\gamma_{\upbar{\rho}}(r)).
\end{equation}

%
\begin{lemma}
\label{thj}
Fix $\upbar{\rho}\in(0,1)$ and $\rho\in C^2(\bb T^d;(0,1))$. Let
$G=G_\rho\dvtx\bb T^d\to\bb R$ be the function defined by
$G(r):=g_{\upbar{\rho}}(r,\rho(r))$. Then
\[
\langle\nabla\cdot
[ \sigma(\rho) E - D(\rho) \nabla\rho], G
\rangle
-\langle\nabla G,\sigma(\rho) \nabla G\rangle= 0.
\]
\end{lemma}
%
%
\begin{remark}
\label{trhj}
Recall that the vector field $E$ satisfies (\ref{2E=}). The
statement of Lemma~\ref{thj} does not depend on the divergenceless
part $\wwidetilde E$; in particular, it holds also if $E$ is replaced
by the vector field $-\nabla U -\wwidetilde E$.
\end{remark}
\begin{pf*}{Proof of Lemma~\ref{thj}}
By the definition of $G$ and (\ref{2gbar}),
\[
\nabla G (r) = f''(\rho(r)) \nabla\rho(r)
-f''(\gamma_{\upbar{\rho}}(r)) \nabla\gamma_{\upbar{\rho}}(r)
= f''(\rho(r)) \nabla\rho(r) +\nabla U(r).
\]
Recalling (\ref{2chi}), (\ref{2D}) and that we assumed $\sigma$ to
be a multiple of the identity, the statement of the lemma is therefore
equivalent to
\[
\langle\sigma(\rho) E + \sigma(\rho) \nabla U,
f''(\rho) \nabla\rho+\nabla U \rangle= 0.
\]
Recall that $E =-\nabla U +\wwidetilde E$. Using again (\ref{2chi}) and
(\ref{2D}), the above equation holds if and only if
\[
\langle\wwidetilde E, D(\rho) \nabla\rho\rangle
+\langle\sigma(\rho) \wwidetilde E, \nabla U\rangle=0
.
\]
Since $D$ is also a multiple of the identity, the first term above
vanishes because $\wwidetilde E$ is divergenceless. Finally, as we
assumed $\wwidetilde E(r)\cdot\nabla U(r) =0$ for any $r\in\bb T^d$;
also the second term above vanishes.
\end{pf*}
%
%
\begin{lemma}
\label{strasim}
Fix $\upbar{\rho}\in(0,1)$ and $T>0$. For each $H\in
C^1([-T,0]\times
\mathbb T^d)$ and each $\pi\in\mathcal
M_{[-T,0]}^\circ(\upbar{\rho})$ it holds [recall (\ref{libia})]
%
%
\begin{eqnarray}
\label{corroasilo}
&& \ell^{-\nabla U+\wwidetilde{E}}_{\pi}(H)
-\int_{-T}^0 dt\, \langle\nabla H_t, \sigma(\pi_t)\nabla
H_t\rangle
\nonumber\\
&&\qquad
=\mathcal F^U_{\upbar{\rho}}(\pi_0)-\mathcal F^U_{\upbar{\rho
}}(\pi_{-T})
+ \ell^{-\nabla U-\wwidetilde{E}}_{\theta\pi}(-\theta\wwidetilde{H})
\\
&&\qquad\quad{}
-\int_{0}^T dt\,
\langle\nabla(\theta\wwidetilde{H})_t,
\sigma( (\theta\pi)_t)
\nabla(\theta\wwidetilde{H})_t \rangle,\nonumber
\end{eqnarray}
where $\wwidetilde{H}\equiv\wwidetilde{H}_t(r)$ is given by
%
%
\begin{equation}
\label{tildeHd}
\wwidetilde{H} = H
-[ f'(\pi)-f'(\gamma_{\upbar{\rho}}) ].
\end{equation}
\end{lemma}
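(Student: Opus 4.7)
The plan is to exploit the decomposition $H=\widetilde{H}+G$, where $G\equiv G_t(r):=g_{\upbar\rho}(r,\pi_t(r))=f'(\pi_t(r))-f'(\gamma_{\upbar\rho}(r))$, so that by \eqref{2:frU}--\eqref{7:grr} one has $\partial_\rho f^U_{\upbar\rho}(r,\pi_t(r))=G_t(r)$. Since $\pi\in\mc M^\circ_{[-T,0]}(\upbar\rho)$, the path is smooth and bounded away from $0$ and $1$, so $G$ is a legal test function in the representation \eqref{libia} of $\ell^E_\pi$. The key computation is then
\[
\int_{-T}^0\!dt\,\langle\partial_t\pi_t,G_t\rangle
=\int_{-T}^0\!dt\,\frac{d}{dt}\!\int\!\!dr\,f^U_{\upbar\rho}(r,\pi_t(r))
=\mc F^U_{\upbar\rho}(\pi_0)-\mc F^U_{\upbar\rho}(\pi_{-T}),
\]
which isolates the boundary term on the right hand side of \eqref{corroasilo}. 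By Lemma~\ref{t:hj} and Remark~\ref{t:rhj} applied pointwise in $t$ to $\rho=\pi_t$, the remaining $G$-contribution collapses to a pure quadratic form:
\[
\int_{-T}^0\!dt\,\big\langle\nabla\!\cdot\![\sigma(\pi_t)(-\nabla U+\widetilde E)-D(\pi_t)\nabla\pi_t]\,,\,G_t\big\rangle
=\int_{-T}^0\!dt\,\langle\nabla G_t,\sigma(\pi_t)\nabla G_t\rangle.
\]

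Expanding $\langle\nabla H,\sigma(\pi)\nabla H\rangle=\langle\nabla\widetilde H,\sigma(\pi)\nabla\widetilde H\rangle+2\langle\nabla\widetilde H,\sigma(\pi)\nabla G\rangle+\langle\nabla G,\sigma(\pi)\nabla G\rangle$ and using linearity of $\ell^{-\nabla U+\widetilde E}_\pi$ in its argument, the two $\langle\nabla G,\sigma(\pi)\nabla G\rangle$ contributions cancel, leaving
\[
\text{LHS of \eqref{corroasilo}}=\mc F^U_{\upbar\rho}(\pi_0)-\mc F^U_{\upbar\rho}(\pi_{-T})+\ell^{-\nabla U+\widetilde E}_{\pi}(\widetilde H)-\int_{-T}^0\!dt\,\langle\nabla\widetilde H_t,\sigma(\pi_t)\nabla\widetilde H_t\rangle-2\int_{-T}^0\!dt\,\langle\nabla\widetilde H_t,\sigma(\pi_t)\nabla G_t\rangle.
\]
For the right hand side, the change of variables $s=-t$ turns the time reversal into
\[
\ell^{-\nabla U-\widetilde E}_{\theta\pi}(-\theta\widetilde H)
=\int_{-T}^0\!ds\,\big\langle\partial_s\pi_s+\nabla\!\cdot\![\sigma(\pi_s)(\nabla U+\widetilde E)+D(\pi_s)\nabla\pi_s]\,,\,\widetilde H_s\big\rangle,
\]
while the quadratic term in $\widetilde H$ is invariant under the same substitution. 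Matching LHS and RHS reduces to the identity
\[
\ell^{-\nabla U+\widetilde E}_\pi(\widetilde H)-\ell^{-\nabla U-\widetilde E}_{\theta\pi}(-\theta\widetilde H)=2\int_{-T}^0\!dt\,\langle\nabla\widetilde H_t,\sigma(\pi_t)\nabla G_t\rangle.
\]

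After subtraction of the two expressions above and integration by parts in space, the left side of this identity becomes $-2\int_{-T}^0\!dt\,\langle\sigma(\pi_t)\nabla U+D(\pi_t)\nabla\pi_t,\nabla\widetilde H_t\rangle$. Since $\nabla G_t=f''(\pi_t)\nabla\pi_t+\nabla U$ and the Einstein relation \eqref{2:D} gives $D=\sigma f''$, we have $\sigma(\pi_t)\nabla G_t=D(\pi_t)\nabla\pi_t+\sigma(\pi_t)\nabla U$, so the required identity reduces to a tautology, completing the proof. The only non-algebraic input is Lemma~\ref{t:hj}, which in turn hinges on the two orthogonality conditions encoded in \eqref{2:E=} (namely $\widetilde E\perp\nabla U$ pointwise and $\nabla\!\cdot\!\widetilde E=0$); all other manipulations are pure bookkeeping enabled by the smoothness and ellipticity available for paths in $\mc M^\circ_{[-T,0]}(\upbar\rho)$, which is why the statement is restricted to this class. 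The main conceptual step is thus the choice of $\widetilde H$: subtracting $G$ from $H$ simultaneously produces the free-energy boundary term and, via the Einstein relation, symmetrizes the drift so that the residual cross term exactly encodes the change of sign of $\widetilde E$ under time reversal.
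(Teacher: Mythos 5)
Your proof is correct and follows essentially the same route as the paper's: decompose $H=\widetilde H+G$ with $G=f'(\pi)-f'(\gamma_{\upbar\rho})$, use $\int\langle\partial_t\pi_t,G_t\rangle\,dt$ to produce the free-energy boundary term, invoke Lemma~\ref{t:hj} to kill the residual $G$-contribution, and combine the Einstein relation $\sigma(\pi)\nabla G=D(\pi)\nabla\pi+\sigma(\pi)\nabla U$ with the time-reversal change of variables to match the remaining terms. The only blemish is a sign slip in the penultimate display: after integrating by parts, $\ell^{-\nabla U+\widetilde E}_\pi(\widetilde H)-\ell^{-\nabla U-\widetilde E}_{\theta\pi}(-\theta\widetilde H)=+2\int_{-T}^0\!dt\,\langle\sigma(\pi_t)\nabla U+D(\pi_t)\nabla\pi_t,\nabla\widetilde H_t\rangle$, not $-2\int\cdots$; with the correct sign the final identity is indeed a tautology, so the argument goes through.
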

\begin{pf}
The proof follows by a direct computation. As in Lemma~\ref{thj},
we call $G\dvtx[-T,0]\times\mathbb T^d\to\mathbb R$ the function
$G_t(r):=f'(\pi_t(r))-f'(\gamma_{\upbar{\rho}}(r))$. We start from
the left-hand side of (\ref{corroasilo}) and add and subtract
$\ell^{-\nabla U+\wwidetilde{E}}_{\pi} (G)$. We obtain the sum of three
terms: the first one is
%
%
\begin{eqnarray}
\label{primod}
&&
\int_{-T}^{0} dt\,
\langle\partial_t\pi_t +
\nabla\cdot[ \sigma(\pi_t)
(-\nabla U+\wwidetilde E) - D(\pi_t) \nabla\pi_t ]
, H_t-G_t\rangle
\nonumber\\[-8pt]\\[-8pt]
&&\qquad{}
-\int_{-T}^0 dt\,
\langle\nabla H_t,\sigma(\pi_t) \nabla H_t\rangle
+\int_{-T}^0 dt\,
\langle\nabla G_t,\sigma(\pi_t) \nabla G_t\rangle,\nonumber
\end{eqnarray}
the second one is
%
%
\begin{equation}
\label{secondod}
\int_{-T}^0 dt\,
\langle\partial_t\pi_t,G_t\rangle
= \mathcal{F}^U_{\upbar{\rho}}(\pi_0)
-\mathcal F^U_{\upbar{\rho}}(\pi_{-T})
\end{equation}
and the third one is
%
%
\begin{eqnarray}
&&\int_{-T}^0 dt\,
\langle\nabla\cdot[
\sigma(\pi_t) (-\nabla U+\wwidetilde E)
-D(\pi_t) \nabla\pi_t]
, G_t\rangle\nonumber\\[-8pt]\\[-8pt]
&&\qquad{}-\int_{-T}^0 dt\,
\langle\nabla G_t,\sigma(\pi_t) \nabla G_t\rangle.\nonumber
\end{eqnarray}
From Lemma~\ref{thj} it immediately follows that this last term vanishes.

We now elaborate the first term (\ref{primod}). Using
(\ref{tildeHd}), that is, $\wwidetilde H=H-G$, and performing an
integration by parts, it can be rewritten as
%
%
\begin{eqnarray}
\label{conmbu}
&&\int_{-T}^{0} dt\,
\langle\partial_t\pi_t
+\nabla\cdot[ \sigma(\pi_t)
(-\nabla U+\wwidetilde E) - D(\pi_t) \nabla\pi_t]
, \wwidetilde{H}_t\rangle
\nonumber\\[-8pt]\\[-8pt]
&&\qquad{}
-\int_{-T}^0 dt\,
\langle\nabla\wwidetilde{H}_t,
\sigma(\pi_t) \nabla\wwidetilde H_t \rangle
+2 \int_{-T}^0 dt\,
\langle\nabla\cdot[
\sigma(\pi_t)\nabla G_t ], \wwidetilde H_t \rangle.\nonumber
\end{eqnarray}
From the Einstein relation (\ref{2D}) and (\ref{2gbar}) we obtain
$\sigma(\pi) \nabla G = D(\pi) \nabla\pi+\sigma(\pi) \nabla U$
which, inserted into (\ref{conmbu}), gives
\begin{eqnarray*}
&& \int_{-T}^{0} dt\,
\langle\partial_t\pi_t+
\nabla\cdot[ \sigma(\pi_t) (\nabla U+\wwidetilde E)
+ D(\pi_t) \nabla\pi_t],
\wwidetilde{H}_t\rangle
\\
&&\qquad{}
-\int_{-T}^0 dt\,
\langle\nabla\wwidetilde H_t,
\sigma(\pi_t) \nabla\wwidetilde H_t\rangle.
\end{eqnarray*}
Performing a change of variable in the time integral and adding
(\ref{secondod}) we obtain the right-hand side of (\ref{corroasilo}).
\end{pf}

From Lemma~\ref{strasim} we deduce the time reversal duality for
bounded intervals.
%
%
\begin{lemma}
\label{tonssim}
Fix $\upbar{\rho}\in[0,1]$ and $T>0$.
For each $\pi\in\mathcal M_{[-T,0]}(\upbar{\rho})$ it holds
%
%
\begin{equation}
\label{onssim}
I^{-\nabla U+\wwidetilde{E}}_{[-T,0]}(\pi)=\mathcal F^U_{\upbar
{\rho}}(\pi_0)-\mathcal F^U_{\upbar{\rho}}(\pi_{-T})+I^{-\nabla
U-\wwidetilde{E}}_{[0,T]}(\theta\pi).
\end{equation}
\end{lemma}
\begin{pf}
Since the statement is trivial when $\upbar{\rho}=0$ or
$\upbar{\rho}=1$, we can assume $\upbar{\rho}\in(0,1)$. First consider
the case $\pi\in\mathcal M^0_{[-T,0]}(\upbar{\rho})$; then
the correspondence $H\leftrightarrow-\theta\wwidetilde H$ [see
(\ref{tildeHd})] define a bijection between $C^1([-T,0]\times\mathbb
T^d)$ and $C^1([0,T]\times\mathbb T^d)$. From (\ref{corroasilo})
we deduce
\begin{eqnarray*}
I^{-\nabla U+\wwidetilde{E}}_{[-T,0]}(\pi)
&=&\sup_H
\biggl\{ \ell^{-\nabla U+\wwidetilde{E}}_{\pi}(H)
-\int_{-T}^0 dt\, \langle\nabla H_t, \sigma(\pi_t)\nabla
H_t\rangle
\biggr\}
\\
&=& \mathcal F^U_{\upbar{\rho}}(\pi_0)
-\mathcal F^U_{\upbar{\rho}}(\pi_{-T})
\\
&&{}
+ \sup_H \biggl\{
\ell^{-\nabla U-\wwidetilde{E}}_{\theta\pi}(-\theta\wwidetilde{H})
-\int_{0}^T dt\,
\langle\nabla(\theta\wwidetilde{H})_t,
\sigma( (\theta\pi)_t) \nabla(\theta\wwidetilde{H})_t \rangle
\biggr\}
\\
&=& \mathcal F^U_{\upbar{\rho}}(\pi_0)
-\mathcal F^U_{\upbar{\rho}}(\pi_{-T})
+I^{-\nabla U-\wwidetilde{E}}_{[0,T]}(\theta\pi).
\end{eqnarray*}
Now consider\vspace*{1pt} an arbitrary path
$\pi\in\mathcal{M}_{[-T,0]}(\upbar {\rho})$ such that $I^{-\nabla
U+\wwidetilde{E}}_{[-T,0]}(\pi)<+\infty$. By Proposition~\ref{tdens},
there\vspace*{-1pt} exists a sequence $\{\pi^n\} \subset
\mathcal{M}^\circ_{[-T,0],\pi_{-T}}$ such that $\pi^n\to\pi$ in
$\mathcal M_{[-T,0]}$ and $I^{-\nabla
U+\wwidetilde{E}}_{[-T,0]}(\pi^n)\to I^{-\nabla
U+\wwidetilde{E}}_{[-T,0]}(\pi)$; in particular,\vspace*{1pt}
$\{\pi^n\}\subset\mathcal{M}_{[-T,0]}(\upbar{\rho})$. Let $\tau^n>0$ be
the time such that $\pi^n$ solves (\ref{2he}) in the time interval
$[-T,-T+\tau^n]$. From the result for nice paths we deduce that for
each $n$
%
%
\begin{eqnarray}
\label{onsn}
I^{-\nabla U+\wwidetilde{E}}_{[-T,0]}(\pi^n)&=&I^{-\nabla
U+\wwidetilde{E}}_{[-T+\tau^n,0]}(\pi^n)\nonumber\\[-8pt]\\[-8pt]
&=&\mathcal F^U_{\upbar{\rho}}(\pi^n_0)-\mathcal F^U_{\upbar
{\rho}}(\pi^n_{-T+\tau^n})+I^{-\nabla
U-\wwidetilde{E}}_{[0,T-\tau^n]}(\theta\pi^n),\nonumber
\end{eqnarray}
where the second identity follows from the fact that the restriction
of $\pi^n$ to the time interval $[-T+\tau^n,0]$ belongs to $\mathcal
M^0_{[-T+\tau^n,0]}(\upbar{\rho})$.\vadjust{\goodbreak}
It is easy to see that we can always choose $\pi^n$ in such a way that
$\lim_{n} \tau^n=0$. This implies that
$\lim_{n}\|\pi^n_{-T+\tau^n}-\pi_{-T}\|_{L^2}=0$.
Since $\mc{F}^U_{\upbar{\rho}}$ is continuous with respect to the $L^2$
topology, we get
$\lim_{n}\mathcal F^U_{\upbar{\rho}}(\pi^n_{-T+\tau^n})=
\mathcal{F}^U_{\upbar{\rho}}(\pi_{-T})$.
By using the lower semicontinuity of $\mc{F}^U_{\upbar{\rho}}$ on $M$
and of $I_{[0,T]}^{-\nabla U-\wwidetilde{E}}$ on $\mc{M}_{[0,T]}$, from
(\ref{onsn}) we then deduce that for each $S\in(0,T)$ it holds
\begin{eqnarray*}
I^{-\nabla U+\wwidetilde{E}}_{[-T,0]}(\pi)
&=& \lim_{n\to+\infty} I^{-\nabla U+\wwidetilde{E}}_{[-T,0]}(\pi^n)
\\
&\ge& \liminf_{n\to+\infty}
\bigl\{ \mathcal F^U_{\upbar{\rho}}(\pi^n_0)
-\mathcal F^U_{\upbar{\rho}}(\pi^n_{-T+\tau^n})
+I^{-\nabla U-\wwidetilde{E}}_{[0,T-\tau^n]}(\theta\pi^n)\bigr\}
\\
&\ge& \mathcal
F^U_{\upbar{\rho}}(\pi_0)-\mathcal F^U_{\upbar{\rho}}(\pi_{-T})
+I^{-\nabla U-\wwidetilde{E}}_{[0,S]}(\theta\pi).
\end{eqnarray*}
Observing that $\theta\pi$ is necessarily continuous, we can take the
limit $S\uparrow T$ and deduce
\[
I^{-\nabla U+\wwidetilde{E}}_{[-T,0]}(\pi)\geq\mathcal F^U_{\upbar
{\rho}}(\pi_0)-\mathcal F^U_{\upbar{\rho}}(\pi_{-T})+I^{-\nabla
U-\wwidetilde{E}}_{[0,T]}(\theta\pi).
\]
The proof is now completed by exchanging the roles of $\pi$ and
$\theta\pi$.
\end{pf}

Recall that the set $\mc M_{(-\infty,0]}(\upbar{\rho})$ has been
defined in (\ref{infdav}) by requiring that $\pi_t\to\gamma_{\upbar
{\rho}}$ in $M$ as $t\to-\infty$. The next lemma states that if
$I^E_{(-\infty,0]}(\pi)<+\infty$, the above convergence actually
takes place also with respect to the $L^2$ topology. The proof,
which is omitted, is achieved by repeating the arguments of
\cite{BDGJLn+1}, Lemma 5.2, in the present setting.
%
%
\begin{lemma}
\label{liam}
Fix $\upbar{\rho}\in[0,1]$ and a path
$\pi\in\mathcal{M}_{(-\infty,0]}(\upbar{\rho})$ with finite rate,
that is, satisfying
$I^E_{(-\infty,0]}(\pi)<+\infty$. Then $\lim_{t\to-\infty}
\|\pi_t-\gamma_{\upbar{\rho}}\|_{L^2}=0$. Moreover, there exists a
sequence $T_n\to-\infty$ such that $\lim_{n\to\infty}
\|\pi_{T_n}-\gamma_{\upbar{\rho}}\|_{W^{1,2}}=0$.
\end{lemma}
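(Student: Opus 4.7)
\textbf{Proof plan for Lemma~\ref{liam}.} I would adapt the argument of \cite[Lemma~5.2]{BDGJLn+1}, pairing the a priori energy estimate of Lemma~\ref{t:Q<I} with the time-reversal duality of Lemma~\ref{t:onssim}; the case $\upbar\rho\in(0,1)$ is the only non-trivial one, the endpoints being forced to constant paths by mass conservation. Since $I^E_{[s,s+1]}(\pi)\le I^E_{(-\infty,0]}(\pi)<\infty$ for every $s\le -1$, Lemma~\ref{t:Q<I} applied to the unit window $[-T-1,-T]$ supplies a constant $K$ independent of $T$ with
\[
  \int_{-T-1}^{-T}\!\!dt\,\big\langle\nabla \pi_t,\,\chi_0(\pi_t)^{-1}\nabla \pi_t\big\rangle\le K.
\]
Because $\chi_0^{-1}\ge 4$, Chebyshev produces times $t_T\in[-T-1,-T]$ with $\|\nabla \pi_{t_T}\|_{L^2}^2\le K/4$; together with $\|\pi_{t_T}\|_{L^\infty}\le 1$ this bounds the family in $W^{1,2}(\bb T^d)$. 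Rellich--Kondrakov extracts a subsequence $\{\pi_{T_n}\}$, $T_n\to -\infty$, converging strongly in $L^2$ and weakly in $W^{1,2}$; the weak$^*$ hypothesis identifies the limit as $\gamma_{\upbar\rho}$, and the $L^2$-continuity of the integral functional $\mc F^U_{\upbar\rho}$ forces $\mc F^U_{\upbar\rho}(\pi_{T_n})\to 0$.

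To propagate the $L^2$ convergence to the full net $t\to -\infty$, I would specialize Lemma~\ref{t:onssim} to an arbitrary subinterval $[s,t]\subset(-\infty,0]$: dropping the non-negative time-reversed action yields the sub-additive bound
\[
  \mc F^U_{\upbar\rho}(\pi_t)\;\le\;\mc F^U_{\upbar\rho}(\pi_s)+I^E_{[s,t]}(\pi),\qquad s\le t\le 0.
\]
The additivity of $I^E$ over nested time intervals, together with $I^E_{(-\infty,0]}(\pi)<\infty$, gives $I^E_{(-\infty,T^*]}(\pi)\to 0$ as $T^*\to -\infty$. Fix $\epsilon>0$, choose $T^*$ with $I^E_{(-\infty,T^*]}(\pi)<\epsilon/2$, and for every $t\le T^*$ pick $T_n\le t$ from the previous subsequence with $\mc F^U_{\upbar\rho}(\pi_{T_n})<\epsilon/2$. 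The displayed inequality then forces $\mc F^U_{\upbar\rho}(\pi_t)<\epsilon$, so $\mc F^U_{\upbar\rho}(\pi_t)\to 0$ as $t\to -\infty$; the uniform convexity of $f$ on $[0,1]$ translates into $\mc F^U_{\upbar\rho}(\rho)\ge c\,\|\rho-\gamma_{\upbar\rho}\|_{L^2}^2$, which yields the first statement.

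For the $W^{1,2}$ subsequence the family $\{\pi_{T_n}\}$ already converges weakly to $\gamma_{\upbar\rho}$ in $W^{1,2}$; the passage to strong convergence is a question of the norm convergence $\|\nabla \pi_{T_n}\|_{L^2}\to\|\nabla \gamma_{\upbar\rho}\|_{L^2}$. The cross term $\langle\nabla \pi_{T_n},\nabla \gamma_{\upbar\rho}\rangle$ tends to $\|\nabla \gamma_{\upbar\rho}\|_{L^2}^2$ by integration by parts, the $L^2$ convergence already established, and the $C^2$ regularity of $\gamma_{\upbar\rho}$ given by \eqref{2:gbar}. For the term $\|\nabla \pi_{T_n}\|_{L^2}^2$ one refines the Chebyshev extraction of the first step: now that $\pi_s$ is $L^2$-close to $\gamma_{\upbar\rho}$ uniformly on the window $[T-1,T]$, and since for $\upbar\rho\in(0,1)$ the profile $\gamma_{\upbar\rho}$ is bounded away from $0$ and $1$ (so that $\chi_0(\gamma_{\upbar\rho})$ is bounded away from $0$), a Lebesgue-point selection inside $[T-1,T]$ lets one compare $\int\chi_0(\pi_s)^{-1}|\nabla \pi_s|^2\,ds$ with $\int\chi_0(\gamma_{\upbar\rho})^{-1}|\nabla \gamma_{\upbar\rho}|^2\,ds$ and deduce norm convergence.

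\emph{Principal difficulty.} The first two paragraphs are essentially mechanical once the sub-additive bound is read off Lemma~\ref{t:onssim}. The genuinely delicate point is the $W^{1,2}$ upgrade in the third paragraph: the energy estimate provides only uniform $W^{1,2}$ bounds, hence weak compactness; strengthening this to strong convergence requires both the $L^2$-proximity of the path to the smooth stationary profile on an entire unit window (a consequence of the first two steps) and a careful Lebesgue-point selection of the time at which the weighted Dirichlet integrand attains its window-average.
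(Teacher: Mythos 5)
Your argument for the $L^2$ convergence is correct and in the same spirit as the paper intends (the paper refers to the argument of \cite[Lemma~5.2]{BDGJLn+1}). The three ingredients you combine — the a priori energy estimate of Lemma~\ref{t:Q<I} applied to unit windows to extract a $W^{1,2}$-bounded subsequence $\pi_{T_n}$, the $L^2$-continuity of $\mc F^U_{\upbar\rho}$ on $M(\upbar\rho)$ to deduce $\mc F^U_{\upbar\rho}(\pi_{T_n})\to 0$, and the sub-additive Lyapunov estimate
$\mc F^U_{\upbar\rho}(\pi_t)\le\mc F^U_{\upbar\rho}(\pi_s)+I^E_{[s,t]}(\pi)$
(which does follow from Lemma~\ref{t:onssim} upon dropping the nonnegative time-reversed action and using time translation invariance) together with the vanishing of the tail cost $I^E_{(-\infty,T^*]}(\pi)$ — are all solid, and the passage to the full net and then to $\|\pi_t-\gamma_{\upbar\rho}\|_{L^2}\to 0$ via \eqref{7:F2} is fine.

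However, your argument for the $W^{1,2}$ subsequence has a genuine gap, and it is precisely the point you flagged as delicate. You propose to upgrade weak $W^{1,2}$ convergence by establishing norm convergence $\|\nabla\pi_{T_n}\|_{L^2}\to\|\nabla\gamma_{\upbar\rho}\|_{L^2}$, and to obtain the necessary upper bound $\limsup_n\|\nabla\pi_{T_n}\|^2_{L^2}\le\|\nabla\gamma_{\upbar\rho}\|^2_{L^2}$ by a Lebesgue-point selection comparing the weighted Dirichlet integrals. This cannot work from the energy estimate alone: Lemma~\ref{t:Q<I} only gives $\widetilde{\mc Q}_{[-T-1,-T]}(\pi)\le C_0\bigl[I^E_{[-T-1,-T]}(\pi)+1\bigr]$, and the additive $+1$ does not vanish as the window cost $I^E_{[-T-1,-T]}(\pi)\to 0$, so the bound merely says the gradient stays bounded — it provides no mechanism for the window average of $\chi_0(\pi_s)^{-1}|\nabla\pi_s|^2$ to approach $\chi_0(\gamma_{\upbar\rho})^{-1}|\nabla\gamma_{\upbar\rho}|^2$ rather than stay a factor $C_0$ above it.

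What you are missing is the stronger, \emph{centered} dissipation estimate that comes from testing the rate functional against multiples of the conjugate variable $G_t:=f'(\pi_t)-f'(\gamma_{\upbar\rho})$. From Lemma~\ref{strasim} with $H=G$ one has $\widetilde H=0$, so $\ell^E_\pi(G)=\mc F^U_{\upbar\rho}(\pi_t)-\mc F^U_{\upbar\rho}(\pi_s)+\int_s^t\!du\,\langle\nabla G_u,\sigma(\pi_u)\nabla G_u\rangle$; plugging $H=\lambda G$ into the variational formula for $I^E_{[s,t]}(\pi)$ and choosing $\lambda=1/2$ yields
\begin{equation*}
\int_s^t\!du\,\big\langle\nabla G_u,\sigma(\pi_u)\nabla G_u\big\rangle
\;\le\; 4\,I^E_{[s,t]}(\pi)\;-\;2\big[\mc F^U_{\upbar\rho}(\pi_t)-\mc F^U_{\upbar\rho}(\pi_s)\big]\,.
\end{equation*}
(One must first argue on nice paths and approximate, as in the proof of Lemma~\ref{t:onssim}, since $f'(\pi_t)$ need not be bounded.) Letting $s\to-\infty$ and using $\mc F^U_{\upbar\rho}(\pi_s)\to 0$, this gives $\int_{-\infty}^0\!du\,\langle\nabla G_u,\sigma(\pi_u)\nabla G_u\rangle<+\infty$ — a bound with no spurious additive constant — from which a sequence $T_n\to-\infty$ with $\langle\nabla G_{T_n},\sigma(\pi_{T_n})\nabla G_{T_n}\rangle\to 0$ is extracted. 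Since $\nabla G=f''(\pi)\nabla\pi+\nabla U$ and $f''$ is bounded below while $\gamma_{\upbar\rho}$ is bounded away from $0$ and $1$, one can then combine this with the already-established $L^2$ convergence (which controls the measure of the set where $\pi_{T_n}$ approaches the endpoints, the only place where $\sigma(\pi_{T_n})$ degenerates) to conclude $\|\nabla(\pi_{T_n}-\gamma_{\upbar\rho})\|_{L^2}\to 0$. This is the mechanism that your proposal lacks.
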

\begin{pf*}{Proof of Theorem~\ref{tonssiminf}}
Consider the case in which $\pi\in\mathcal
M_{(-\infty,0]}(\upbar{\rho})$ is such that $I^{-\nabla U+\wwidetilde
E}_{(-\infty,0]}(\pi)<+\infty$. From Lemma~\ref{liam} and the
continuity of $\mathcal F^U_{\upbar{\rho}}$ in $L^2$ we deduce
$\lim_{T\to+\infty}\mathcal{F}^U_{\upbar{\rho}}(\pi_{-T})=0$.
Therefore, (\ref{onssiminf}) follows from (\ref{onssim}) by taking
the limit $T\to+\infty$. In particular, if $I^{-\nabla
U+\wwidetilde{E}}_{(-\infty,0]}(\pi)<+\infty$ then also $I^{-\nabla
U-\wwidetilde{E}}_{[0,+\infty)}(\theta\pi)<+\infty$. The proof is
now completed by exchanging the roles of $\pi$ and $\theta\pi$.
\end{pf*}

\subsection*{Convergence to a stationary solution}

We next discuss the asymptotic behavior of the solutions to the
equation (\ref{2he*}). Observe that, since $\nabla U(r)\cdot
\wwidetilde E(r)=0$ for any $r\in\bb T^d$, for each $\upbar{\rho}\in
[0,1]$ the function $\gamma_{\upbar{\rho}}$ defined in (\ref{2gbar})
is also a stationary solution to (\ref{2he*}). While the following
result is stated for the equation (\ref{2he*}), it holds also for
the hydrodynamic equation (\ref{2he}). As we need to emphasize the
dependence on the initial condition, given $\rho\in M$, we denote by
$v_t(\rho)\equiv v_t(r;\rho)$ the solution to (\ref{2he*}) with
initial condition $\rho$.\vadjust{\goodbreak}

%
\begin{theorem}
\label{tcfhe*}
Fix $\upbar{\rho}\in[0,1]$ and let $v_t(\rho)$ be the solution to
(\ref{2he*}). Then,
\[
\lim_{t\to+\infty}
\sup_{\rho\in M(\upbar{\rho})}
\| v_t(\rho) -\gamma_{\upbar{\rho}}\|_{L^2} = 0.
\]
Moreover, for each $\rho\in M(\upbar{\rho})$ there exists a sequence
$T_n\to+\infty$ such that
$\lim_{n\to\infty} \| v_{T_n}(\rho) -\gamma_{\upbar{\rho}}\|
_{W^{1,2}}=0$.
\end{theorem}

The proof of this result will be achieved by showing that
$\mc{F}^U_{\upbar{\rho}}$ is a Lyapunov functional for the flow
defined by
(\ref{2he*}) and using comparison arguments.

%
\begin{lemma}
\label{tmgr}
If $0<\upbar{\rho}_1<\upbar{\rho}_2<1$ then $0< \gamma_{\upbar
{\rho}_1}
<\gamma_{\upbar{\rho}_2}<1$. Moreover,
if $\upbar{\rho}\uparrow1$ or $\upbar{\rho}\downarrow0$ then
$\gamma_{\upbar{\rho}}\uparrow1$ or $\gamma_{\upbar{\rho
}}\downarrow
0$, respectively.
\end{lemma}
\begin{pf}
Recall that $f'\dvtx(0,1)\to\bb R$ is strictly increasing and denote by
$(f')^{-1}\dvtx\bb R \to(0,1)$ its inverse. Then the map
$\upbar{\rho}\mapsto\alpha(\upbar{\rho})$ in (\ref{2gbar}) is defined
by requiring
\[
\int_{\bb T^d} dr\, (f')^{-1} \bigl( -U(r) +\alpha(\upbar{\rho
})\bigr)
=\upbar{\rho}.
\]
In particular, since $(f')^{-1}$ is strictly increasing,
the map $\upbar{\rho}\mapsto\alpha(\upbar{\rho})$ is strictly increasing.
Again by the strict monotonicity of $(f')^{-1}$, the first statement
follows. To prove the second, it is enough to notice that if
$\upbar{\rho}\uparrow1$, respectively, $\upbar{\rho}\downarrow0$, then
$\alpha(\upbar{\rho})\uparrow+\infty$, respectively,
$\alpha(\upbar{\rho})\downarrow-\infty$.
\end{pf}
%
%
\begin{lemma}
\label{tvbg}
Let $v\dvtx [0,+\infty) \times\bb T^d\to[0,1]$ be the solution to
(\ref{2he*}) and assume there exist $0<\upbar{\rho}_1<\upbar{\rho}_2<1$
such that $\gamma_{\upbar{\rho}_1}\le\rho\le\gamma_{\upbar{\rho}_2}$.
Then for any $t\ge0$ we have $\gamma_{\upbar{\rho}_1} \le v_t(\rho
) \le
\gamma_{\upbar{\rho}_2}$.
\end{lemma}
\begin{pf}
By classical results for uniformly parabolic equation, $v$ is smooth
on $(0,+\infty)\times\bb T^d$. Let $w\dvtx[0,+\infty)\times\bb T^d\to
[0,1]$ be defined by $w_t(r):= \gamma_{\upbar{\rho}_1}(r) -
v_t(r;\rho)$ and observe that, by hypotheses, $w_0 < 0$. Recall the
bounds (\ref{2bchi}), (\ref{2bkr}), (\ref{2bsr}), definition
(\ref{2D}) and that $\sigma$ is a multiple of the identity. Since
$\gamma_{\upbar{\rho}_1}$ is a stationary solution to (\ref{2he*}),
it is simple to check that $w$ solves the linear parabolic equation
\[
\partial_t w = a \Delta w + b \cdot\nabla w + c w
\]
for some continuous functions $a,b,c$ on $[0,+\infty)\times\bb T^d$.
Moreover, $a$ is uniformly positive on $[0,+\infty)\times\bb T^d$.
By Theorem 3.7 and the remark (ii) following it in~\cite{PW}, we then
deduce $w_t\le0$ for any $t\ge0$.
The inequality $v_t(\rho)\le\gamma_{\upbar{\rho}_2}$ is proven by
the same
argument.
\end{pf}
%
%
\begin{lemma}
\label{tvbd}
Fix $\upbar{\rho}\in(0,1)$.
For each $t_0>0$ there exists $\delta\in(0,1/2)$ such that for any
$t\ge t_0$ and any $\rho\in M(\upbar{\rho})$ it holds
$\delta\le v_t(\rho) \le1-\delta$.
\end{lemma}
\begin{pf}
Let $\rho\in M$ and consider a sequence $\{\rho^n\}\subset
M$ converging to $\rho$ in $M$.
By standard parabolic regularity, for each $t>0$ the sequence of
functions on $\bb T^d$ given by $v_t(\cdot;\rho^n)$ converges
uniformly to $v_t(\cdot;\rho)$. Set
\[
\delta_0:= \inf\{v_{t_0}(r;\rho), r\in\bb T^d,
\rho\in M(\upbar{\rho}) \}.
\]
By the compactness of $M(\upbar{\rho})$ and the above continuity,
there exists $\rho^*\in M(\upbar{\rho})$ such that
$\delta_0 = \inf\{v_{t_0}(r;\rho^*), r\in\bb T^d\}$.
Since $\rho^*$ is not identically equal to zero, by applying
Theorem 3.7 and the remark (ii) following it in~\cite{PW}, we deduce
$\delta_0>0$.
By Lemma~\ref{tmgr}, there exists $\upbar{\rho}_1\in(0,1)$ such
that $\gamma_{\upbar{\rho}_1}\le\delta_0$. Setting
$\delta:=\min\{ \gamma_{\upbar{\rho}_1}(r), r\in\bb T^d\}
$ and
using Lemma~\ref{tvbg} we deduce that for any $t\ge t_0$ we have
$v_t(\rho) \ge\gamma_{\upbar{\rho}_1} \ge\delta$.

The uniform upper bound is proven by the same argument.
\end{pf}
\begin{pf*}{Proof of Theorem~\ref{tcfhe*}}
Since the statement is trivial when $\upbar{\rho}=0$ or
$\upbar{\rho}=1$, we assume $\upbar{\rho}\in(0,1)$. Recall that the
functional $\mc F^U_{\upbar{\rho}}\dvtx M\to\bb[0,+\infty)$ has been
defined in (\ref{2Fbr}). In view of the uniform convexity of the free
energy $f$, it is\vspace*{1pt} simple to show that for each
$\upbar{\rho}\in (0,1)$ the functional $\mc F^U_{\upbar{\rho}}(\cdot)$
is equivalent to $|\cdot-\gamma_{\upbar{\rho}}|_{L^2}^2$. Namely, there
exists a constant $C_0=C_0(\upbar{\rho})>0 $ such that for any
$\gamma\in M(\upbar{\rho})$ we have
%
%
\begin{equation}
\label{7F2}
\frac1{C_0} \|\gamma-\gamma_{\upbar{\rho}}\|_{L^2}^2
\le\mc F^U_{\upbar{\rho}}(\gamma)
\le{C_0} \|\gamma-\gamma_{\upbar{\rho}}\|_{L^2}^2.
\end{equation}

By parabolic regularity, the function $v(\rho)$ is smooth on
$(0,+\infty)\times\bb T^d$.
Using Remark~\ref{trhj} we then deduce that for $t>0$ it holds
%
%
\begin{equation}
\label{7lf}
\frac{d}{dt} \mc F^U_{\upbar{\rho}} ( v_t(\rho))
= - \langle\nabla G_t, \sigma(v_t(\rho)) \nabla
G_t \rangle,
\end{equation}
where, recalling (\ref{7grr}), $G$ is the function defined by
$G_t(r)=g_{\upbar{\rho}}(r;v_t(r;\rho))$. In particular,
$\mc F^U_{\upbar{\rho}}$ is a Lyapunov functional for both the
flows defined by (\ref{2he*}) and (\ref{2he}).
Given $\epsilon>0$ set
\[
A_\epsilon:= \{\gamma\in M(\upbar{\rho})\dvtx
\mc F^U_{\upbar{\rho}}(\gamma) <\epsilon\}
\]
and let $\tau_\epsilon(\rho):=\inf\{t>0\dvtx v_t(\rho) \in
A_\epsilon\}\in[0,+\infty]$. In view of (\ref{7F2}) and
(\ref{7lf}), the proof of the theorem is completed once we show
that for each $\epsilon>0$ the hitting time $\tau_\epsilon(\rho)$
is bounded
uniformly for $\rho\in M(\upbar{\rho})$.

Given $\upbar{\rho}\in(0,1)$ and $\delta\in(0,1/2)$ set
\[
\widehat{M}_\delta(\upbar{\rho}):= \biggl\{\gamma\in L^2(\bb
T^d,dr), \delta\le\gamma\le1-\delta, \int dr\,
\gamma(r)=\upbar{\rho}\biggr\},
\]
which is a closed subset of $L^2(\bbT^d,dr)$ that we consider endowed
with the relative topology. Fix $t_0>0$ and observe that if we choose
$\delta$ as in Lemma~\ref{tvbd} then this lemma implies that
$v_t(\rho)\in\widehat{M}_\delta(\upbar{\rho})$ for any $t\ge t_0$ and
$\rho\in M(\upbar{\rho})$. Moreover,\vspace*{1pt} the functional $\mc
F^U_{\upbar{\rho}}$ is continuous on
$\widehat{M}_\delta(\upbar{\rho})$. Given\vspace*{-1pt}
$\gamma\in\widehat{M}_\delta(\upbar{\rho})$ let $G_\gamma\dvtx\bb
T^d\to\bb R$ be the function defined by $G_\gamma(r) =
g_{\upbar{\rho}}(r,\gamma(r))$. Let also $\mc R_{\upbar{\rho}}\dvtx
\widehat{M}_\delta(\upbar{\rho}) \to[0,+\infty]$ be the lower
semicontinuous functional defined by
\[
\mc R_{\upbar{\rho}}(\gamma):=\sup_{F}
\{ - 2 \langle\nabla\cdot F, G_\gamma\rangle- \langle
F,F\rangle\},
\]
where the supremum is over all $F\in C^1(\bb
T^d;\bb R^d)$. If $\mc R_{\upbar{\rho}}(\gamma) <+\infty$ then
$G_\gamma$ belongs to the Sobolev space $W^{1,2}(\bb T^d,dr)$ and
$\mc R_{\upbar{\rho}}(\gamma) = \langle\nabla G_\gamma, \nabla
G_\gamma\rangle$.
In particular, by Sobolev embedding and elementary estimates, the
functional $\mc R_{\upbar{\rho}}$ has compact level sets.
It is finally straightforward to check that $\mc
R_{\upbar{\rho}}(\gamma)=0$ if and only if
$\gamma=\gamma_{\upbar{\rho}}$. Recalling (\ref{7F2}), we deduce
that for each $\epsilon>0$ and $\delta>0$
\[
c_\epsilon:= \inf\{\mc R_{\upbar{\rho}}(\gamma),
\gamma\in\widehat{M}_\delta(\upbar{\rho})\setminus A_\epsilon
\} >0.
\]

Given $t_0>0$, let $\delta\in(0,1/2)$ be as in Lemma~\ref{tvbd}
and set $m =\min\{\sigma(u), \delta\le u\le1-\delta\} >0$. Set
also $K =\sup\{\mc F^U_{\upbar{\rho}}(\gamma), \gamma\in
M(\upbar{\rho})\}<+\infty$. We are now ready to conclude the proof.
If $\tau_\epsilon(\rho)<t_0$ there is nothing to prove, otherwise,
in view of Lemma~\ref{tvbd} and (\ref{7lf}), we deduce that for
each $\epsilon>0$, $\rho\in M(\upbar{\rho})$ and $t\ge t_0$
\begin{eqnarray*}
K &\ge& \mc F^U_{\upbar{\rho}}(v_{t_0}(\rho))
= \mc F^U_{\upbar{\rho}}\bigl( v_{t\wedge\tau_\epsilon(\rho
)}(\rho)\bigr)
+\int_{t_0}^{t\wedge\tau_\epsilon(\rho)} ds\,
\langle\nabla G_s, \sigma(v_s(\rho)) \nabla G_s
\rangle
\\
&\ge& m \int_{t_0}^{t\wedge\tau_\epsilon(\rho)} ds\,
\mc R_{\upbar{\rho}}( v_s(\rho) )
\ge m c_\epsilon[ t\wedge\tau_\epsilon(\rho) -t_0].
\end{eqnarray*}
By taking the limit $t\uparrow+\infty$, the previous bound yields
$\sup\{\tau_\epsilon(\rho), \rho\in M(\upbar{\rho})\} <
+\infty$.

It remains to prove the second statement.
By the regularity and uniform convexity of the free energy $f$, it
is simple to check that for each $\upbar{\rho}\in(0,1)$ and
$\delta\in(0,1/2)$ there exists a real $C_1=C_1(\upbar{\rho},\delta)$
such that for any $\gamma\in\widehat{M}_\delta(\upbar{\rho})$
\[
\| \gamma-\gamma_{\upbar{\rho}} \|_{W^{1,2}}^2
\le C_1 [ \mc R_{\upbar{\rho}}(\gamma) +
\| \gamma-\gamma_{\upbar{\rho}} \|_{L^{2}}^2 ].
\]
Fix $t_0>0$ and let $\delta$ be as in Lemma~\ref{tvbd}. From
(\ref{7lf}) we deduce that for any $\rho\in M(\upbar{\rho})$ and
any $t\ge t_0$
\[
\mc F^U_{\upbar{\rho}}( v_t(\rho))
+ m \int_{t_0}^t ds\, \mc R_{\upbar{\rho}}( v_s(\rho))
\le\mc F^U_{\upbar{\rho}}(v_{t_0}(\rho)) \le K.
\]
In particular, there exists a sequence $T_n\to+\infty$ such that
$\mc{R}_{\upbar{\rho}}( v_{T_n}(\rho))\to0$.
\end{pf*}

\subsection*{Conclusion}
We next conclude the proof of the identity between the quasi-potential
and the functional $\mc F^U_{\upbar{\rho}}$ and the characterization of
the minimizer for (\ref{genqpdav}).\vadjust{\goodbreak}
\begin{pf*}{Proof of Theorem~\ref{tqp=f} (the identity
$\widehat{V}{}^E_{\upbar{\rho}} =\mathcal F^U_{\upbar{\rho}}$)}
For $\upbar{\rho}\in[0,1]$ and $\rho\in M(\upbar{\rho})$, let
$\pi\in\mathcal M_{(-\infty,0]}(\upbar{\rho})$ be such that $\pi
_0=\rho$.
From Theorem~\ref{tonssiminf} we get
%
%
\begin{equation}\label{zuppa}
I^{-\nabla U+\wwidetilde{E}}_{(-\infty,0]}(\pi)=\mathcal F^U_{\upbar
{\rho}}(\rho)+I^{-\nabla U-\wwidetilde{E}}_{[0,+\infty)}(\theta\pi).
\end{equation}
Since $I^{-\nabla U-\wwidetilde{E}}_{[0,+\infty)}\geq0$, we deduce
$I^{-\nabla U+\wwidetilde{E}}_{(-\infty,0]}(\pi)\geq
\mathcal{F}^U_{\upbar{\rho}}(\rho)$. The lower bound
$\widehat V{}^E_{\upbar{\rho}}(\rho)\geq\mathcal F^U_{\upbar{\rho
}}(\rho)$ follows.

Now let $v\equiv v(\rho)\dvtx[0,+\infty)\times\bb T^d\to[0,1]$ be the
solution to (\ref{2he*}). Theorem~\ref{tcfhe*} implies that $v\in
\mathcal M_{[0,+\infty)}(\upbar{\rho})$ and therefore $\theta v \in
\mathcal M_{(-\infty,0]}(\upbar{\rho})$. Since $I^{-\nabla
U-\wwidetilde{E}}_{[0,T]}(v)=0$ for every $T>0$, it holds $I^{-\nabla
U-\wwidetilde{E}}_{[0,+\infty)}(v)=0$. Considering (\ref{zuppa}) when
$\pi=\theta v$ we get $I^{-\nabla
U+\wwidetilde{E}}_{(-\infty,0]}(\theta v)=
\mathcal{F}^U_{\upbar{\rho}}(\rho)$. Whence
$\widehat{V}{}^E_{\upbar{\rho}}(\rho) \le\mathcal
F^U_{\upbar{\rho}}(\rho)$.
\end{pf*}
\begin{pf*}{Proof of Theorem~\ref{tqp=f} (characterization of the
minimizer)} As the previous argument implies that $\theta v$ is a
minimizer for (\ref{genqpdav}), it remains only to prove uniqueness.
Suppose that $\pi^*$ is a minimizer for (\ref{genqpdav}). By
(\ref{zuppa}), it necessarily holds $I_{[0,+\infty)}^{-\nabla
U-\wwidetilde{E}}(\theta\pi^*)=0$ and, by monotonicity, this is
possible if and only if $I_{[0,T]}^{-\nabla U-\wwidetilde{E}}(\theta
\pi^*)=0$ for any $T>0$. This is\vspace*{2pt} equivalent to say that $\theta
\pi^*$ is a weak solution to (\ref{2he*}) in any time interval
$[0,T]$. Whence $\pi^*=\theta v$.
\end{pf*}
%
%
\begin{lemma}
\label{tinter}
Fix $\upbar{\rho}\in(0,1)$ and let $\gamma\in M(\upbar{\rho})$ be
such that $\delta\le\gamma\le1- \delta$ for some $\delta\in
(0,1/2)$. Then there exist a constant $C=C(\delta)>0$, a time $T_0>0$
and a path $\pi^0\in\mc M_{[0,T_0]}$ such that
$\pi^0_{0}=\gamma_{\upbar{\rho}}$, $\pi^0_{T_0}=\gamma$ and
\[
I_{[0,T_0]}^E (\pi^0|\gamma_{\upbar{\rho}}) \le C
\|\gamma- \gamma_{\upbar{\rho}} \|_{W^{1,2}}^2.
\]
\end{lemma}
\begin{pf}
Elementary computations (see, e.g.,~\cite{BDGJLn+1}, Lemma 4.3) show
that, by taking $T_0=1$, the ``straight'' path
$\pi_t = \gamma t+ \gamma_{\upbar{\rho}} (1-t)$ fulfils the
requirements.
\end{pf}
\begin{pf*}{Proof of Theorem~\ref{tqp=f} (the identity
${V}^E_{\upbar{\rho}} =\widehat{V}{}^E_{\upbar{\rho}}$)} Fix
$\upbar{\rho}\in[0,1]$ and $\rho\in M(\upbar{\rho})$.
Recall\vspace*{1pt} that any path $\pi\in\mathcal M_{[-T,0]}$ such that
$I^E_{[-T,0]}(\pi|\gamma_{\upbar{\rho}})<+\infty$ satisfies necessarily
the condition $\pi_{-T}=\gamma_{\upbar{\rho}}$. This means that if we
extend $\pi$ to an element $\widehat{\pi}\in \mathcal
M_{(-\infty,0]}(\upbar{\rho})$ by setting $\widehat{\pi}_t
=\gamma_{\upbar{\rho}}$ for $t\in(-\infty,-T)$, we then have
$I^E_{(-\infty,0]}(\hat\pi)=I^E_{[-T,0]}( \pi|\gamma_{\upbar{\rho}})$.
This readily implies the inequality
$\widehat{V}{}^E_{\upbar{\rho}}(\rho)\leq V^E_{\upbar{\rho}}(\rho)$.

Since we have already proven that $\widehat{V}{}^E_{\upbar{\rho}}
=\mathcal{F}^U_{\upbar{\rho}}(\rho)$, it is enough to show
$V^E_{\upbar{\rho}}\leq\mathcal F^U_{\upbar{\rho}}$.
Fix $\upbar{\rho}\in(0,1)$. We need to prove the following
statement. For each $\rho\in M(\upbar{\rho})$ and $\epsilon>0$
there exist a time $T>0$ and a path $\pi\in\mc M_{[-T,0]}$ such
that $\pi_{-T}=\gamma_{\upbar{\rho}}$, $\pi_0=\rho$ and
$I^E_{[-T,0]}(\pi| \gamma_{\upbar{\rho}})\le\mc
F_{\upbar{\rho}}^U(\rho)+\epsilon$.\vadjust{\goodbreak}

Let $v(\rho)$ be the solution to (\ref{2he*}). Given $\epsilon_1>0$
to be chosen later, by Theorem~\ref{tcfhe*}, there exists a time
$T_1$ such that $\| v_{T_1}(\rho)-
\gamma_{\upbar{\rho}}\|_{W^{1,2}}\le\epsilon_1$.
Set\vspace*{1pt} $\gamma:= v_{T_1}(\rho)$; by Lemmas~\ref{tvbd} and
\ref{tinter} there exists a time $T_0$ and a path $\pi^0\in\mc
M_{[-T_1-T_0,-T_1]}$ such that
$\pi^0_{-T_1-T_0}=\gamma_{\upbar{\rho}}$, $\pi^0_{-T_1}=\gamma$ and
$I^E_{[-T_1-T_0,-T_1]}(\pi^0|\gamma_{\upbar{\rho}}) \le C
\epsilon_1^2$. We claim the path $\pi\in\mc M_{[-T_1-T_0,0]}$
defined by
\[
\pi_t:=
\cases{
\pi^0_{t}, &\quad if $t\in[-T_0-T_1,-T_1)$, \vspace*{1pt}\cr
(\theta v (\rho))_t, &\quad if $t\in[-T_1,0]$,}
\]
fulfils the above requirement with $T=T_0+T_1$. Since $\pi$ is
continuous, we indeed have
\begin{eqnarray*}
I^{-\nabla U+\wwidetilde E}_{[-T,0]}(\pi|\gamma_{\upbar{\rho}})
& = &
I^{-\nabla U+\wwidetilde E}_{[-T_1-T_0,-T_1]}(\pi^0|\gamma_{\upbar
{\rho}})+
I^{-\nabla U+\wwidetilde E}_{[-T_1,0]}(\theta v)
\\
&\le& C
\epsilon_1^2+\mathcal F^U_{\upbar{\rho}}(\rho)
-\mathcal F^U_{\upbar{\rho}}(\gamma)
+I^{-\nabla U-\wwidetilde{E}}_{[0,T_1]}(v)
\leq C \epsilon_1^2+\mathcal F^U_{\upbar{\rho}}(\rho).
\end{eqnarray*}
We conclude the proof choosing $\epsilon_1$ small enough.
\end{pf*}

\section*{Acknowledgments}

We thank G. Jona-Lasinio for useful discussions.\break
D.~Gabrielli acknowledges the Department of Physics of the
University ``La Sapienza'' for the kind hospitality.


%

\printaddresses

\end{document}